\newtheorem{theorem}{Theorem}[section]
\newtheorem{proposition}[theorem]{Proposition}
\newtheorem{lemma}[theorem]{Lemma}
\newtheorem{claim}[theorem]{Claim}
\newtheorem{corollary}[theorem]{Corollary}
\newtheorem{definition}[theorem]{Definition}
\newcommand{\R}{\ensuremath{\mathbb{R}}}
\newcommand{\Z}{\ensuremath{\mathbb{Z}}}
 \newcommand{\eps}{\varepsilon} 
\renewcommand{\epsilon}{\varepsilon}
\renewcommand{\vec}[1]{\ensuremath{\mathbf{#1}}}
\newcommand{\basis}{\ensuremath{\mathbf{B}}}
\newcommand{\problem}[1]{\mathrm{#1}}
\newcommand{\poly}{\mathrm{poly}}
\DeclareMathOperator*{\expect}{\mathbb{E}}
\newcommand{\scarequotes}[1]{``#1''}
\newcommand{\Q}{\mathbb{Q}}
\def\imod#1{\allowbreak\mkern8mu({\operator@font mod}\,\,#1)}
\newcommand{\lat}{\mathcal{L}}
\DeclareMathOperator{\dist}{dist}
\DeclarePairedDelimiter\inner{\langle}{\rangle}
\DeclarePairedDelimiter\floor{\lfloor}{\rfloor}
\DeclarePairedDelimiter\ceil{\lceil}{\rceil}
\DeclarePairedDelimiter\length{\lVert}{\rVert}
\newif\iffull
\newcommand{\full}[2]{\iffull#1\else#2\fi} %
\begin{document}

\title{Discrete Gaussian Sampling Reduces to CVP and SVP}
\author[1]{
Noah Stephens-Davidowitz~\thanks{New York University}~\thanks{Work done while at the Simons Institute 2015 cryptography summer program. This material is based upon work partially supported by the National Science Foundation under Grant No.~CCF-1320188. Any opinions, findings, and conclusions or recommendations expressed in this material are those of the authors and do not necessarily reflect the views of the National Science Foundation.}\\
\texttt{noahsd@gmail.com}
}
\date{}
\maketitle

\begin{abstract}
The discrete Gaussian $D_{\lat - \vec{t}, s}$ is the distribution that assigns to each vector $\vec{x}$ in a shifted lattice $\lat - \vec{t}$ probability proportional to $e^{-\pi \length{\vec{x}}^2/s^2}$. It has long been an important tool in the study of lattices. More recently, algorithms for discrete Gaussian sampling (DGS) have found many applications in computer science. In particular, polynomial-time algorithms for DGS with very high parameters $s$ have found many uses in cryptography and in reductions between lattice problems. And, in the past year, Aggarwal, Dadush, Regev, and Stephens-Davidowitz showed $2^{n+o(n)}$-time algorithms for DGS with a much wider range of parameters and used them to obtain the current fastest known algorithms for the two most important lattice problems, the Shortest Vector Problem (SVP) and the Closest Vector Problem (CVP).

Motivated by its increasing importance, we investigate the complexity of DGS itself and its relationship to CVP and SVP. Our first result is a polynomial-time dimension-preserving reduction from DGS to CVP. There is a simple reduction from CVP to DGS, so this shows that DGS is equivalent to CVP. Our second result, which we find to be more surprising, is a polynomial-time dimension-preserving reduction from \emph{centered} DGS (the important special case when $\vec{t} = \vec0$) to SVP. In the other direction, there is a simple reduction from $\gamma$-approximate SVP for any $\gamma = \Omega(\sqrt{n/\log n})$, and we present some (relatively weak) evidence to suggest that this might be the best achievable approximation factor.

We also show that our CVP result extends to a much wider class of distributions and even to other norms.
\end{abstract}
\thispagestyle{empty}

\full{}{\newpage
\setcounter{page}{1}}

\section{Introduction}
A lattice $\lat \subset \Q^n$ is the set of all integer linear combinations of some linearly independent basis vectors $\vec{b}_1,\ldots, \vec{b}_n \in \Q^n$. 

The two central computational problems on lattices are the Shortest Vector Problem (SVP) and the Closest Vector Problem (CVP). Given a lattice $\lat \subset \Q^n$, the SVP is to find a shortest non-zero vector in $\lat$. Given a lattice $\lat \subset \Q^n$ and a target vector $\vec{t} \in \Q^n$, the CVP is to find a vector in $\lat$ whose distance to $\vec{t}$ is minimal. 

Algorithms for SVP and CVP, in both their exact and approximate versions, have found many diverse applications in computer science. They have been used to factor polynomials over the rationals~\cite{LLL82}, solve integer programming~\cite{Len83,Kan87,DPV11}, and break cryptographic schemes~\cite{Odl90,JS98,NS01}. And, over the past twenty years, a wide range of strong cryptographic primitives have been constructed with their security based on the \emph{worst-case} hardness of the approximate versions of these problems~\cite{Ajt96,MR07,GPV08, Gen09, Peikert09, Reg09,LPR10,BV11,BV14}.

Both problems are known to be hard, even to approximate to within the nearly polynomial factor of $n^{c/\log \log n}$ for some constant $c$~\cite{ABSS93,Ajt98,CN98,BS99,DKRS03,Mic01svp,Khot05svp,Micciancio12,HRsvp}. Indeed, CVP is in some sense \scarequotes{lattice complete} in that nearly all well-studied lattice problems are reducible to CVP via dimension-preserving (and approximation-factor-preserving) reductions. (See~\cite{Micciancio08} for a list of such problems.) In particular, a dimension-preserving reduction from SVP to CVP has long been known~\cite{GMSS99}. However, the best-known dimension-preserving reduction in the other direction only reduces $O(\sqrt{n})$-approximate CVP to SVP.

A powerful tool for studying lattices is the discrete Gaussian, the probability distribution $D_{\lat - \vec{t}, s}$ that assigns to each vector $\vec{x} \in \lat - \vec{t}$ probability proportional to its Gaussian mass, $e^{-\pi \length{\vec{x}}^2/s^2}$, for a lattice $\lat \subset \Q^n$, shift vector $\vec{t} \in \Q^n$, and parameter $s > 0$. The discrete Gaussian and the closely related theta functions have been used to prove transference theorems on lattices~\cite{banaszczyk, Cai03}; to show that $\sqrt{n}$-approximate CVP and SVP are in co-NP~\cite{AharonovR04}; to embed flat tori in a Hilbert space with low distortion~\cite{HavivR13}; to solve the Bounded Distance Decoding Problem~\cite{LiuLM06}; and even in the study of the Riemann zeta function (e.g., in~\cite{BianePY01}).

\begin{figure}[ht]
\begin{center}
\includegraphics[width=0.4 \textwidth]{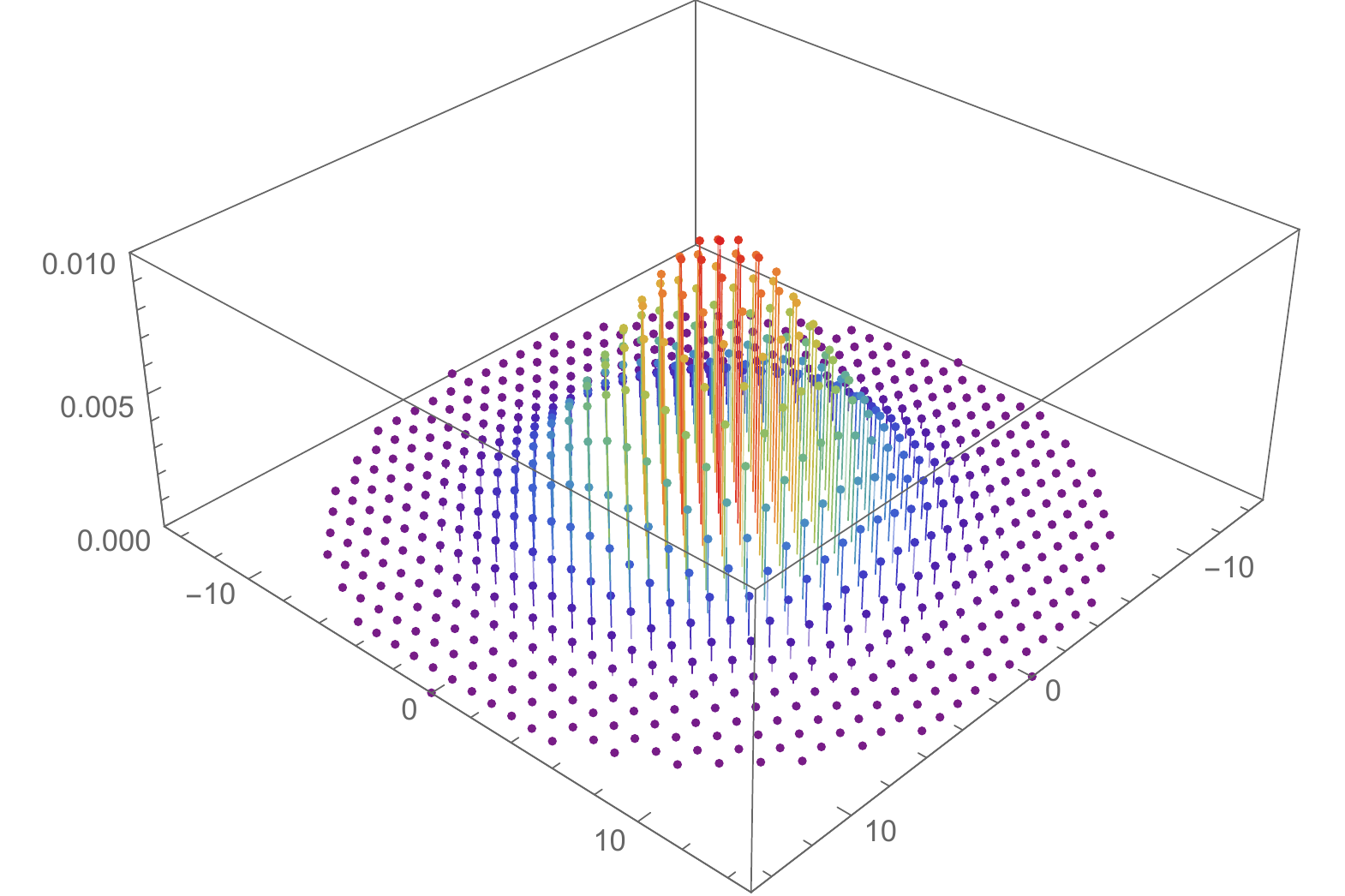}
\qquad
\includegraphics[width=0.4 \textwidth]{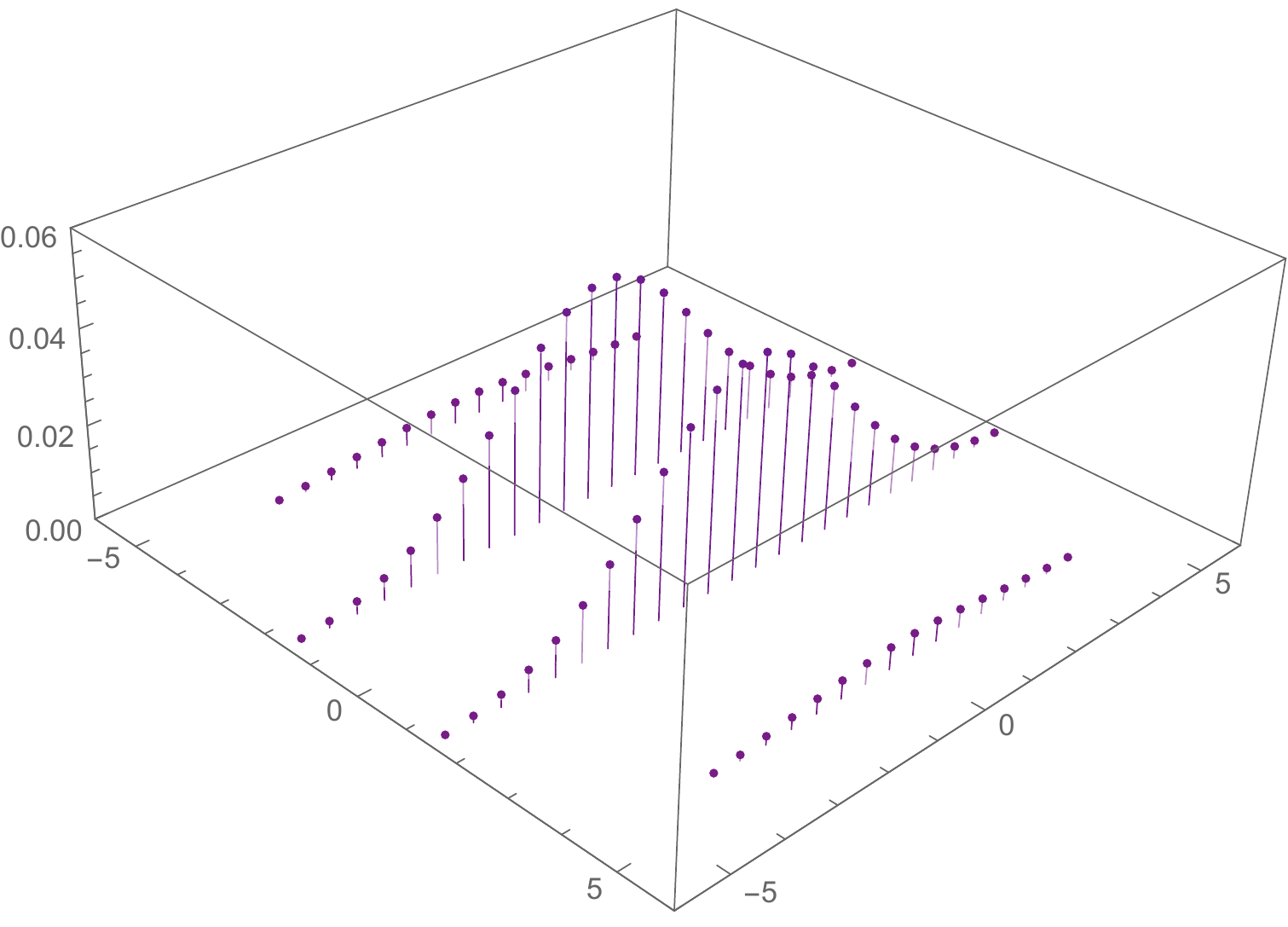}
\caption{\label{fig:DGS} 
Two very different discrete Gaussian distributions in two dimensions. On the left is $D_{\Z^2, 10}$. On the right is $D_{\lat - \vec{t},5}$, where $\lat$ is spanned by $3\vec{e}_1$ and $\vec{e}_2/2$, and $\vec{t} = 3\vec{e}_1/2 + \vec{e}_2/4$ is a \scarequotes{deep hole.}}
\end{center}
\end{figure}

Note that the discrete Gaussian is concentrated on relatively short vectors. In particular, in the important special case when the discrete Gaussian is \emph{centered} so that $\vec{t} = \vec0$, $D_{\lat, s}$ assigns higher weight to shorter lattice vectors. This suggests a connection between $D_{\lat, s}$ and SVP. In the more general case, $D_{\lat - \vec{t}, s}$ is concentrated on short vectors in the shifted lattice $\lat - \vec{t}$. By translating this distribution by $\vec{t}$ (i.e., considering the distribution of $D_{\lat - \vec{t}, s} + \vec{t}$), we obtain a distribution over the lattice that assigns higher weight to the vectors that are closest to $\vec{t}$, suggesting a connection between $D_{\lat - \vec{t}, s}$ and CVP. As the parameter $s$ becomes lower, the distribution becomes more concentrated. Indeed, one can show that samples from $D_{\lat - \vec{t}, s}$ (when suitably translated) yield $(1+\alpha \sqrt{n})$-approximate solutions to CVP when $s \approx \dist(\vec{t}, \lat)/\alpha$. (See Figure~\ref{fig:DGS} for two examples of the discrete Gaussian in two dimensions.)

Largely because of its connection to other lattice problems, algorithms for discrete Gaussian sampling (DGS) have recently played an important role in computer science.
Gentry, Peikert, and Vaikuntanathan introduced a polynomial-time trapdoor algorithm for sampling from the discrete Gaussian with very high parameters $s$ in order to construct a secure signature scheme \cite{GPV08}. 
And, many reductions between lattice problems use a DGS algorithm as a subroutine~\cite{Reg09,Pei10, MicciancioP13,BLPRS13}. But, these reductions also only work for very high parameters $s$. In particular, all previously known polynomial-time algorithms (even those with access to trapdoors and oracles) can only sample from $D_{\lat - \vec{t},s}$ when $s$ is significantly above the \scarequotes{smoothing parameter} of the lattice, in which case the discrete Gaussian \scarequotes{looks like a continuous Gaussian distribution} in a certain precise sense that we do not define here. (See \cite{MR07} for the formal definition.)

In the past year, Aggarwal, Dadush, Regev, and Stephens-Davidowitz introduced an exponential-time algorithm for sampling from the discrete Gaussian with much lower parameters in order to solve exact SVP~\cite{ADRS15}, and \cite{ADS15} showed how to extend this result to CVP. These are the current fastest-known algorithms for SVP and CVP. In particular,~\cite{ADRS15} showed how to sample exponentially many vectors from the \emph{centered} discrete Gaussian for \emph{any} parameter $s$ in $2^{n+o(n)}$ time, which yields a solution to SVP. \cite{ADS15} extended this work to show how to sample many vectors from $D_{\lat - \vec{t}, s}$ for very small parameters $s \approx \dist(\vec{t}, \lat)/2^{n}$, also in $2^{n+o(n)}$ time. Surprisingly, they showed how to use such an algorithm to construct a $2^{n+o(n)}$-time algorithm for CVP.\full{\footnote{It is easy to see that a discrete Gaussian sampler that works for any $\vec{t}$ and any $s$ is sufficient to solve CVP efficient. (We include a proof in Section~\ref{sec:CVPtoDGS} for completeness.) The difficulty in \cite{ADS15} is that the sampler only works for parameters $s$ greater than roughly $\dist(\vec{t}, \lat)/2^{n}$. While this minimum value is very small, this does not seem to be enough to efficiently solve exact CVP on its own. \cite{ADS15} manage to solve exact CVP in spite of this difficulty because their DGS algorithm outputs very many samples, which they use to recursively find an exact closest vector.}}{} (In Table~\ref{tab:DGS}, we summarize the previous known algorithms for discrete Gaussian sampling, together with the results of this work.)

\begin{table}
\begin{center}
\begin{tabular}{l l l  l  l }
& Shift & Parameter & Time & Notes \\
\hline
\hline
& Any $\vec{t}$ & $s \geq 2^{n \log n/\log \log n } \cdot \lambda_n$  & $\poly(n)$ & \cite{AKS01,GPV08} \\
& Any $\vec{t}$ & $s \geq \gamma \sqrt{\log n} \cdot \lambda_n$ & -- & Reduces to $\gamma$-approx. SVP \cite{GPV08, BLPRS13}.\\
& Any $\vec{t}$ & $s \gg \sqrt{n} \cdot \eta$ & -- & Quantum reduction to LWE \cite{Reg09}.\\
& Any $\vec{t}$ &   $s \geq \sqrt{2} \cdot \eta$ & $2^{n/2 + o(n)}$ & Outputs $2^{n/2}$ samples~\cite{ADRS15}.\\
& Any $\vec{t}$ & $s \gtrsim 2^{-n/\log n}\dist(\vec{t}, \lat)$ & $2^{n+o(n)}$ & Outputs many samples~\cite{ADS15}.\\
* & Any $\vec{t}$ &  Any $s$ & -- & Equivalent to CVP.\\
* & Any $\vec{t}$ &  Any $s$ & $2^{n+o(n)}$ & Follows from equivalence and \cite{ADS15}. \\
   \hline\hline
& $\vec{t} = \vec0$ &   Any $s$ & $2^{n+o(n)}$ & Outputs $2^{n/2}$ samples~\cite{ADRS15}.\\
* & $\vec{t} = \vec0$ &   Any $s$ & -- & Reduces to SVP.\\
   \hline
  \end{tabular}
    \caption{\label{tab:DGS}Known results concerning the problem of sampling from $D_{\lat - \vec{t}, s}$. Lines marked with a * are new results. We have omitted some constants. $\eta$ is the smoothing parameter, as defined in~\cite{MR07}, and $\lambda_n$ is the $n$th successive minimum. (They are related by $\eta/\sqrt{\log n} \lesssim \lambda_n \lesssim \sqrt{n} \cdot \eta$, where the upper bound is tight for the lattices that are relevant for cryptography. We also have have $\dist(\vec{t}, \lat) \leq \sqrt{n} \lambda_n/2$.)}
  \end{center}
\end{table}

All of these results reflect the increasing prominence of discrete Gaussian sampling algorithms in computer science. However, they left open a natural question: what is the complexity of DGS itself?  In particular, prior to this work, DGS was one of the only prominent lattice problems not known to be reducible to CVP via a dimension-preserving reduction. \full{(Another important example is the Lattice Isomorphism Problem.) }{}In fact, previously, there was simply no known algorithm that sampled from $D_{\lat - \vec{t}, s}$ for an arbitrary shift $\vec{t}$ and parameter $s > 0$, and it was not even known whether sampling from the \emph{centered} distribution $D_{\lat, s}$ could be reduced to a problem in NP. (Since DGS is a sampling problem, it technically cannot be placed directly in classes of decision problems or search problems like NP or FNP. But, we can still reduce it to such problems. See, e.g.,~\cite{Aaronson14} for a discussion of the complexity of sampling problems and their relationship to search problems.)

\subsection{Our results}

Our first main result is a dimension-preserving reduction from discrete Gaussian sampling to CVP. (See Theorem~\ref{thm:DGStoCVP}.) This immediately implies two important corollaries. First, together with the relatively straightforward reduction from CVP to DGS\full{ (see Section~\ref{sec:CVPtoDGS})}{}, this shows that CVP and DGS are equivalent via efficient dimension-preserving reductions. In particular, this suggests that the approach of \cite{ADS15} is in some (weak) sense the \scarequotes{correct} way to attack CVP, since we now know that any faster algorithm for CVP necessarily implies a similarly efficient discrete Gaussian sampler, and vice versa. Second, together with the result of \cite{ADS15}, this gives a $2^{n+o(n)}$-time algorithm for discrete Gaussian sampling that works for any parameter $s$ and shift $\vec{t}$, the first known algorithm for this problem.

Our second main result is a dimension-preserving reduction from \emph{centered} DGS to SVP. (See Theorem~\ref{thm:DGStoSVP}.) As we describe below, this result requires quite a bit more work, and we consider it to be more surprising, since, in a fixed dimension, an SVP oracle seems to be significantly weaker than a CVP oracle. In contrast to the CVP case, we know of no efficient reduction from SVP to centered DGS, and we do not even know whether centered DGS is NP-hard. (While \cite{ADRS15} use centered DGS to solve SVP, they require exponentially many samples to do so.) \full{We}{In the full version, we} present only a much weaker reduction from $\gamma$-approximate SVP to centered DGS for any $\gamma = \Omega(\sqrt{n/\log n})$. We also show that, for any $\gamma = o(\sqrt{n/\log n})$, 
no \scarequotes{simple} reduction from $\gamma$-SVP to centered DGS will work. (See Section~\ref{sec:SVPtoDGS}.)

Finally, we note that our proofs do not make use of any unique properties of the discrete Gaussian or of the $\ell_2$ norm. We therefore show a much more general result: any distribution that is close to a weighted combination of uniform distributions over balls in some norm reduces to CVP in this norm. (See Section~\ref{sec:other}.) In particular, sampling from the natural $\ell_q$ analogue of the discrete Gaussian is equivalent to CVP in the $\ell_q$ norm, under efficient dimension-preserving reductions. We imagine that a similar result holds for SVP, but since we know of no application, we do not endeavor to prove such a result in the more difficult setting of SVP.

\subsection{Proof overview}

\full{We now provide a high-level description of our techniques.}{}

\paragraph{Reduction from DGS to CVP. } 
Our basic idea is to sample from the discrete Gaussian $D_{\lat - \vec{t}, s}$ in two natural steps. We first sample some radius $r$ from a carefully chosen distribution. We then sample a uniformly random point in $(\lat - \vec{t}) \cap r B_2^n$. In particular, the distribution on the radius should assign probability to each radius $r$ that is roughly proportional to $e^{-\pi r^2/s^2} \cdot |(\lat - \vec{t}) \cap rB_2^n| $. (See the proof of Theorem~\ref{thm:DGStoCVP} for the exact distribution.) So, in order to solve DGS, it suffices to (1) compute $|(\lat - \vec{t}) \cap r B_2^n|$ for arbitrary $r$, and (2) sample a uniformly random point from $(\lat - \vec{t}) \cap r B_2^n$.

We actually use the same technical tool to solve both problems: lattice sparsification, as introduced by Khot~\cite{Khot05svp} (though our analysis is more similar to that of Dadush and Kun~\cite{DK13} and \cite{cvpp}). Intuitively, sparsification allows us to sample a random sublattice $\lat' \subset \lat$ of index $p$ such that for any vector $\vec{x} \in \lat$, we have $\Pr[\vec{x} \in \lat'] \approx 1/p$. Suppose we could find a sublattice $\lat'$ such that for the closest $N \approx p$ points to $\vec{t}$ in $\lat$, we have $\Pr[\vec{x} \in \lat'] = 1/p$, independently of the other points. Then, this would suffice for our two use cases. In particular, if the lattice has $N$ points in the ball of a given radius around $\vec{t}$, then $\lat' - \vec{t}$ would have a point in this ball with probability very close to $N/p$. We can use a CVP oracle to approximate this probability empirically, and we therefore obtain a good approximation for the number of lattice points in any ball. (\full{We achieve an approximation factor of $1+1/f(n)$ for any $f(n) = \poly(n)$. }{}See Theorem~\ref{thm:counter}.) Similarly, if we know that the number of lattice points in a ball of radius $r$ around $\vec{t}$ is roughly $N$, then we can take $p = \poly(n) \cdot N$ and repeatedly sample $\lat'$ until $\lat'$ has a point inside the ball of radius $r$ around $\vec{t}$. The resulting point will be a nearly uniformly random sample from the lattice points in the ball of radius $r$ around $\vec{t}$. Combining these two operations allows us to sample from the discrete Gaussian using a CVP oracle, as described above. (See Theorem~\ref{thm:DGStoCVP}.)

Unfortunately, sparsification does not give us exactly this distribution. More specifically, sparsification works as follows. Given a prime $p$ and lattice basis $\basis$, we sample $\vec{z} \in \Z_p^n$ uniformly at random and define the corresponding sparsified sublattice as 
\begin{equation}
\label{eq:sparsesublattice}
\lat' := \{\vec{x} \in \lat\ : \  \inner{\vec{z}, \basis^{-1}\vec{x}} \equiv 0 \bmod p \}
\; .
\end{equation}  
Then, for any vector $\vec{x} \in \lat$, we have $\Pr[\vec{x} \in \lat'] = 1/p$ unless $\vec{x} \in p\lat$ (in which case $\vec{x}$ is always in $\lat'$). Unfortunately, even if we ignore the issue that points in $p\lat$ do not behave properly, it is easy to see that these probabilities are not at all independent. For example, if $\vec{x} = \alpha \vec{y}$, then $\vec{x} \in \lat'$ if and only if $\vec{y} \in \lat'$. \full{And of course, more complex dependencies can exist as well. }{}Fortunately, we can get around this by using an idea from \cite{cvpp} (and implicit in \cite{DK13}). In particular, we can show that the probabilities are close to independent if we also shift the sublattice $\lat'$ by a \scarequotes{random lattice vector} $\vec{w} \in \lat$. I.e., while the distribution of the points in $\lat' \cap (rB_2^n + \vec{t})$ might be very complicated, each point in $\lat \cap (rB_2^n + \vec{t})$ will land in $\lat' - \vec{w}$ with probability $\approx 1/p$, and their distributions are nearly independent. (See Theorem~\ref{thm:shiftedsparsification} for the precise statement.) Our CVP oracle makes no distinction between lattices and shifted lattices (we can just shift $\vec{t}$ by $\vec{w}$), so this solution suffices for our purposes.

\paragraph{Reduction from centered DGS to SVP. } Our reduction from centered DGS to SVP uses the same high-level ideas described above, but the details are a bit more complicated. As in the CVP case, our primary tool is lattice sparsification, in which we choose a sparsified sublattice as in Eq.~\eqref{eq:sparsesublattice}. As before, we wish to control the distribution of the shortest vector in $\lat'$, and we note that, ignoring degenerate cases, $\vec{x}$ is a shortest vector of $\lat'$ if and only if $\vec{x} \in \lat'$ and $\vec{y}_1, \ldots, \vec{y}_N \notin \lat'$ where the $\vec{y}_i \in \lat$ are the non-zero lattice vectors shorter than $\vec{x}$ (up to sign). However, as in the CVP case, this probability can be affected by linear dependencies. In the CVP case, we solved this problem by considering a random shift of $\lat'$. But, this solution clearly does not work here because an SVP oracle simply \scarequotes{cannot handle} shifted lattices. We therefore have to deal explicitly with these dependencies.

The most obvious type of dependency occurs when $\vec{x}$ is not \emph{primitive}, so that $\vec{x} = \alpha \vec{y}_i$ for $|\alpha| > 1$. In this case, there is nothing that we can do---$\vec{y}_i$ is shorter than $\vec{x}$ and $\vec{y}_i \in \lat'$ if and only if $\vec{x} \in \lat'$, so $\vec{x}$ will \emph{never} be a shortest non-zero vector in $\lat'$. We therefore are forced to work with only primitive vectors (i.e., lattice vectors that are not a scalar multiple of a shorter lattice vector). Even if we only consider primitive vectors, it can still be the case that two such vectors are scalar multiples of each other mod $p$, $\vec{x} \equiv \alpha \vec{y}_i \bmod p\lat$. \full{Luckily, we show that this can only happen if there are $\widetilde{\Omega}(p)$ primitive vectors shorter than $\vec{x}$ in the lattice, so that this issue does not affect the $\widetilde{\Omega}(p)$ shortest primitive vectors. (See Lemma~\ref{lem:nogoodnameforthislemma}.) We also show that higher-order dependencies (e.g., equations of the form $\vec{x} \equiv \alpha \vec{y}_i + \beta \vec{y}_j \bmod p\lat$) have little effect. (See Lemma~\ref{lem:almostindependent}.)}{In the full version, we show that such issues can be overcome.} So, the shortest non-zero vector in the sparsified lattice will be distributed nearly uniformly over the $\widetilde{\Omega}(p)$ shortest primitive vectors in the original lattice. (See Theorem~\ref{thm:sparsification} and Proposition~\ref{prop:sparsifier} for the precise statement, which might be useful in future work.) 

As in the CVP case, this suffices for our purposes. In particular, if there are $N$ \emph{primitive} lattice vectors in the ball of radius $r$ centered at the origin for $N \leq \tilde{O}(p)$, then there will be a non-zero vector in $\lat' \cap rB_2^n$ with probability very close to $N/p$. With an SVP oracle, we can estimate this probability, and this allows us to approximate the number of primitive lattice vectors in a ball with very good accuracy. (See Theorem~\ref{thm:primcounter}.) And, the sparsification algorithm and SVP oracle also allow us to sample a primitive lattice vector in the ball of radius $r$ around the origin with nearly uniform probability, as in the CVP case. (See Lemma~\ref{lem:uniformsampler}.) 

Then, the same approach as before would allow us to use an SVP oracle to sample from the discrete Gaussian over the \emph{primitive} lattice vectors. In order to obtain the true discrete Gaussian, we first \scarequotes{add $\vec0$ in} by estimating the total Gaussian mass $\rho_s(\lat)$ and returning $\vec0$ with probability $1/\rho_s(\lat)$. Second, after sampling a primitive vector $\vec{x}$ using roughly the above idea, we sample an integer coefficient $z \in \Z \setminus \{ 0\}$ according to a one-dimensional discrete Gaussian (using an algorithm introduced by \cite{BLPRS13}) and output $z \vec{x}$. If we choose the primitive vector appropriately, we show that the resulting distribution is $D_{\lat, s}$.\full{\footnote{Interestingly, the problem of sampling from the centered discrete Gaussian over the \emph{primitive} lattice vectors, or even just the discrete Gaussian over $\lat \setminus \{ \vec0\}$ might be strictly harder than centered DGS. In particular, in Section~\ref{sec:SVPtoDGS}, we show a family of lattices for which $D_{\lat, s}$ almost never returns a $o(\sqrt{n/\log n})$-approximate shortest vector. However, it is easy to see that the discrete Gaussian over the \emph{primitive} lattice vectors or even just over the lattice without $\vec0$ will output the shortest vector with overwhelming probability if the parameter $s$ is sufficiently small. Therefore, both of these sampling problems are actually polynomial-time equivalent to SVP, while we have some evidence to suggest that sampling from $D_{\lat, s}$ is not. Indeed, we know of no application of centered DGS in which non-primitive vectors are actually desirable.}}{}

\subsection{Related work}
\label{sec:related}

\paragraph{DGS algorithms. } There are now many very different algorithms for sampling from the discrete Gaussian. (See Table~\ref{tab:DGS}.) The procedure of \cite{GPV08} (which was originally introduced by Klein in a different context~\cite{Klein00} and was later improved by Brakerski et al.~\cite{BLPRS13}) is a randomized variant of Babai's celebrated nearest plane algorithm \cite{Bab86}. It chooses the coordinates of a lattice vector in a given basis one-by-one by sampling from appropriate shifts of the $n$ one-dimensional Gaussians generated by the Gram-Schmidt orthogonalization of the basis vectors. Peikert showed a similar algorithm that uses the one-dimensional Gaussians generated by the basis vectors themselves instead of their Gram-Schmidt orthogonalizations~\cite{Peikert09}. This yields an elliptical discrete Gaussian, and Peikert convolves this with an elliptical continuous Gaussian in a clever way to obtain a spherical discrete Gaussian. Both of these algorithms are useful for building trapdoor primitives because they can sample from lower parameters if the input basis is shorter.

From our perspective, the algorithms of \cite{Klein00, GPV08,BLPRS13} and \cite{Pei10} can be viewed as reductions from DGS with high parameters $s$ to approximate SVP, where a better approximation factor allows us to sample with a lower parameter $s$ by finding a better basis. And, Regev~\cite{Reg09} explicitly showed a \emph{quantum} reduction from DGS with large $s$ to a different lattice problem. Indeed, many reductions between lattice problems start by sampling vectors from $D_{\lat, s}$ for some large $s$ using one of these algorithms and then using an oracle for some lattice problem to find small combinations of the samples whose average lies in the lattice (e.g.,~\cite{Reg09, MicciancioP13}). One can show that the distribution of the resulting average will be close to $D_{\lat, s'}$ for some $s' < s$ (as long as certain conditions are met).

However, all of the above-mentioned algorithms only work above the smoothing parameter of the lattice because they incur error that depends on \scarequotes{how smooth} the distribution is. Recently, \cite{ADRS15} showed that the averages of pairs of vectors sampled from the centered discrete Gaussian will be distributed \emph{exactly} as discrete Gaussians with a lower parameter, as long as we condition on the averages lying in the lattice. They then showed how to choose such pairs efficiently and proved that this is sufficient to sample from any centered discrete Gaussian in $2^{n+o(n)}$ time---even for parameters $s$ below smoothing. \cite{ADS15} then extended this idea to arbitrary Gaussians (as opposed to just centered Gaussians) with very low parameters $s \gtrsim \dist(\vec{t}, \lat)/2^{n}$. In both cases, the sampler actually outputs exponentially many vectors from the desired distribution.

\paragraph{Sparsification. } The samplers in this work approach discrete Gaussian sampling in a completely different way. (Indeed, the author repeatedly tried and failed to modify the above techniques to work in our context.) Instead, as we described above, we use a new method of sampling based on lattice sparsification. This tool was originally introduced by Khot for the purposes of proving the hardness of approximating SVP~\cite{Khot05svp}. Khot analyzed the behavior of sparsification only on the specific lattices that arose in his reduction, which were cleverly designed to \scarequotes{behave nicely} when sparsified. Later, Dadush and Kun analyzed the behavior of sparsification over general lattices~\cite{DK13} and introduced the idea of adding a random shift to the target in order to obtain deterministic approximation algorithms for CVP in any norm. Dadush, Regev, and Stephens-Davidowitz used a similar algorithm to obtain a reduction from approximate CVP to the same problem with an upper bound on the distance to the lattice (and a slightly smaller approximation factor)~\cite{cvpp}. Our sparsification analysis in the CVP case is most similar to that of~\cite{cvpp}, though our reduction requires tighter analysis. 

However, in the SVP case our analysis is quite different from that of prior work. In particular, we deal explicitly with primitive lattice vectors, which allows us to tightly analyze the behavior of sparsification without a random shift. This seems necessary for studying the distribution of the shortest vector of an arbitrary sparsified lattice, but prior work managed to avoid this by either working with a specific type of lattice or adding a random shift. 

Our use case for sparsification is also novel. In all prior work, sparsification was used to \scarequotes{filter out annoying short vectors, leaving only desirable vectors behind.}  We instead use it specifically to sample from the resulting distribution of the shortest or closest vector in the sparsified lattice. We suspect that this technique will have additional applications.

\paragraph{Dimension-preserving reductions. } More generally, this paper can be considered as part of a long line of work that studies the relationships between various lattice problems under dimension-preserving reductions. Notable examples include~\cite{GMSS99}, which showed that SVP reduces to CVP;~\cite{Micciancio08}, which gave a reduction from SIVP to CVP; and~\cite{LM09}, which showed the equivalence of uSVP, GapSVP, and BDD up to polynomial approximation factors. In particular, this work together with~\cite{Micciancio08} shows that exact SIVP, exact CVP, and DGS are all equivalent under dimension-preserving reductions. (See~\cite{latticereductions} for a summary of such reductions.)

\subsection{Directions for future work}

\paragraph{Centered DGS. } In this work, we completely characterize the complexity of arbitrary discrete Gaussian sampling by showing that it is equivalent to CVP under dimension-preserving reductions. But, the complexity of centered DGS is still unknown. This is therefore the most natural direction for future work. In particular, we show that centered DGS is no harder than SVP (and therefore no harder than NP), but our lower bound only shows that it is at least as hard as $\gamma$-approximate SVP for any $\gamma = \Omega(\sqrt{n/\log n})$. The decision version of SVP is not NP-hard for such high approximation factors unless the polynomial hierarchy collapses, so there is a relatively large gap between our lower and upper bounds. Indeed, for $\gamma = \Omega(\sqrt{n / \log n})$, the decision version of $\gamma$-approximate SVP is known to be in co-AM, and even in SZK~\cite{GG98}.\footnote{The search problem could still potentially be NP-hard for such high approximation factors without violating any widely believed complexity-theoretic conjectures. However, this seems unlikely.} We provide some (relatively weak) evidence to suggest that $\gamma = \Omega(\sqrt{n/\log n})$  is the best achievable approximation factor (see Section~\ref{sec:SVPtoDGS}), and we therefore ask whether centered DGS can be reduced to an easier problem---perhaps even the search variant of a problem in $\mathsf{NP} \cap \mathsf{co}\text{-}\mathsf{AM}$.

A related and arguably much more important question is whether there is an algorithm for centered DGS that is faster than the $2^{n+o(n)}$-time algorithm of~\cite{ADRS15}---perhaps a sampler that outputs only one sample, as opposed to exponentially many. Indeed,~\cite{ADRS15} discuss possible ways to improve their techniques to achieve a roughly $2^{n/2 + o(n)}$-time algorithm for centered DGS, and they make some progress towards this goal. It seems that entirely new techniques would be needed to achieve running times below $2^{n/2}$. Any algorithm with a substantially better constant in the exponent would be the asymptotically fastest algorithm to break nearly all lattice-based cryptographic schemes.

\paragraph{Reductions to approximate lattice problems. } We note that the sampling algorithm of~\cite{Klein00, GPV08, BLPRS13} and many of the DGS subroutines used in hardness proofs can be seen as dimension-preserving reductions from DGS with very high parameters to \emph{approximate} lattice problems. If one simply plugs an exact SVP solver into these reductions, they will still only work for very high parameters. (More specifically, these works can be seen as reducing DGS with $s \gtrsim \gamma \sqrt{\log n} \lambda_n(\lat)$ to $\gamma$-approximate SVP or SIVP.) Our reductions, on the other hand, can handle any parameter but only work with exact solvers. 

We therefore ask if there are better reductions from DGS to \emph{approximate} lattice problems with a better lower bound on the parameter $s$ than the one obtained in~\cite{GPV08, BLPRS13}. Ideally, we would like a smooth trade-off between the approximation factor $\gamma$ and the lower bound on the parameter $s$ that matches our result that works for any $s$ in the exact case when $\gamma = 1$. But, any non-trivial improvement over~\cite{GPV08,BLPRS13} would be a major breakthrough. (A dimension-preserving reduction from DGS with parameter $s \gtrsim \sqrt{(\gamma - 1)/n} \cdot \dist(\vec{t},\lat ) $ to $\gamma$-approximate CVP would show that the two problems are equivalent and therefore completely characterize DGS. Furthermore,~\cite{LiuLM06, cvpp} show that it actually suffices to handle cases when either $\dist(\vec{t}, \lat) \gtrsim \sqrt{\log n/n} \cdot \lambda_1(\lat)$ or $s$ is above the smoothing parameter.) 

Indeed, it is still plausible that we could obtain a dimension-preserving reduction from \emph{centered} DGS to $\gamma$-approximate SVP for some $1 < \gamma \lesssim \sqrt{n/\log n}$. A reduction with $\gamma = \Omega(\sqrt{n/\log n})$ would completely characterize the complexity of centered DGS, but it seems far out of reach. However, any non-trivial $\gamma > 1$ would be quite interesting. In fact, DGS is essentially equivalent to centered DGS above the smoothing parameter. (See, e.g.,~\cite[Section 5.4]{ADRS15}.) So, a result for centered DGS might also advance the study of arbitrary DGS above smoothing.

\section{Preliminaries}

For $\vec{x} \in \R^n$, we write $\length{\vec{x}}$ to represent the $\ell_2$ norm of $\vec{x}$. (Except for the last section, this is the only norm that we consider.) We write $r B_2^n$ to represent the (closed) ball of radius $r$ in $\R^n$, $r B_2^n := \{ \vec{x} \in \R^n\ : \ \length{\vec{x}} \leq r\}$. \full{We will make repeated use of the simple fact that $(1+1/\poly(n))^{1/C} = 1+1/\poly(n)$ for any constant $C$.}{Many of the proofs that are not included in this extended abstract can be found in the full version.}

\subsection{Lattices}

\full{A lattice $\lat\subset \R^n$ is the set of all integer linear combinations of linearly independent vectors $\basis = (\vec{b}_1, \ldots, \vec{b}_n) \in \R^n$. $\basis$ is called a basis of the lattice. As the basis is not unique, we often refer to the lattice itself, as opposed to its representation by a basis.
}{}

We write $\lambda_1(\lat)$ for the length of a shortest non-zero vector in the lattice, and $\lambda_2(\lat)$ is the length of a shortest vector in the lattice that is linearly independent from a vector of length $\lambda_1(\lat)$.
For any $\vec{t} \in \R^n$, we define 
$
\dist(\vec{t}, \lat) := \min_{\vec{x} \in \lat} \length{\vec{x} - \vec{t}}$,
and the covering radius is then $\mu(\lat) := \max_{\vec{t}} \dist(\vec{t}, \lat)$.

\full{We will need basic bounds on $\lambda_1(\lat)$ and $\mu(\lat)$ for rational lattices in terms of the bit length of the basis. (Many of our results are restricted to lattices and targets in $\Q^n$ entirely for the sake of bounds like this. We could instead work over the reals, provided that the chosen representation of real numbers leads to similar bounds.)}{}

\begin{lemma}
\label{lem:lambda1bitlength}
For any lattice $\lat \subset \Q^n$ with basis $\basis = (\vec{b}_1, \ldots, \vec{b}_n)$, let $m$ be a bound on the bit length of $\vec{b}_i$ for all $i$ in the natural representation of rational numbers. Then,
\[
2^{-nm} \leq \lambda_1(\lat) \leq 2^{m}
\; ,
\]
and 
\[
2^{-nm - 1} \leq \mu(\lat) \leq n2^{m}
\; .
\]
\end{lemma}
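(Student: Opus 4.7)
The plan is to verify each of the four inequalities separately; none of them requires deep lattice theory, only the bit-length assumption together with elementary facts about lattices.

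For the two upper bounds, I would argue directly from the basis. For $\lambda_1(\lat) \leq 2^m$, observe that $\vec{b}_1$ is a nonzero lattice vector whose bit length is at most $m$, so each of its rational entries has magnitude at most $2^m$, and hence $\|\vec{b}_1\| \leq 2^m$ (up to a polynomial-in-$n$ factor that can be absorbed into the constants). For $\mu(\lat) \leq n \cdot 2^m$, I would apply Babai's rounding procedure: given any $\vec{t} \in \R^n$, expand $\vec{t} = \sum_{i=1}^n \alpha_i \vec{b}_i$ using that $\basis$ spans $\R^n$, and round each $\alpha_i$ to its nearest integer $z_i$; then $\vec{v} := \sum_i z_i \vec{b}_i \in \lat$ satisfies $\|\vec{t} - \vec{v}\| \leq \tfrac{1}{2}\sum_i \|\vec{b}_i\| \leq \tfrac{n}{2}\cdot 2^m$.

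For the lower bound $\lambda_1(\lat) \geq 2^{-nm}$, the idea is to clear denominators column by column. Within column $\vec{b}_i$, let $L_i$ be the least common multiple of the denominators of the rational entries $p_{i,j}/q_{i,j}$. Because the sum of the bit lengths of these denominators is at most $m$, we have $L_i \leq \prod_j q_{i,j} \leq 2^m$. Setting $L := \prod_i L_i \leq 2^{nm}$, the matrix $L\basis$ has integer entries, so $L \cdot \lat \subseteq \Z^n$. Consequently, any nonzero $\vec{v} \in \lat$ satisfies $L\vec{v} \in \Z^n \setminus \{\vec0\}$, giving $\|L \vec{v}\| \geq 1$ and hence $\|\vec{v}\| \geq 2^{-nm}$.

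The final inequality $\mu(\lat) \geq 2^{-nm-1}$ follows from the previous bound together with the elementary fact $\mu(\lat) \geq \lambda_1(\lat)/2$. I would prove this fact by taking $\vec{t} := \vec{v}/2$ for a shortest nonzero lattice vector $\vec{v}$: any $\vec{u} \in \lat$ with $\|\vec{u} - \vec{t}\| < \lambda_1(\lat)/2$ would make $2\vec{u} - \vec{v}$ a nonzero vector in $\lat$ (nonzero because $\vec{v}/2 \notin \lat$, as otherwise $\vec{v}$ would not have been shortest) of norm strictly less than $\lambda_1(\lat)$, a contradiction. The only subtlety in the whole argument is to avoid a naive bound that treats all $n^2$ entries of $\basis$ at once, which would yield only $\lambda_1(\lat) \geq 2^{-n^2 m}$; the column-by-column bound $L_i \leq 2^m$ is what produces the cleaner $2^{-nm}$ factor.
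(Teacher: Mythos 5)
Your proof is correct and follows essentially the same route as the paper: $\lambda_1 \le \|\vec{b}_1\|$ for the first upper bound, clearing denominators column by column for the lower bound (your $L_i$ is precisely the paper's $q_i$, the minimal integer with $q_i\vec{b}_i \in \Z^n$), $\mu \le \sum_i\|\vec{b}_i\|$ via rounding for the second upper bound, and $\mu \ge \lambda_1/2$ for the second lower bound. You spell out a couple of steps the paper treats as standard (Babai rounding, the deep-hole argument for $\mu \ge \lambda_1/2$), but the decomposition and key estimates are identical.
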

\full{\begin{proof}
The first upper bound is trivial, as $\lambda_1(\lat) \leq \length{\vec{b}_1} \leq 2^{m}$. For the lower bound, let $q_i$ be a the minimal positive integer such that $q_i \vec{b}_i \in \Z^n$. Note that $q_i \leq 2^{m}$. Then, for any vector $\vec{x} \in \lat$, we have $\vec{x} \cdot \prod_i q_i \in \Z^n$. Therefore, $\lambda_1(\lat) \geq \prod_i q_{i}^{-1} \geq 2^{-n m}$.

Similarly, the lower bound on $\mu(\lat)$ is trivial, as $\mu(\lat) \geq \lambda_1(\lat)/2 \geq 2^{-nm - 1}$. For the upper bound, we have $\mu(\lat) \leq \sum \length{\vec{b}_i} \leq n2^{m}$.
\end{proof}}{}

\full{The following Lemma is due to~\cite{BHW93}.

\begin{lemma}
\label{lem:weakKL}
For any lattice $\lat \subset \R^n$ and $r > 0$, 
\[
|\lat \cap r B_2^n| \leq 1+ \Big(\frac{8r}{\lambda_1(\lat)} \Big)^{n}
\; .
\]
\end{lemma}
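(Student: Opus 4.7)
The plan is to use a standard volume-packing argument and then massage the resulting estimate into the form stated in the lemma. Specifically, I would place around each lattice point in $\lat \cap rB_2^n$ an open ball of radius $\lambda_1(\lat)/2$. By the definition of $\lambda_1(\lat)$ as the length of the shortest nonzero lattice vector, these balls have pairwise disjoint interiors, and every one of them is contained in the enlarged ball $(r + \lambda_1(\lat)/2)B_2^n$. Comparing volumes (the factor $V_n$ for the unit ball cancels) yields
\[
|\lat \cap rB_2^n| \;\leq\; \left(1 + \frac{2r}{\lambda_1(\lat)}\right)^{\!n}.
\]

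To turn this into the stated bound I would split on the size of $r$. If $r < \lambda_1(\lat)$, then $\lat \cap rB_2^n = \{\vec0\}$ (any other lattice point would have length at least $\lambda_1(\lat) > r$), so the left-hand side equals $1$ and the bound $1 + (8r/\lambda_1(\lat))^n \geq 1$ is immediate. Otherwise $r \geq \lambda_1(\lat)$, whence $2r/\lambda_1(\lat) \geq 2$ and hence
\[
1 + \frac{2r}{\lambda_1(\lat)} \;\leq\; 3 \cdot \frac{2r}{\lambda_1(\lat)} \;\leq\; \frac{8r}{\lambda_1(\lat)},
\]
so raising to the $n$-th power and adding a trivial $+1$ produces the stated inequality.

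There is no substantive obstacle here, since this is essentially a textbook packing bound. The only point requiring a small amount of care is matching the additive $+1$ and the constant $8$ in the statement: the additive $+1$ is exactly what allows us to absorb the small-$r$ regime where $8r/\lambda_1(\lat)$ can be strictly less than $1$ but the ball still contains the origin, and the constant $8$ (rather than the tighter $2$ or $3$) is what makes $1 + 2r/\lambda_1(\lat) \leq 8r/\lambda_1(\lat)$ hold throughout the remaining range $r \geq \lambda_1(\lat)$ with room to spare.
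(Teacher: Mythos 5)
Your proof is correct. The paper does not actually prove this lemma---it simply cites it as due to Betke, Henk, and Wills~\cite{BHW93}, whose actual theorem is the sharper successive-minima inequality $|\lat \cap rB_2^n| \le \prod_{i=1}^n \bigl\lfloor 2r/\lambda_i(\lat) + 1 \bigr\rfloor$, of which the stated bound is a crude corollary. Your standard volume-packing argument (disjoint open $\lambda_1(\lat)/2$-balls around the lattice points, all contained in $(r+\lambda_1(\lat)/2)B_2^n$) yields the weaker intermediate bound $(1 + 2r/\lambda_1(\lat))^n$, but the generous constant $8$ and the additive $+1$ absorb the slack, exactly as your two-case analysis shows: for $r < \lambda_1(\lat)$ only the origin lies in the ball and the $+1$ handles it, and for $r \ge \lambda_1(\lat)$ one has $1 + 2r/\lambda_1(\lat) \le 8r/\lambda_1(\lat)$. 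The upshot is that your proof is fully self-contained and elementary, whereas the paper defers to a more refined external result that is overkill for the stated estimate; either route is perfectly adequate here.
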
}{}

\begin{corollary}
\label{cor:ballcountingbitlength}
For any lattice $\lat \subset \Q^n$ with basis $(\vec{b}_1, \ldots, \vec{b}_n)$, $\vec{t} \in \Q^n$, and $r > 0$, let $m$ be a bound on the bit length of the $\vec{b}_i$ for all $i$ in the natural representation of rational numbers. Then,
\[
|(\lat - \vec{t}) \cap rB_2^n| \leq 1 + (2+r)^{\poly(n,m)}
\; .
\]
\end{corollary}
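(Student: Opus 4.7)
The plan is to reduce the shifted counting problem to an unshifted one, then apply Lemma~\ref{lem:weakKL} together with the lower bound on $\lambda_1(\lat)$ from Lemma~\ref{lem:lambda1bitlength}.

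First, I would handle the reduction to the unshifted case by a standard subtract-a-reference-point trick. If $(\lat - \vec{t}) \cap rB_2^n$ is empty, the bound is trivial. Otherwise pick any $\vec{y}_0 \in (\lat - \vec{t}) \cap rB_2^n$. For every other $\vec{y} \in (\lat - \vec{t}) \cap rB_2^n$, the difference $\vec{y} - \vec{y}_0$ lies in $\lat$ and in $2rB_2^n$ by the triangle inequality. Since $\vec{y} \mapsto \vec{y} - \vec{y}_0$ is injective, this yields
\[
|(\lat - \vec{t}) \cap rB_2^n| \;\leq\; |\lat \cap 2rB_2^n|.
\]

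Next I would plug into Lemma~\ref{lem:weakKL} to get $|\lat \cap 2rB_2^n| \leq 1 + (16r/\lambda_1(\lat))^n$, and then use the lower bound $\lambda_1(\lat) \geq 2^{-nm}$ from Lemma~\ref{lem:lambda1bitlength} to obtain
\[
|(\lat - \vec{t}) \cap rB_2^n| \;\leq\; 1 + (16 r \cdot 2^{nm})^n \;=\; 1 + 16^n \cdot r^n \cdot 2^{n^2 m}.
\]

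Finally, I would massage the right-hand side into the desired form $(2+r)^{\poly(n,m)}$. Since $r \leq 2+r$ and $2 \leq 2+r$, we have $r^n \leq (2+r)^n$ and $16^n \cdot 2^{n^2 m} = 2^{4n + n^2 m} \leq (2+r)^{4n + n^2 m}$, so the whole product is bounded by $(2+r)^{n + 4n + n^2 m}$, which is of the form $(2+r)^{\poly(n,m)}$, giving the claim. There is essentially no obstacle here; the only thing to watch is the empty-intersection edge case and the trivial manipulation that $r$ could be smaller than $1$, which is harmless since the $2$ in $2+r$ absorbs the constants.
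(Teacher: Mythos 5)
Your proof is correct and follows essentially the same recipe as the paper: reduce the shifted count to an unshifted one, then apply Lemma~\ref{lem:weakKL} together with the $\lambda_1$ lower bound from Lemma~\ref{lem:lambda1bitlength}. The one place you differ is the reduction step. The paper shifts by the lattice point closest to $\vec{t}$, landing in $\lat \cap (r + \mu(\lat))B_2^n$, and hence also invokes the upper bound $\mu(\lat) \leq n2^m$ from Lemma~\ref{lem:lambda1bitlength}. You instead anchor at an arbitrary point $\vec{y}_0$ of $(\lat - \vec{t}) \cap rB_2^n$, giving $\lat \cap 2rB_2^n$; this is marginally cleaner since it only needs the $\lambda_1$ lower bound and handles the empty case explicitly. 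Both yield the stated $\poly(n,m)$ exponent, and the choice between $2r$ and $r + \mu(\lat)$ is immaterial once you absorb constants into $(2+r)^{\poly(n,m)}$.
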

\full{\begin{proof}
It suffices to bound $|\lat \cap (r+\mu(\lat)) B_2^n)|$. The result then follows by applying Lemma~\ref{lem:lambda1bitlength} and Lemma~\ref{lem:weakKL}.
\end{proof}}{}

\subsection{The discrete Gaussian distribution}

For $\vec{x} \in \R^n$ and $s > 0$, we write $\rho_s(\vec{x}) := e^{-\pi \length{\vec{x}}^2/s^2}$. For $A \subset \R^n$, a discrete set, we write $\rho_s(A) := \sum_{\vec{x} \in A} \rho_s(\vec{x})$, and we define the discrete Gaussian distribution over $A$ with parameter $s$, $D_{A,s}$, as the distribution that assigns probability $\rho_s(\vec{x})/\rho_s(A)$ to all $\vec{x} \in A$. When $s = 1$, we omit it and simply write $\rho(\vec{x})$, $D_{\lat}$, etc.

\full{Banaszczyk proved the following two useful bounds on the discrete Gaussian over lattices \cite{banaszczyk}. 

\begin{lemma}
\label{lem:rhoLt}
For any lattice $\lat \subset \R^n$, $s > 0$, and $\vec{t} \in \R^n$, 
\[
\rho_s(\lat - \vec{t}) \geq e^{-\pi \dist(\vec{t}, \lat)^2/s^2}\rho_s(\lat)
\; .
\]
\end{lemma}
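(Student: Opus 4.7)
The plan is to prove the lemma by Banaszczyk's classical ``cosh trick,'' after first reducing to the case $\length{\vec{t}} = \dist(\vec{t}, \lat)$ via the lattice-periodicity of $\rho_s(\lat - \vec{t})$.

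First I would observe that $\lat - \vec{t} = \lat - (\vec{t} - \vec{v})$ as sets for every $\vec{v} \in \lat$, because $\lat + \vec{v} = \lat$. Hence $\rho_s(\lat - \vec{t})$ depends on $\vec{t}$ only modulo $\lat$. Choosing $\vec{v}$ to be a closest lattice vector to $\vec{t}$ and replacing $\vec{t}$ by $\vec{t}' := \vec{t} - \vec{v}$, we get $\length{\vec{t}'} = \dist(\vec{t}, \lat)$. It therefore suffices to prove, for \emph{every} $\vec{t}' \in \R^n$, the inequality $\rho_s(\lat - \vec{t}') \geq e^{-\pi \length{\vec{t}'}^2/s^2}\rho_s(\lat)$.

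Next, I would expand the squared norm in the exponent: $\length{\vec{x} - \vec{t}'}^2 = \length{\vec{x}}^2 - 2\langle \vec{x}, \vec{t}'\rangle + \length{\vec{t}'}^2$. Pulling the $\vec{t}'$-dependent factor out of the sum,
\[
e^{\pi \length{\vec{t}'}^2/s^2}\,\rho_s(\lat - \vec{t}')
\;=\; \sum_{\vec{x} \in \lat} \rho_s(\vec{x})\, e^{2\pi\langle \vec{x}, \vec{t}'\rangle/s^2}.
\]
Now I would exploit the symmetry $\lat = -\lat$ to pair the term for $\vec{x}$ with the term for $-\vec{x}$; each such pair contributes $\rho_s(\vec{x})(e^{2\pi\langle \vec{x}, \vec{t}'\rangle/s^2} + e^{-2\pi\langle \vec{x}, \vec{t}'\rangle/s^2}) = 2\rho_s(\vec{x})\cosh(2\pi\langle \vec{x}, \vec{t}'\rangle/s^2) \geq 2\rho_s(\vec{x})$, since $\cosh \geq 1$. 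The $\vec{x} = \vec0$ term contributes exactly $1$. Summing gives $e^{\pi \length{\vec{t}'}^2/s^2}\,\rho_s(\lat - \vec{t}') \geq \rho_s(\lat)$, which rearranges to the desired bound.

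There is really no serious obstacle here: the whole argument is a short algebraic manipulation, with the only slightly non-obvious move being the initial translation of $\vec{t}$ by a closest lattice vector, which is what replaces the trivial quantity $\length{\vec{t}}^2$ by the more useful $\dist(\vec{t}, \lat)^2$ in the exponent.
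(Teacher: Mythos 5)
Your proof is correct, and since the paper simply cites Banaszczyk for this lemma without reproducing a proof, there is nothing to compare it against within the paper itself. The argument you give is precisely the standard ``cosh trick'' from Banaszczyk's original work: the periodicity observation $\rho_s(\lat - \vec{t}) = \rho_s(\lat - (\vec{t}-\vec{v}))$ for $\vec{v}\in\lat$ lets you replace $\vec{t}$ by a representative of minimal norm, and then expanding the square, pulling out the constant $e^{-\pi\length{\vec{t}'}^2/s^2}$, and pairing $\vec{x}$ with $-\vec{x}$ via $\cosh\ge 1$ (with the $\vec0$ term handled separately) gives exactly the claimed bound. One tiny cosmetic remark: you could skip the initial translation and instead prove $\rho_s(\lat-\vec{t}) \geq e^{-\pi\length{\vec{t}}^2/s^2}\rho_s(\lat)$ for all $\vec{t}$, then observe the left side is $\lat$-periodic in $\vec{t}$ while the right side can be maximized over $\vec{t}+\lat$; both framings are equivalent and equally standard.
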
}{}

\begin{lemma}[{\cite[Lemma 2.13]{cvpp}}]
\label{lem:banaszczyk} 
For any lattice $\lat\subset\R^n$, $s > 0$, $\vec{t} \in \R^n$, and $r \geq1/\sqrt{2\pi}$,
\[
\Pr_{\vec{X} \sim D_{\lat - \vec{t}, s}}[\length{\vec{X}} \geq r s\sqrt{n} ] < \frac{\rho_s(\lat)}{\rho_s(\lat - \vec{t})}\big( \sqrt{2 \pi e r^2} \exp(-\pi r^2) \big)^n
\; .
\]

\end{lemma}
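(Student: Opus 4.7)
The plan is to follow Banaszczyk's original argument, adapted to the shifted lattice setting. By scaling (replacing $\lat$ with $\lat/s$ and $\vec{t}$ with $\vec{t}/s$), I may assume without loss of generality that $s = 1$, since both sides of the claimed inequality are invariant under this scaling. It then suffices to show that
\[
\rho((\lat - \vec{t}) \setminus r\sqrt{n}B_2^n) \leq \rho(\lat) \cdot \big(\sqrt{2\pi e r^2}\exp(-\pi r^2)\big)^n \; .
\]

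The first step is the classical Banaszczyk exponential-moment trick. For any $\alpha \in (0,1]$, the inequality $e^{\pi(1-\alpha)(\length{\vec{x}}^2 - r^2 n)} \geq 1$ holds whenever $\length{\vec{x}} \geq r\sqrt{n}$, so
\[
\rho\big((\lat - \vec{t}) \setminus r\sqrt{n}B_2^n\big) \;\leq\; e^{-\pi(1-\alpha)r^2 n}\!\!\sum_{\vec{x} \in \lat - \vec{t}} e^{-\pi \alpha \length{\vec{x}}^2} \;=\; e^{-\pi(1-\alpha)r^2 n}\, \rho_{1/\sqrt{\alpha}}(\lat - \vec{t}) \; .
\]

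The second step bounds the shifted Gaussian mass by the unshifted one. Applying Poisson summation to $\rho_{1/\sqrt{\alpha}}(\lat - \vec{t})$ gives an expression of the form $\alpha^{-n/2}\det(\lat)^{-1}\sum_{\vec{y}\in\lat^*}e^{2\pi i\inner{\vec{y},\vec{t}}}\rho_{\sqrt{\alpha}}(\vec{y})$. Since the Fourier coefficients $\rho_{\sqrt{\alpha}}(\vec{y})$ are non-negative, taking absolute values yields $\rho_{1/\sqrt{\alpha}}(\lat - \vec{t}) \leq \rho_{1/\sqrt{\alpha}}(\lat)$. Applying Poisson summation a second time to the unshifted lattice gives $\rho_{1/\sqrt{\alpha}}(\lat) = \alpha^{-n/2}\det(\lat)^{-1}\rho_{\sqrt{\alpha}}(\lat^*) \leq \alpha^{-n/2}\det(\lat)^{-1}\rho(\lat^*) = \alpha^{-n/2}\rho(\lat)$, where the middle inequality uses $\alpha \leq 1$ to compare the two Gaussian sums termwise, and the last equality is yet another instance of Poisson summation.

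Combining these bounds gives
\[
\rho\big((\lat - \vec{t}) \setminus r\sqrt{n}B_2^n\big) \;\leq\; \rho(\lat)\cdot \big(\alpha^{-1/2}\,e^{-\pi(1-\alpha)r^2}\big)^n \; .
\]
Dividing by $\rho(\lat - \vec{t})$ gives the probability. The final step is to optimize over $\alpha$: the per-coordinate factor $f(\alpha) = \alpha^{-1/2}e^{-\pi(1-\alpha)r^2}$ is minimized at $\alpha^* = 1/(2\pi r^2)$, at which point $f(\alpha^*) = \sqrt{2\pi r^2}\cdot e^{-\pi r^2 + 1/2} = \sqrt{2\pi e r^2}\exp(-\pi r^2)$, as desired. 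The hypothesis $r \geq 1/\sqrt{2\pi}$ is precisely what is needed to ensure $\alpha^* \leq 1$ so that the preceding arguments apply. There is no real obstacle here beyond careful bookkeeping of the three Poisson-summation invocations; everything reduces to standard Fourier-analytic properties of the Gaussian.
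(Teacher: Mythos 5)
The paper does not prove this lemma; it is cited verbatim from \cite[Lemma 2.13]{cvpp}. Your argument is the standard Banaszczyk exponential-moment proof used there (and in \cite{banaszczyk}): scale to $s=1$, insert the factor $e^{\pi(1-\alpha)(\length{\vec{x}}^2 - r^2 n)} \geq 1$, pass to $\rho_{1/\sqrt{\alpha}}(\lat - \vec{t}) \leq \rho_{1/\sqrt{\alpha}}(\lat) \leq \alpha^{-n/2}\rho(\lat)$ via Poisson summation, and optimize at $\alpha^* = 1/(2\pi r^2)$, which is where the threshold $r \geq 1/\sqrt{2\pi}$ enters; the calculation checks out. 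One small slip worth tidying: the lemma concerns the closed event $\length{\vec{X}} \geq rs\sqrt{n}$ and asserts strict inequality, whereas your opening reduction writes $(\lat - \vec{t})\setminus r\sqrt{n}B_2^n$ (the open event $>$, under the paper's closed-ball convention) and your chain of estimates as written only yields $\leq$ --- both are easily repaired (your subsequent sum is indexed by $\geq$, and strictness follows since $\lat - \vec{t}$ contains vectors of norm strictly exceeding $r\sqrt{n}$, whose inserted factor is strictly greater than $1$ when $\alpha < 1$), but should be stated.
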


\full{With this, we derive a corollary similar to \cite[Corollary 2.7]{ADS15}.

\begin{corollary}
\label{cor:shiftedbanaszczyk}
For any lattice $\lat \subset \R^n$, $s > 0$, $\vec{t} \in \R^n $, and $r \geq 1/\sqrt{2\pi}$,
\[
\Pr_{\vec{X} \sim D_{\lat - \vec{t}}, s}[\length{\vec{X}}^2 \geq \dist(\vec{t}, \lat)^2 + r^2 s^2 n] < \big( \sqrt{2 \pi e r^{\prime 2}} \exp(-\pi r^{2}) \big)^n
\; ,
\]
where $r' := \sqrt{\dist(\vec{t}, \lat)^2/(s^2 n) + r^2}$. 
In particular, if 
\[r \geq 10\sqrt{\log \big(10 + \dist(\vec{t}, \lat)/(s \sqrt{n})\big)}
\; ,
\] 
then
\[
\Pr_{\vec{X} \sim D_{\lat - \vec{t}}, s}[\length{\vec{X}}^2 \geq \dist(\vec{t}, \lat)^2 + r^2 s^2 n] < e^{-r^2n}
\; .
\]
\end{corollary}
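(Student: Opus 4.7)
The plan is to apply Lemma~\ref{lem:banaszczyk} at the larger radius $r'$ rather than $r$, and absorb the ratio $\rho_s(\lat)/\rho_s(\lat-\vec{t})$ from that bound using Lemma~\ref{lem:rhoLt}. The key observation is that $r' s\sqrt{n} = \sqrt{\dist(\vec{t},\lat)^2 + r^2 s^2 n}$, so the event $\{\length{\vec{X}}^2 \geq \dist(\vec{t},\lat)^2 + r^2 s^2 n\}$ is exactly the event $\{\length{\vec{X}} \geq r' s\sqrt{n}\}$. Since $r' \geq r \geq 1/\sqrt{2\pi}$, Lemma~\ref{lem:banaszczyk} applies with parameter $r'$ and yields
\[
\Pr_{\vec{X}\sim D_{\lat-\vec{t},s}}[\length{\vec{X}} \geq r's\sqrt{n}] < \frac{\rho_s(\lat)}{\rho_s(\lat-\vec{t})}\bigl(\sqrt{2\pi e r^{\prime 2}}\exp(-\pi r^{\prime 2})\bigr)^n.
\]
Now Lemma~\ref{lem:rhoLt} gives $\rho_s(\lat)/\rho_s(\lat-\vec{t}) \leq \exp(\pi \dist(\vec{t},\lat)^2/s^2)$, while the exponential factor splits as $\exp(-\pi r^{\prime 2})^n = \exp(-\pi \dist(\vec{t},\lat)^2/s^2)\cdot \exp(-\pi r^2 n)$. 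The two $\exp(\pm \pi \dist(\vec{t},\lat)^2/s^2)$ factors cancel exactly, leaving the claimed bound $(\sqrt{2\pi e r^{\prime 2}}\exp(-\pi r^2))^n$.

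For the second claim, I would set $D := \dist(\vec{t},\lat)/(s\sqrt{n})$ so that $r^{\prime 2} = D^2 + r^2$, and verify that whenever $r \geq 10\sqrt{\log(10+D)}$ one has
\[
\sqrt{2\pi e(D^2 + r^2)}\,\exp(-\pi r^2) \leq \exp(-r^2),
\]
i.e.\ $\tfrac{1}{2}\log(2\pi e(D^2+r^2)) \leq (\pi-1)r^2$. Splitting on whether $D \leq r$ or $D > r$, the left-hand side is at most $O(1) + \log\max(D,r)$, which is dominated by $(\pi-1)r^2 \gtrsim r^2 \geq 100\log(10+D)$ in both cases (using also that $\log r \ll r^2$ for any $r \geq 1/\sqrt{2\pi}$).

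No step should pose a real obstacle; the only place that requires a little care is ensuring that the constant $10$ in the hypothesis on $r$ really is large enough to dominate both the $\log(D^2+r^2)$ term and the additive $O(1)$, which is why the corollary is phrased with $\log(10+D)$ rather than $\log D$. The bulk of the work is the algebraic cancellation of the $\exp(\pi \dist(\vec{t},\lat)^2/s^2)$ factor, which is the whole point of moving from $r$ to $r'$ in the first place.
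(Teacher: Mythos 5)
Your proof is correct and takes essentially the same route as the paper's: apply Lemma~\ref{lem:banaszczyk} at the shifted radius $r'$ (noting $r's\sqrt{n} = \sqrt{\dist(\vec{t},\lat)^2 + r^2 s^2 n}$), absorb the ratio $\rho_s(\lat)/\rho_s(\lat-\vec{t})$ via Lemma~\ref{lem:rhoLt}, and observe the exact cancellation $e^{\pi\dist^2/s^2}\cdot e^{-\pi r'^2 n} = e^{-\pi r^2 n}$. For the ``in particular'' clause, the paper splits on $D < 1$ versus $D \geq 1$ (where $D := \dist(\vec{t},\lat)/(s\sqrt{n})$) and uses $r'^2 < D^2 e^{r^2/D^2}$ in the second case, while you split on $D \leq r$ versus $D > r$ and bound $\max(D,r)$ directly; both are routine and equivalent given the generous constant $10$ in the hypothesis.
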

\begin{proof}
Combining the above two lemmas, we have
\begin{align*}
\Pr_{\vec{X} \sim D_{\lat - \vec{t}}, s}[\length{\vec{X}}^2 \geq \dist(\vec{t}, \lat)^2 + r^2 s^2 n] &< e^{\pi \length{\vec{t}}^2/s^2}\cdot \big( \sqrt{2 \pi e r^{\prime 2}} \exp(-\pi r^{\prime 2}) \big)^n \\
&=  \big( \sqrt{2 \pi e r^{\prime 2}} \exp(-\pi r^{2}) \big)^n
\; ,
\end{align*}
as needed.

Now, suppose, $r \geq 10\sqrt{\log \big(10 + \dist(\vec{t}, \lat)/(s \sqrt{n})\big)}$. We consider two cases. First, suppose $\frac{\dist(\vec{t}, \lat)}{s \sqrt{n}} < 1$. Then, we have
$
 r^{\prime 2} < 2r^2 < \frac{e^{r^2}}{2\pi e}
$,
and the result follows. Otherwise, we have
\[
r^{\prime 2} = \frac{\dist(\vec{t}, \lat)^2}{s^2 n} \cdot (1+ r^2 s^2 n/\dist(\vec{t}, \lat)^2)< \frac{\dist(\vec{t}, \lat)^2}{s^2 n} \cdot \exp\Big(\frac{r^2 s^2 n}{\dist(\vec{t}, \lat)^2}\Big)
\;.
\]
So,
\begin{align*}
\sqrt{2 \pi e r^{\prime 2}} \exp(-\pi r^{2}) &< \frac{\dist(\vec{t}, \lat)}{s} \cdot \sqrt{2\pi e/n} \cdot \exp \Big( \frac{r^2 s^2 n}{2\dist(\vec{t}, \lat)^2} - \pi r^2\Big)\\
&< \frac{\dist(\vec{t}, \lat)}{s} \cdot \sqrt{2\pi e/n} \cdot e^{(1/2-\pi) r^2}\\
&< e^{- r^2} \; ,
\end{align*}
as needed.
\end{proof}}
{
\begin{corollary}
\label{cor:shiftedbanaszczyk}
For any lattice $\lat \subset \Q^n$, $s > 0$, $\vec{t} \in \Q^n $, and $r\geq 10\sqrt{\log (10 + \dist(\vec{t}, \lat)/(s \sqrt{n}))}$,
\[
\Pr_{\vec{X} \sim D_{\lat - \vec{t}}, s}[\length{\vec{X}}^2 \geq \dist(\vec{t}, \lat)^2 + r^2 s^2 n] < e^{-r^2n}
\; .
\]
\end{corollary}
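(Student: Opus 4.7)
The plan is to apply Lemma~\ref{lem:banaszczyk} at a radius large enough that $r' s \sqrt{n}$ equals the desired threshold $\sqrt{\dist(\vec{t},\lat)^2 + r^2 s^2 n}$, then absorb the shift-penalty using Lemma~\ref{lem:rhoLt}. Concretely, set
\[
r' \;:=\; \sqrt{\dist(\vec{t},\lat)^2/(s^2 n) + r^2}\,,
\]
so that $r' s \sqrt{n}$ equals the critical length. Since $r \geq 1/\sqrt{2\pi}$ (easily verified from the hypothesis on $r$), Lemma~\ref{lem:banaszczyk} applies with parameter $r'$, and the ratio $\rho_s(\lat)/\rho_s(\lat - \vec{t}) \leq e^{\pi \dist(\vec{t},\lat)^2/s^2}$ by Lemma~\ref{lem:rhoLt}. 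The exponent $\exp(-\pi r'^2 n) = \exp(-\pi \dist(\vec{t},\lat)^2/s^2) \exp(-\pi r^2 n)$, so the $\pi \dist^2/s^2$ terms cancel exactly, leaving the intermediate bound $\bigl(\sqrt{2 \pi e r'^2}\, \exp(-\pi r^2)\bigr)^n$.

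From here the task reduces to showing that this per-dimension factor is at most $e^{-r^2}$ under the hypothesis on $r$. I would split into two cases according to the size of $\alpha := \dist(\vec{t},\lat)/(s\sqrt{n})$. In the \emph{small-shift} case $\alpha < 1$, we have $r'^2 < 2r^2$, so $\sqrt{2\pi e r'^2} < 2\sqrt{\pi e}\, r$; combined with the hypothesis $r^2 \geq 100 \log(10+\alpha) \geq 100 \log 10$, this is easily dominated by $e^{(\pi - 1)r^2}$, which finishes this case.

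In the \emph{large-shift} case $\alpha \geq 1$, I would use the standard inequality $1 + x \leq e^x$ to write
\[
r'^2 \;=\; \alpha^2 \bigl(1 + r^2/\alpha^2\bigr) \;\leq\; \alpha^2 \exp(r^2/\alpha^2)\,,
\]
so $\sqrt{2\pi e r'^2}\,\exp(-\pi r^2) \leq \alpha\sqrt{2\pi e}\,\exp\bigl(r^2/(2\alpha^2) - \pi r^2\bigr) \leq \alpha\sqrt{2\pi e}\,\exp((1/2-\pi)r^2)$, using $\alpha \geq 1$. It then remains to check that $\alpha\sqrt{2\pi e}\,\exp((1/2-\pi)r^2) \leq e^{-r^2}$, i.e.\ that $\log \alpha + O(1) \leq (\pi - 3/2)r^2$. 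This is immediate from the hypothesis $r^2 \geq 100 \log(10+\alpha)$, which is far stronger than what is needed.

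\textbf{Main obstacle.} None of the steps are hard; the one delicate point is getting the constants to line up in the \emph{large-shift} case, since the prefactor $\alpha$ grows with $\dist(\vec{t},\lat)$ and must be defeated by the exponent. The hypothesis $r \geq 10\sqrt{\log(10+\alpha)}$ is precisely tuned (with constant $10$) to make this cleanly work, and the factor $10 + \alpha$ inside the logarithm (rather than just $\alpha$) is what lets the small-shift case go through simultaneously with the same hypothesis.
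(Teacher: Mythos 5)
Your proof is correct and follows essentially the same route as the paper: you set the same $r' = \sqrt{\dist(\vec{t},\lat)^2/(s^2n)+r^2}$, combine Lemma~\ref{lem:rhoLt} and Lemma~\ref{lem:banaszczyk} so that the $e^{\pi\dist^2/s^2}$ factors cancel to give the intermediate bound $\bigl(\sqrt{2\pi e r'^2}\exp(-\pi r^2)\bigr)^n$, and then split into the same two cases $\alpha<1$ and $\alpha\geq 1$ using the same key inequality $r'^2 \leq \alpha^2\exp(r^2/\alpha^2)$. The only differences are trivial re-arrangements of the final constant-chasing, which are equally valid.
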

}

\full{The following lemma is actually true for \scarequotes{almost all lattices,} in a certain precise sense that is outside the scope of this paper. (See, e.g.,~\cite{siegal45}.)

\begin{lemma}
\label{lem:randomlattice}
For any $n \geq 1$,
there is a lattice $\lat \subset \Q^n$ such that for any $s > 0$, $\rho_s(\lat) \geq 1+s^{n}$ and $\lambda_1(\lat) > \sqrt{n}/10$.
\end{lemma}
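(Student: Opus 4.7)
The plan is to invoke Siegel's mean value theorem on the space $X_n$ of unimodular lattices in $\R^n$ (with its canonical $SL_n(\R)/SL_n(\Z)$-invariant probability measure): for any $L^1$ function $f:\R^n \to \R$,
\[
\expect_{\lat \sim X_n}\sum_{\vec{x} \in \lat\setminus\{\vec{0}\}} f(\vec{x}) \;=\; \int_{\R^n} f(\vec{x})\,d\vec{x}.
\]
Applied to $f=\rho_s$ this yields $\expect_\lat \rho_s(\lat) = 1 + s^n$ for every $s > 0$; applied to $f = \mathds{1}_{(\sqrt{n}/10)B_2^n}$ it yields $\expect_\lat |\lat \cap (\sqrt{n}/10) B_2^n \setminus \{\vec{0}\}| = \vol((\sqrt{n}/10)B_2^n)$, which by Stirling is exponentially small in $n$. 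By Markov's inequality, a Haar-random unimodular lattice satisfies $\lambda_1(\lat) > \sqrt{n}/10$ with probability $1-o(1)$, and the small-$n$ cases where this bound is vacuous can be settled by exhibiting an explicit scaled lattice.

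Next, I would produce a single real unimodular $\lat_0 \in X_n$ satisfying both $\lambda_1(\lat_0) > \sqrt{n}/10$ and $\rho_s(\lat_0) \geq 1 + s^n$ simultaneously for all $s > 0$. For each fixed $s$, the event $\{\rho_s(\lat) \geq 1+s^n\}$ has positive measure (since its expectation is exactly $1+s^n$ and the random variable is not a.s.\ constant), and this event must intersect the probability-$(1-o(1))$ event $\{\lambda_1 > \sqrt{n}/10\}$. To promote single-$s$ existence to uniform-in-$s$ existence, I would exploit that $s \mapsto \rho_s(\lat_0)$ is real-analytic via Poisson summation and reduce the "for all $s$" statement to a countable intersection of positive-measure events on a dense subset of $(0,\infty)$, combined with a continuity/limiting argument.

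Finally I would rationalize: approximate the basis of $\lat_0$ by a sufficiently close rational basis to produce $\lat \subset \Q^n$. Both $\lambda_1$ and $\rho_s$ are continuous functions of the basis (locally uniformly in $s$ on compact subsets of $(0,\infty)$), so the two inequalities survive the approximation, possibly after a tiny rational rescaling.

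The hardest step will be the joint existence uniformly in $s$: the large-$\lambda_1$ requirement forces $\rho_s(\lat_0) - 1$ to decay like $e^{-\pi\lambda_1^2/s^2}$ as $s \to 0$, which is much smaller than the polynomial target $s^n$, so the inequality is comfortable only in a moderate range of $s$. Closing this gap seems to require either a sharper concentration statement for the Siegel averaging, or a separate argument for very small $s$ that exploits the precise structure of the short vectors of $\lat_0$.
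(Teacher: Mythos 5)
Your Siegel-averaging plan matches the only hint the paper gives for this lemma (a bare citation of Siegel, with no proof provided), and steps (1)--(3) of your outline are sound. But the obstacle you flag in your last paragraph is not a gap to be closed by sharper concentration or by a separate small-$s$ argument; it is a genuine defect in the statement as written. For any full-rank lattice $\lat$ and any $s \le 1$,
\[
\rho_s(\lat) - 1 \;=\; \sum_{\vec{x}\in\lat\setminus\{\vec{0}\}} e^{-\pi \length{\vec{x}}^2/s^2} \;\le\; e^{-\pi\lambda_1(\lat)^2(1/s^2-1)}\bigl(\rho_1(\lat)-1\bigr),
\]
which is $o(s^n)$ as $s\to 0^+$ because $e^{-c/s^2}$ decays faster than every power of $s$. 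Hence \emph{no} lattice with $\lambda_1(\lat)>0$ --- let alone $\lambda_1(\lat)>\sqrt{n}/10$ --- can satisfy $\rho_s(\lat)\ge 1+s^n$ for all $s>0$. Your per-$s$ positive-measure events $\{\rho_s(\lat)\ge 1+s^n\}$ are, for small $s$, supported entirely on lattices with $\lambda_1(\lat)$ tending to $0$ with $s$; once $s$ is small enough they are disjoint from $\{\lambda_1(\lat)>\sqrt{n}/10\}$, so there is no joint witness to extract, and no analyticity, density, or limiting argument can conjure one.

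The correct move is to restrict the range of $s$, not to push harder on the existence argument. In the one place this lemma is actually used (the proof of Theorem~\ref{thm:SVPtoDGS}), the bound $\rho_s(\lat')\ge 1+s^{n-1}$ is invoked only when $s>t>2$; for $s\le t$ the trivial $\rho_s(\lat')\ge 1$ suffices. Once $s\ge 1$ is imposed, Siegel averaging is overkill: take any lattice with $\lambda_1(\lat)>\sqrt{n}/10$ and $\det\lat\le 1/2$ (e.g., rescale by $2^{-1/n}$ a unimodular lattice with $\lambda_1>2^{1/n}\sqrt{n}/10$, whose existence is exactly your Markov/Siegel step (3), and handle small $n$ by hand). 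Poisson summation then gives $\rho_s(\lat)=(s^n/\det\lat)\,\rho_{1/s}(\lat^*)\ge 2s^n\ge 1+s^n$ for every $s\ge 1$, and your rationalization step closes the argument. You should flag the unrestricted ``for any $s>0$'' version of the lemma as an error rather than attempt to prove it.
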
}
{}

\subsection{Lattice problems}

\full{
\begin{definition}
For any parameter $\gamma \geq 1$, $\gamma$-SVP (the Shortest Vector Problem) is the search problem defined as follows: The input is a basis $\basis$ for a lattice $\lat \subset \Q^n$. The goal is to output a lattice vector $\vec{x}$ with $0 < \length{\vec{x}} \leq \gamma \lambda_1(\lat)$.
\end{definition}

\begin{definition}
For any parameter $\gamma \geq 1$, $\gamma$-CVP (the Closest Vector Problem) is the search problem defined as follows: The input is a basis $\basis$ for a lattice $\lat \subset \Q^n$ and a target vector $\vec{t} \in \Q^n$. The goal is to output a lattice vector $\vec{x}$ with $\length{\vec{x} - \vec{t}} \leq \gamma \dist(\vec{t}, \lat)$.
\end{definition}
}
{
\begin{definition}
SVP (the Shortest Vector Problem) is the search problem defined as follows: The input is a basis $\basis$ for a lattice $\lat \subset \Q^n$. The goal is to output a lattice vector $\vec{x}$ with $ \length{\vec{x}} = \lambda_1(\lat)$.
\end{definition}

\begin{definition}
CVP (the Closest Vector Problem) is the search problem defined as follows: The input is a basis $\basis$ for a lattice $\lat \subset \Q^n$ and a target vector $\vec{t} \in \Q^n$. The goal is to output a lattice vector $\vec{x}$ with $\length{\vec{x} - \vec{t}} = \dist(\vec{t}, \lat)$.
\end{definition}
}

\full{We will mostly be interested in the exact case, when $\gamma = 1$, in which case we simply write SVP and CVP respectively. Note that there may be many shortest lattice vectors or closest lattice vectors to $\vec{t}$.}{}

\begin{definition}
For $\gamma \geq 1$ and $\eps \ge 0$, we say that a distribution $X$ is $(\gamma, \eps)$-close to a distribution $Y$ if there is another distribution $X'$ with the same support as $Y$ such that 
\begin{enumerate}
\item the statistical distance between $X$ and $X'$ is at most $\eps$; and
\item for all $x$ in the support of $Y$, 
$
\Pr[Y = x]/\gamma \leq \Pr[X' = x] \leq \gamma \Pr[Y = x]
$
.
\end{enumerate} 
\end{definition}

\begin{definition}
For any parameters $\eps \geq 0$ and $\gamma \geq 1$, $(\gamma, \eps)$-DGS (the Discrete Gaussian Sampling problem) is defined as follows: 
The input is a basis $\basis$ for a lattice $\lat \subset \Q^n$, a shift $\vec{t} \in \Q^n$, and a (rational) parameter $s > 0$. The goal is to output a vector whose distribution is $(\gamma, \eps)$-close to $D_{\lat - \vec{t}, s}$.
\end{definition}

\begin{definition}
For any parameters $\eps \geq 0$ and $\gamma \geq 1$, $(\gamma, \eps)$-cDGS (the centered Discrete Gaussian Sampling problem) is defined as follows: 
The input is a basis $\basis$ for a lattice $\lat \subset \Q^n$ and a (rational) parameter $s > 0$. The goal is to output a vector whose distribution is $(\gamma, \eps)$-close to $D_{\lat, s}$.
\end{definition}

\full{DGS is typically defined with an additional parameter $\sigma \geq 0$, such that the algorithm only needs to output discrete Gaussian samples if $s > \sigma$. Since both of our reductions achieve $\sigma = 0$, we omit this parameter.}{}

\full{\subsection{Algorithms for one-dimensional Gaussians}

Brakerski, Langlois, Peikert, Regev, and Stehl\'e show how to efficiently sample from the one-dimensional discrete Gaussian $D_{\Z + c, s}$ for any $c \in \R$ and $s > 0$~\cite{BLPRS13}. For completeness, we describe a slightly modified version of their algorithm to sample from $D_{\Z \setminus \{ 0 \}, s}$.

\begin{lemma}
\label{lem:sampleZ}
There is an algorithm that samples from $D_{\Z \setminus \{ 0 \}, s}$ for any $s > 0$ in (expected) polynomial time.
\end{lemma}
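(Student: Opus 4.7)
My plan is to reduce sampling from $D_{\Z \setminus \{0\}, s}$ to the~\cite{BLPRS13} sampler for $D_{\Z, s}$ via rejection sampling, splitting into cases based on the magnitude of $s$.

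For $s \geq 1$, I would simply call the~\cite{BLPRS13} algorithm to draw $X \sim D_{\Z, s}$ and output the first draw with $X \neq 0$. Here $\rho_s(\Z) \geq 1 + 2 e^{-\pi}$, so $\Pr[X = 0] = 1/\rho_s(\Z)$ is bounded away from $1$ by a constant, yielding $O(1)$ expected trials with each trial running in polynomial time.

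The more interesting case is $s < 1$, where $\Pr[X=0]$ under $D_{\Z,s}$ can be exponentially close to $1$ and the naive rejection above fails. I would exploit the fact that in this regime $D_{\Z \setminus \{0\}, s}$ is sharply concentrated near $\pm 1$: the ratio $\rho_s(k+1)/\rho_s(k) = e^{-\pi(2k+1)/s^2}$ is at most $e^{-3\pi}$ for $k \geq 1$. This suggests rejection sampling with a two-sided geometric proposal $Q$ supported on $\Z \setminus \{0\}$: independently draw a uniform sign $\sigma \in \{\pm 1\}$ and an integer $M \geq 1$ with $\Pr[M = m] = (1 - \alpha) \alpha^{m-1}$ where $\alpha := e^{-3\pi/s^2}$, and propose $Z := \sigma M$. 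A direct computation shows that the density ratio satisfies $\rho_s(k)/Q(k) \propto e^{-\pi(k^2 - 3|k| + 3)/s^2}$, which is maximized at $|k| \in \{1,2\}$. I would therefore accept $Z$ with probability $e^{-\pi(Z^2 - 3|Z| + 2)/s^2} \leq 1$. Using $\rho_s(\Z \setminus \{0\}) \geq 2 e^{-\pi/s^2}$, the overall acceptance probability is at least $1 - \alpha \geq 1 - e^{-3\pi}$, so again $O(1)$ iterations suffice.

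The main technical obstacle is just carrying out the required arithmetic (sampling the geometric variable and evaluating the acceptance probability in terms of $e^{-3\pi/s^2}$) in polynomial time. Since inverse-polynomial precision in the bit length of $s$ and the desired accuracy is enough, and this is handled in the same manner as in~\cite{BLPRS13}, it introduces at most negligible statistical error. Combining the two cases then gives an expected polynomial-time sampler for $D_{\Z \setminus \{0\}, s}$ for any $s > 0$.
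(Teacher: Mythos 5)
Your proposal is correct, but it takes a genuinely different route from the paper's proof. The paper gives a single unified rejection sampler for all $s$: it reduces to sampling from $D_{\Z^+, s}$, handles $z=1$ as an explicit atom with probability $e^{-\pi/s^2}/Z$, and otherwise draws $x$ from the one-dimensional continuous Gaussian restricted to $(1,\infty)$, sets $y := \lceil x \rceil$, and accepts with probability $e^{-\pi(y^2-x^2)/s^2}$; the acceptance probability on each round is shown to be $\rho_s(\Z^+)/Z \geq 1/2$. You instead split by the size of $s$: for $s \geq 1$ you reject zeros from the \cite{BLPRS13} sampler for $D_{\Z,s}$ (the rejection rate is bounded away from $1$ since $\rho_s(\Z) \geq 1 + 2e^{-\pi}$), and for $s < 1$ you use a fully discrete two-sided geometric proposal with ratio $\alpha = e^{-3\pi/s^2}$ and acceptance probability $e^{-\pi(k^2 - 3|k| + 2)/s^2}$, achieving acceptance rate at least $1 - \alpha \geq 1 - e^{-3\pi}$. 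Your bookkeeping checks out: $m^2 - 3m + 3$ is minimized at $m \in \{1,2\}$ and $m^2 - 3m + 2 = (m-1)(m-2) \geq 0$, so the ratio bound and acceptance probability are correct. The paper's approach avoids the case split at the cost of sampling from a truncated continuous Gaussian (and rounding up), whereas yours stays entirely discrete for small $s$ and simply reuses the existing $D_{\Z,s}$ sampler when $s$ is large; both give constant per-round success probability and hence expected polynomial time, and both defer the usual finite-precision concerns to the same standard argument as in \cite{BLPRS13}.
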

\begin{proof}
We describe an algorithm that samples from $D_{\Z^+, s}$, which is clearly sufficient.
Let $Z := e^{-\pi/s^2} + \int_1^\infty e^{-\pi x^2/s^2} {\rm d} x$. The algorithm outputs $1$ with probability $e^{-\pi/s^2}/Z$. Otherwise, it samples $x$ from the one-dimensional continuous Gaussian with parameter $s$ restricted to the interval $(1,\infty)$. Let $y := \ceil{x}$. With probability $e^{-\pi(y^2-x^2)/s^2}$, the algorithm outputs $y$. Otherwise, it repeats.

On a single run of the algorithm, for any integer $z \geq 2$, the probability that the algorithm outputs $z$ is
\[
\frac{1}{Z} \cdot \int_{z-1}^z e^{-\pi x^2/s^2}\cdot e^{-\pi(z^2-x^2)/s^2} {\rm d}x = \frac{e^{-\pi z^2/s^2}}{Z}
\;.
\]
And, the probability that the algorithm outputs $1$ is of course $e^{-\pi/s^2}/Z$. So, the algorithm outputs the correct distribution. 

It remains to bound the expected running time. After a single run, the algorithm outputs an integer with probability
\[
\frac{\rho_s(\Z^+)}{Z} = \frac{\rho_s(\Z^+)}{e^{-\pi/s^2} + \int_1^\infty e^{-\pi x^2/s^2} {\rm d}x } \geq \frac{1}{2}
\; .
\]
It follows that it runs in expected polynomial time.
\end{proof}

Furthermore, we will need to efficiently compute $\rho_s(\Z \setminus \{0 \})$ for arbitrary $s$. Brakerski et al. give a simple algorithm for this problem as well. (Here, we ignore the bit-level concerns of what it means to \scarequotes{efficiently compute} a real number, as this will not be an issue for us.)

\begin{claim}
\label{clm:computerhoZ}
There is an efficient algorithm that computes $\rho_s(\Z \setminus \{0 \})$.
\end{claim}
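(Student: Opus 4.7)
The plan is to compute $\rho_s(\Z \setminus \{0\}) = 2 \sum_{k = 1}^\infty e^{-\pi k^2/s^2}$ by truncating the series. The key observation is that the tail decays super-geometrically once $k$ exceeds roughly $s$, so only a polynomial number of terms matter---but we need to ensure our effective parameter is at most $1$, which we do via Poisson summation for large $s$.

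Concretely, suppose we want the answer to within relative error $2^{-k}$. First consider the case $s \leq 1$. Here, for $j \geq 1$ the ratio of consecutive terms satisfies $e^{-\pi(j+1)^2/s^2}/e^{-\pi j^2/s^2} = e^{-\pi(2j+1)/s^2} \leq e^{-\pi}$, so the series $\sum_{j \geq 1} e^{-\pi j^2/s^2}$ is dominated by a geometric series with ratio $e^{-\pi}$ and the first term $e^{-\pi/s^2}$. Truncating at $N$ terms therefore incurs an additive error at most $e^{-\pi N^2/s^2}/(1-e^{-\pi})$, and the full sum is at least $e^{-\pi/s^2}$, so choosing $N = O(\sqrt{k})$ suffices to achieve relative error $2^{-k}$.

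For the case $s > 1$, we invoke Poisson summation, $\rho_s(\Z) = s \cdot \rho_{1/s}(\Z)$, which rewrites as
\[
\rho_s(\Z \setminus \{0\}) = s - 1 + s \cdot \rho_{1/s}(\Z \setminus \{0\})
\; .
\]
Since $1/s < 1$, we have reduced the computation to the previous case with parameter $1/s$, which we can compute to arbitrary precision using the truncation above. The final algorithm is therefore: check whether $s \leq 1$ or $s > 1$; in the first case, compute $2 \sum_{j=1}^N e^{-\pi j^2/s^2}$ for $N = O(\sqrt{k})$; in the second case, recurse on parameter $1/s$ and combine via the identity above.

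The only potentially delicate step is the tail bound and precision bookkeeping for the truncated series, but this is a routine calculation using the geometric decay of the terms. As the paper explicitly waives bit-level precision issues for this claim, there is no serious obstacle to the argument.
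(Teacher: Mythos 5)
The paper itself does not prove this claim; it defers it to~\cite{BLPRS13} with the explicit remark that bit-level precision concerns are being waived. Your proof is correct and is almost certainly the intended argument: truncate the series directly when the effective parameter is at most $1$, and use Poisson summation $\rho_s(\Z) = s\,\rho_{1/s}(\Z)$ to reduce to that regime when $s>1$. The Poisson step is not merely cosmetic but genuinely necessary: for $s>1$ the terms $e^{-\pi j^2/s^2}$ do not start decaying rapidly until $j \gtrsim s$, and since $s$ is a rational of polynomial bit length it can be exponentially large, so direct truncation would need exponentially many terms. After the reduction the effective parameter is $\leq 1$, consecutive terms drop by at least $e^{-\pi}$, and $O(\sqrt{k+\log s})$ terms give relative error $2^{-k}$, which is polynomial. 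One small point worth spelling out in the bookkeeping you gloss over: after applying $\rho_s(\Z \setminus \{0\}) = s-1 + s\,\rho_{1/s}(\Z \setminus \{0\})$, an additive error $\delta$ in the recursive call becomes additive error $s\delta$, so the precision target for the recursive call must be tightened by a factor of $s$; since $\log s$ is polynomially bounded this costs only polynomially many extra bits and does not affect the conclusion, but it should be stated.
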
}
{}

\full{
\subsection{\texorpdfstring{Lattice vectors mod $p$ and $\Z_p^n$}{Lattice vectors mod p and Zpn}}
\label{sec:sparseprelims}

Our primary technical tool will be lattice sparsification, in which we consider the sublattice 
\[\lat' := \{ \vec{x} \in \lat\ :\ \inner{\vec{z}, \basis^{-1}\vec{x}} \equiv 0 \bmod p \}
\; ,
\] where $p$ is some prime, $\vec{z} \in \Z_p^n$ is uniformly random, and $\basis$  is a basis of the lattice $\lat \subset \Q^n$. As such, we will need some lemmas concerning the behavior of lattice vectors mod $p\lat $. We first simply note that we can compute $\lat'$ efficiently.

\begin{claim}
\label{clm:efficientsparse}
There is a polynomial-time algorithm that takes as input a basis $\basis$ for a lattice $\lat \subset \R^n$, a number $p \in \Z^+$, and a vector $\vec{z} \in \Z_p^n$ and outputs a basis $\basis'$ for 
\[\lat' := \{ \vec{x} \in \lat\ :\ \inner{\vec{z}, \basis^{-1}\vec{x}} \equiv 0 \bmod p \}
\; .
\]
\end{claim}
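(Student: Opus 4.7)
The claim is essentially a routine unpacking of the definition, so the plan is short. First, I would translate the defining condition into a statement purely about integer coordinate vectors. For every $\vec{x} \in \lat$, write $\vec{x} = \basis \vec{y}$ where $\vec{y} = \basis^{-1}\vec{x} \in \Z^n$. Then the defining condition $\inner{\vec{z}, \basis^{-1}\vec{x}} \equiv 0 \bmod p$ becomes simply $\inner{\vec{z}, \vec{y}} \equiv 0 \bmod p$. So, setting
\[
L \; := \; \bigl\{ \vec{y} \in \Z^n \ : \ \inner{\vec{z}, \vec{y}} \equiv 0 \bmod p \bigr\} \; \subseteq \; \Z^n \; ,
\]
the sublattice we are after is exactly $\lat' = \basis L$. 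Therefore it suffices to compute (in polynomial time) a basis $C \in \Z^{n \times n}$ of $L$ and then return $\basis' := \basis C$.

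Second, I would observe that $L$ is simply the kernel of the group homomorphism $\phi : \Z^n \to \Z/p\Z$ defined by $\phi(\vec{y}) = \inner{\vec{z}, \vec{y}} \bmod p$. Since $p\Z^n \subseteq L \subseteq \Z^n$, the sublattice $L$ has index dividing $p$ in $\Z^n$, and computing a basis of such a kernel is a standard task solvable in polynomial time via Hermite Normal Form. For concreteness, one can compute the HNF of the $(n+1) \times n$ integer matrix whose rows are $\vec{z}$ and $p \vec{e}_1, \ldots, p \vec{e}_n$; the nonzero rows of the HNF span the image of $\phi$ inside $\Z/p\Z$, and running the standard kernel algorithm on $\phi$ (equivalently, on $\vec{z}$ modulo $p$) returns an integer basis $C$ of $L$ of bit length polynomial in $n$ and $\log p$.

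Finally, outputting $\basis' = \basis C$ takes one matrix multiplication, and the bit length of $\basis'$ is polynomial in the bit length of the input since $\basis$ is given explicitly and $C$ has entries bounded by $p$. Correctness is immediate from the first step: $\basis C$ is a basis of $\basis L = \lat'$, because left-multiplication by the invertible rational matrix $\basis$ is a bijection from $\Z^n$ onto $\lat$ that carries bases of $L$ to bases of $\basis L$. There is no real obstacle here; the only minor bookkeeping point is to confirm (via standard HNF bit-complexity bounds) that all intermediate computations, and in particular the output matrix, have polynomial bit length in the input size.
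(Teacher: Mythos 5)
Your proof is correct in its overall structure but takes a genuinely different route from the paper's. You work in the \emph{primal}: you pass to coordinates $\vec{y} = \basis^{-1}\vec{x} \in \Z^n$, identify $\lat' = \basis L$ where $L = \{\vec{y} \in \Z^n : \inner{\vec{z},\vec{y}} \equiv 0 \bmod p\}$ is the integer kernel of the map $\phi:\Z^n \to \Z/p\Z$, appeal to standard HNF/Smith-normal-form machinery to produce a basis $C$ of $L$, and return $\basis C$. The paper instead works through the \emph{dual}: it computes the dual basis $\basis^{-T} = (\vec{b}_1^*,\ldots,\vec{b}_n^*)$, observes that $(\lat')^*$ is spanned by $\vec{b}_1^*,\ldots,\vec{b}_{n-1}^*$ together with the extra vector $\frac{1}{p}\sum z_i \vec{b}_i^*$ (after assuming WLOG $z_n \neq 0$), and dualizes back by inverting and transposing. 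The paper's route yields an explicit closed-form output matrix with no appeal to a kernel-computation subroutine, which is slightly slicker; your route is more generic and requires no case analysis on which coordinate of $\vec{z}$ is nonzero. Both are correct polynomial-time algorithms.

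One minor imprecision in your ``for concreteness'' sentence: the HNF of the $(n+1)\times n$ matrix with rows $\vec{z}, p\vec{e}_1,\ldots,p\vec{e}_n$ produces a basis of the row-span lattice $\Z\vec{z} + p\Z^n$, which equals $p\,L^*$ rather than $L$ itself, so that particular matrix does not directly hand you $C$. The standard fix is to compute the integer kernel of the $1\times(n+1)$ matrix $[\vec{z}^T \mid {-p}]$ and project onto the first $n$ coordinates, or to Gaussian-eliminate over $\Z/p\Z$ and lift; either is polynomial time, so your headline claim stands, but the concrete recipe as written computes the wrong object.
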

\begin{proof}
On input $\basis = (\vec{b}_1,\ldots, \vec{b}_n)$, $p \in \Z^+$, and $\vec{z} = (z_1,\ldots, z_n) \in \Z_p^n$, if $\vec{z} = \vec0$, the algorithm simply outputs $\basis$. Otherwise, we assume without loss of generality that $z_n \neq 0$. The algorithm then computes $\basis^{-T} = (\vec{b}_1^*, \ldots, \vec{b}_n^*)$. It sets 
\[
\hat{\basis} := \Big(\vec{b}_1^*, \ldots, \vec{b}_{n-1}^*, \frac{1}{q}\sum z_i \vec{b}_i^* \Big)
\; .
\] 
Finally, it outputs $\basis' := \hat{\basis}^{-T}$.

A quick computation shows that $\hat{\basis}$ has full rank and that $\basis'$ is indeed a basis for $\lat'$.
\end{proof}

Since we will only be concerned with the coordinates of the vectors mod $p$, it will suffice to work over $\Z_p^n$.

\begin{lemma}
\label{lem:almostindependent}
For any prime $p$ and collection of vectors $\vec{x}, \vec{v}_1,\ldots, \vec{v}_N \in \Z_p^n \setminus \{\vec0 \}$ such that $\vec{x}$ is not a scalar multiple of any of the $\vec{v}_i$, we have
\[
\frac{1}{p} - \frac{N}{p^2} \leq \Pr\big[\inner{\vec{z}, \vec{x}} \equiv 0 \bmod p \text{ and } \inner{\vec{z}, \vec{v}_i} \not\equiv 0 \bmod p \ \forall i \big] \leq \frac{1}{p}
\; ,
\]
where $\vec{z} $ is sampled uniformly at random from $\Z_p^n$.
\end{lemma}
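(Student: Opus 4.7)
The plan is to establish the two inequalities separately, with the upper bound being immediate and the lower bound following from a simple union bound that exploits the linear independence of $\vec{x}$ from each $\vec{v}_i$ over $\F_p$.

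For the upper bound, I would simply note that the event in question is contained in the event $\{\inner{\vec{z}, \vec{x}} \equiv 0 \bmod p\}$. Since $\vec{x} \neq \vec0$ in $\Z_p^n$ and $p$ is prime, the linear form $\vec{z} \mapsto \inner{\vec{z},\vec{x}}$ is a nonzero linear functional on $\F_p^n$, so its kernel has size exactly $p^{n-1}$ and the probability is exactly $1/p$.

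For the lower bound, write $E := \{\inner{\vec{z},\vec{x}} \equiv 0\}$ and $F_i := \{\inner{\vec{z},\vec{v}_i} \equiv 0\}$. The target event is $E \setminus \bigcup_i F_i$, so by a union bound
\[
\Pr\Big[E \setminus \bigcup_i F_i\Big] \;\geq\; \Pr[E] - \sum_{i=1}^N \Pr[E \cap F_i] \;=\; \frac{1}{p} - \sum_{i=1}^N \Pr[E \cap F_i].
\]
It remains to show $\Pr[E \cap F_i] = 1/p^2$ for each $i$. Since $\vec{x}$ and $\vec{v}_i$ are both nonzero and $\vec{x}$ is not a scalar multiple of $\vec{v}_i$, the vectors $\vec{x}, \vec{v}_i$ are linearly independent over $\F_p$ (otherwise there would be a nontrivial relation $\alpha \vec{x} + \beta \vec{v}_i = \vec0$ with $\beta \neq 0$, making $\vec{x}$ a scalar multiple of $\vec{v}_i$, or $\alpha \neq 0, \beta = 0$, making $\vec{x} = \vec0$). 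Hence the two linear constraints $\inner{\vec{z},\vec{x}} \equiv 0$ and $\inner{\vec{z},\vec{v}_i} \equiv 0$ cut out a subspace of $\F_p^n$ of codimension exactly $2$, which has size $p^{n-2}$ and thus probability $1/p^2$ under the uniform distribution on $\Z_p^n$. Summing over $i$ gives $\Pr[E \cap \bigcup_i F_i] \leq N/p^2$ and the stated lower bound follows.

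There is no real obstacle here; the only subtlety worth double-checking is that the hypothesis \emph{``$\vec{x}$ is not a scalar multiple of $\vec{v}_i$''} plus nonvanishing of $\vec{x}$ and $\vec{v}_i$ genuinely forces linear independence over $\F_p$, which is immediate as noted above.
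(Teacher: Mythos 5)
Your proof is correct and follows essentially the same approach as the paper: the paper phrases the lower bound in terms of set cardinalities ($|A\setminus B|\geq |A|-\sum_i|A\cap B_i|$ with $A\cap B_i$ a codimension-$2$ subspace of size $p^{n-2}$), which is precisely your union-bound argument in probabilistic language. The only cosmetic difference is that you spell out why non-proportionality plus nonvanishing forces $\F_p$-linear independence, a point the paper leaves implicit.
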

\begin{proof}
For the upper bound, it suffices to note that, since $\vec{x}$ is non-zero, $\inner{\vec{z}, \vec{x}}$ is uniformly distributed over $\Z_p$. Therefore, $\Pr[\inner{\vec{z}, \vec{x}} \equiv 0 \bmod p] = 1/p$.
For the lower bound, note that $A := \{ \vec{y} \in \Z_p^n \ : \ \inner{\vec{y}, \vec{x}} \equiv 0 \bmod p \}$ and $B_i := \{ \vec{y} \in \Z_p^n \ : \ \inner{\vec{y}, \vec{v}_i} \equiv 0 \bmod p \}$ are distinct subspaces of dimension $n-1$. Therefore, $A \cap B_i$ is a subspace of dimension $n-2$ with $p^{n-2}$ elements. Let $B := \bigcup B_i$. It follows that
\begin{align*}
\Pr\big[\inner{\vec{z}, \vec{x}} \equiv 0 \bmod p \text{ and } \inner{\vec{z}, \vec{v}_i} \not\equiv 0 \bmod p \big] &= \frac{|A \setminus B|}{|\Z_p^n |}\\
&\geq \frac{|A| - \sum_i |A \cap B_i|}{|\Z_p^n|}\\
&= \frac{p^{n-1} - N p^{n-2}}{p^n}\\
&= \frac{1}{p} -\frac{N}{p^2}
\; .
\end{align*}
\end{proof}

\begin{corollary}
\label{cor:shiftedindependence}
For any prime $p$, collection of vectors $\vec{v}_1,\ldots, \vec{v}_N \in \Z_p^n$, and $\vec{x} \in \Z_p^n$ with $\vec{x} \neq \vec{v}_i$ for any $i$, we have
\[
\frac{1}{p} - \frac{N}{p^2} - \frac{N}{p^{n-1}} \leq \Pr\big[\inner{\vec{z}, \vec{x} + \vec{c}} \equiv 0 \bmod p \text{ and } \inner{\vec{z}, \vec{v}_i + \vec{c}} \not\equiv 0 \bmod p \ \forall i \big] \leq \frac{1}{p} + \frac{1}{p^{n}}
\; ,
\]
where $\vec{z}$ and $\vec{c} $ are sampled uniformly and independently at random from $\Z_p^n$.
\end{corollary}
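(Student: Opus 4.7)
My plan is to derive the corollary from Lemma~\ref{lem:almostindependent} by conditioning on $\vec{c}$ and then bounding the probability that $\vec{c}$ is ``bad'' in the sense that the shifted vectors $\vec{x}+\vec{c}, \vec{v}_1+\vec{c}, \ldots, \vec{v}_N+\vec{c}$ fail the hypotheses of that lemma. The upper bound will be essentially immediate, while the lower bound will reduce to a counting argument.

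For the upper bound, I would forget about the $\vec{v}_i$ constraints entirely and just note that $\Pr_{\vec{z}}[\inner{\vec{z}, \vec{x}+\vec{c}} \equiv 0 \bmod p]$ is exactly $1/p$ whenever $\vec{x}+\vec{c} \ne \vec0$ and is at most $1$ otherwise. Since $\Pr_{\vec{c}}[\vec{x}+\vec{c}=\vec0] = 1/p^n$, averaging over $\vec{c}$ gives the bound $1/p + 1/p^n$.

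For the lower bound, I would call $\vec{c}$ \emph{good} if $\vec{x}+\vec{c}$ and all of the $\vec{v}_i+\vec{c}$ are non-zero and $\vec{x}+\vec{c}$ is not a scalar multiple of any $\vec{v}_i+\vec{c}$; otherwise $\vec{c}$ is \emph{bad}. The main step is to count bad $\vec{c}$: the event $\vec{x}+\vec{c}=\vec0$ contributes one vector, the events $\vec{v}_i+\vec{c}=\vec0$ contribute at most $N$ vectors, and for each $i$ and each $\alpha \in \Z_p \setminus \{0,1\}$ the equation $\vec{x}+\vec{c} = \alpha(\vec{v}_i+\vec{c})$ has the unique solution $\vec{c} = (\vec{x}-\alpha\vec{v}_i)/(\alpha-1)$; the case $\alpha=1$ is ruled out by the hypothesis $\vec{x} \ne \vec{v}_i$, and $\alpha=0$ was already counted. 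So there are at most $1 + N + N(p-2) \leq Np$ bad $\vec{c}$, which means $\Pr[\vec{c} \text{ bad}] \leq N/p^{n-1}$.

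Conditioned on $\vec{c}$ being good, Lemma~\ref{lem:almostindependent} applies to the shifted vectors and gives conditional probability at least $1/p - N/p^2$; conditioned on $\vec{c}$ being bad, we only use that the probability is non-negative. Combining,
\[
\Pr \ \geq\ (1 - N/p^{n-1})(1/p - N/p^2) \ \geq\ \frac{1}{p} - \frac{N}{p^2} - \frac{N}{p^{n-1}},
\]
which is the claimed lower bound. There is no real obstacle here; the only thing requiring any care is the scalar-multiple count, where one must check that $\alpha=1$ is indeed excluded by the hypothesis $\vec{x} \ne \vec{v}_i$ and that $\alpha=0$ coincides with the already-counted $\vec{x}+\vec{c}=\vec0$ case, so that no bad $\vec{c}$ is double-counted in a way that weakens the bound.
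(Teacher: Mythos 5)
Your proof is correct and matches the paper's approach: both reduce to Lemma~\ref{lem:almostindependent} by bounding the probability that the random shift $\vec{c}$ makes some shifted vector zero or makes $\vec{x}+\vec{c}$ a scalar multiple of some $\vec{v}_i+\vec{c}$, counting at most $Np$ bad values of $\vec{c}$ (about $p-1$ for each $i$ via the equation $\vec{c}=(\alpha\vec{v}_i-\vec{x})/(1-\alpha)$, with $\alpha=1$ excluded by $\vec{x}\neq\vec{v}_i$). The only cosmetic difference is that you combine multiplicatively, $(1-N/p^{n-1})(1/p-N/p^2)$, while the paper combines via union bound; these give the same final bound, and the sign caveats that the multiplicative form raises only occur when $N\ge p$, where the claimed lower bound is non-positive and hence vacuous anyway.
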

\begin{proof}
For the upper bound, it suffices to note that $\Pr[\inner{\vec{z}, \vec{x} + \vec{c}} \equiv 0 \bmod p] \leq \frac{1}{p} + \frac{1}{p^n}$. 

Turning to the lower bound, note that for any $i$, we have $\Pr[\vec{v}_i + \vec{c} = \vec0] = 1/p^n$. By union bound, the probability that $\vec{v}_i + \vec{c} = \vec0$ for any $i$ is at most $N/p^n$. Now, fix $i$, and note that if there exists some $\alpha \in \Z_p \setminus \{1 \}$ such that $\alpha(\vec{v}_i + \vec{c}) = \vec{x} + \vec{c}$, then we must have
\[
\vec{c} = \frac{\alpha \vec{v}_i - \vec{x}}{1-\alpha}
\; .
\]
There are therefore at most $p-1$ values for $\vec{c}$ that satisfy the above---one for each value of $\alpha$. So, the probability that $\vec{c}$ will satisfy the above equation for any $\alpha$ is at most $(p-1)/p^n$. Taking a union bound over all $i$, we see that the probability that $\vec{x} + \vec{c}$ is a multiple of any of the $\vec{v}_i + \vec{c}$ is at most $N(p-1)/p^n$. The result then follows from Lemma~\ref{lem:almostindependent} and union bound.
\end{proof}}
{}

\subsection{Primitive lattice vectors}

\full{For a lattice $\lat \subset \R^n$, we say that $\vec{x} \in \lat$ is non-primitive in $\lat$ if $\vec{x} = k \vec{y}$ for some $\vec{y} \in \lat$ and $k \geq 2$. Otherwise, $\vec{x}$ is primitive in $\lat$. }{}Let $\lat^{\mathrm{prim}}$ be the set of primitive vectors in $\lat$. For a radius $r > 0$, let $\xi(\lat, r) := |\lat^{\mathrm{prim}} \cap r B_2^n|/2 $ be the number of primitive lattice vectors in a (closed) ball of radius $r$ around the origin (counting $\vec{x}$ and $-\vec{x}$ as a single vector).

\full{We will need the following technical lemma, which shows that relatively short primitive vectors cannot be scalar multiples of each other mod $p$.

\begin{lemma}
\label{lem:nogoodnameforthislemma}
For any lattice $\lat \subset \R^n$ with basis $\basis$, suppose $\vec{x}_1, \vec{x}_2 \in \lat$ are primitive with $\vec{x}_1 \neq \pm \vec{x}_2$ and $\length{\vec{x}_1} \geq \length{\vec{x}_2}$ such that 
\[
\basis^{-1}\vec{x}_1 \equiv \alpha \basis^{-1}\vec{x}_2 \bmod p
\; 
\]
for any number $p \geq 100$ and $\alpha \in \Z_p$. Then, $\xi(\lat, \length{\vec{x}_1}) > p/(20\log p)$.
\end{lemma}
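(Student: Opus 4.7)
The plan is to exploit the hypothesized congruence to produce a two-dimensional sublattice of $\lat$ so dense that standard lattice-point counting forces it to contain many primitive vectors of length at most $L:=\length{\vec{x}_1}$.

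First I would establish two setup facts: $\vec{x}_1$ and $\vec{x}_2$ are linearly independent (from primitivity together with $\vec{x}_1\neq\pm\vec{x}_2$, since any rational dependence would be forced to $\pm 1$), and $\alpha\not\equiv 0\bmod p$ (else $\vec{x}_1\in p\lat$ would contradict primitivity of $\vec{x}_1$, using $p\geq 2$). The congruence then says $\vec{y}:=(\vec{x}_1-\alpha\vec{x}_2)/p$ is a lattice vector, and the rest of the argument would play out inside the rank-$2$ sublattice $\lat':=\Z\vec{x}_2+\Z\vec{y}\subseteq\lat\cap\spn(\vec{x}_1,\vec{x}_2)$.

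Two elementary computations then drive everything. First, $\det(\lat')=|\vec{x}_2\times\vec{y}|=|\vec{x}_2\times\vec{x}_1|/p\leq\length{\vec{x}_1}\length{\vec{x}_2}/p\leq L^2/p$, so $\lat'$ has determinant at most $L^2/p$. Second, $\lat'$ contains both $\vec{x}_2$ and $\vec{x}_1=\alpha\vec{x}_2+p\vec{y}$, which are linearly independent and each of length at most $L$, so $\lambda_2(\lat')\leq L$. Combining these facts with a standard two-dimensional primitive-count estimate (Gauss-circle for lattice points, coupled with Mobius inversion over divisors to isolate primitives) yields at least $cL^2/\det(\lat')\geq cp$ primitive vectors of $\lat'$ in the ball of radius $L$, counted up to sign, for an absolute constant $c>0$.

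To conclude, distinct up-to-sign primitive vectors of the rank-$2$ lattice $\lat'$ lie in distinct lines through the origin, so their primitivizations in the full lattice $\lat$ are distinct primitive vectors of $\lat$, each still of length at most $L$ (primitivization only shortens). Hence $\xi(\lat,L)\geq cp$, which comfortably beats $p/(20\log p)$ for $p\geq 100$. The main obstacle I anticipate is making the constant $c$ in the two-dimensional primitive count explicit and non-asymptotic: the Gauss-circle and Mobius estimates carry lower-order errors that scale with $\log p$, and the $\log p$ factor in the target bound $p/(20\log p)$ is exactly the slack needed to absorb these errors uniformly for all $p\geq 100$, including the borderline regime in which $\lambda_2(\lat')$ is very close to $L$ and the shortest vector $\vec{y}$ is extremely short.
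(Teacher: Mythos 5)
There is a genuine gap in the key counting step. You assert that a rank-$2$ lattice $\lat'$ with $\det(\lat')\leq L^2/p$ and $\lambda_2(\lat')\leq L$ contains $\geq cL^2/\det(\lat')\geq cp$ primitive vectors (up to sign) of length at most $L$, obtained via Gauss-circle counting plus M\"obius inversion. This is false as a general statement about rank-$2$ lattices, and the failure is not a lower-order error absorbable into a $\log p$ slack. Take $\lat' = \Z(\epsilon,0)+\Z(0,L)$ with $\epsilon\to 0$: then $\det(\lat')=\epsilon L$ so $L^2/\det(\lat') = L/\epsilon$ is unbounded, yet the only primitive vectors of $\lat'$ in $LB_2^2$ are $\pm(\epsilon,0)$ and $\pm(0,L)$, i.e.\ exactly two up to sign. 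The reason the M\"obius argument breaks is quantitative: with $N(r)=\pi r^2/\Delta + E(r)$ and $|E(r)|\lesssim r/\lambda_1 + 1$, the error contribution $\sum_d |E(r/d)|\gtrsim r/\lambda_1\approx r\lambda_2/\Delta$ matches or exceeds the main term $6r^2/(\pi\Delta)$ whenever $\lambda_2\approx r$, which is precisely the regime you cannot exclude (you only know $\lambda_2(\lat')\leq L$). So the errors are first-order, not lower-order, and the $p/(20\log p)$ target does not bail you out.

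What saves the lemma, and what the paper exploits, is structural information you do not use: the vector $\vec{x}_1$ has coordinates $(q,p)$ in the basis $(\vec{x}_2,\vec{y})$ of $\lat'$, with $|q|\leq p/2$ and $\gcd(q,p)=1$, and it lies in the ball. This rules out the degenerate picture above, where the only primitives have $\vec{y}$-coordinate $0$ or $\pm 1$, and it lets one construct many primitive vectors \emph{by hand}: the paper writes $i\vec{y}+k_i\vec{x}_2 = (i/p)\vec{x}_1 + (k_i - iq/p)\vec{x}_2$, chooses $k_i$ to make both coefficients at most $1/2$ in absolute value (so the norm is at most $\|\vec{x}_1\|$), and restricts $i$ to primes in $[\lceil p/4\rceil,\lfloor p/2\rfloor]$ so that $\gcd(i,k_i)=1$ and the resulting vector is primitive in $\lat'$. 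The case $q=\pm 1$ is handled separately and more directly (all of $i\vec{y}+q\vec{x}_2$, $i=0,\dots,p-1$, are primitive and short). The prime-counting step is exactly what produces the $\log p$ loss; it is not an artifact of Gauss-circle error. Your setup steps (linear independence, $\alpha\neq 0$, the definition of $\vec{y}$, the determinant bound, and primitivization back into $\lat$) are all fine and match the paper's framing, but the counting core needs to be replaced by an explicit construction that uses the coordinate $p$ of $\vec{x}_1$ along $\vec{y}$, not just the determinant of $\lat'$.
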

\begin{proof}
We assume $\alpha \neq 0$, since otherwise $\vec{x}_1$ is not even primitive.
So, we have that $\vec{x}_1 - q \vec{x}_2 \in p \lat \setminus \{ \vec0\}$ for some integer $q \equiv \alpha \bmod p $ with $0 < |q| \leq p/2$. Let $\vec{y} := (\vec{x}_1 - q \vec{x}_2)/p \in \lat$ and note that $\vec{y}$ is not a multiple of $\vec{x}_2$. It suffices to find at least $\ceil{p/(20 \log p)}$ primitive vectors in the lattice spanned by $\vec{y}$ and $\vec{x}_2$ that are at least as short as $\vec{x}_1$.

We consider two cases. If $q = \pm 1$, then for $i = 0, \ldots, p-1$, the vectors $i\vec{y} + q \vec{x}_2$ are clearly primitive in the lattice spanned by $\vec{y}$ and $\vec{x}_2$, and we have
\[
\length{i\vec{y} + q \vec{x}_2} = \length{i\vec{x}_1 + q(p-i)\vec{x}_2}/p \leq \length{\vec{x}_1}
\; ,
\]
as needed.

Now, suppose $|q| > 1$. Then, for $i = \ceil{p/4},\ldots, \floor{p/2}$, let $k_i$ be an integer such that $|k_i - iq/p| \leq 1/2$ and $0 < |k_i| < i$. (Note that such an integer exists, since $1/2 \leq |iq/p| \leq i/2$). Then,
\begin{align*}
\length{i\vec{y} + k_i\vec{x}_2} &= \length{i \vec{x}_1/p + (k_i -iq/p) \vec{x}_2} \leq \length{\vec{x}_1} 
\;.
\end{align*}
When $i$ is prime, then since $0 < |k_i| < i$, we must have $\gcd(i, k_i) = 1$. Therefore, the vector $i\vec{y} + k_i\vec{x}_2$ must be primitive in the lattice spanned by $\vec{y}$ and $\vec{x}_2$ when $i$ is prime. It follows from a suitable effective version of the Prime Number Theorem that there are at least $\ceil{p/(20 \log p)}$ primes between $\ceil{p/4}$ and $\floor{p/2}$ (see, e.g., \cite{rosser41}), as needed.

\end{proof}

We next show that we can find many primitive lattice vectors in a suitably large ball around $\vec0$.}
{}

\begin{lemma}
\label{lem:notdegenerate}
For any lattice $\lat \subset \R^n$ and radius $r \geq \lambda_2(\lat)$,
\[
\xi(\lat, r) > \frac{\sqrt{r^2-\lambda_2(\lat)^2}}{\lambda_1(\lat)} + \Big\lfloor \frac{r-\lambda_2(\lat)}{\lambda_1(\lat)} \Big\rfloor
\; .
\]
\end{lemma}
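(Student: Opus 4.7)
The plan is to exhibit an explicit family of primitive lattice vectors of length at most $r$. Let $\vec{u}\in\lat$ achieve $\lambda_1(\lat)$, and let $\vec{v}\in\lat$ be a shortest vector linearly independent of $\vec{u}$, so $\length{\vec{v}}=\lambda_2(\lat)$. Consider the two-dimensional saturated sublattice $\lat_0 := \lat\cap\spn(\vec{u},\vec{v})$. My first goal is to produce a basis $\vec{b}_1:=\vec{u},\, \vec{b}_2$ of $\lat_0$ with $\length{\vec{b}_2}=\lambda_2(\lat)$ and $|\langle \vec{b}_1,\vec{b}_2\rangle|\leq\lambda_1(\lat)^2/2$. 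Once this is done, saturation of $\lat_0$ in $\lat$ guarantees that $\vec{b}_1,\vec{b}_2$ extends to a full basis of $\lat$; in that basis $\vec{b}_1=\vec{e}_1$ and $k\vec{b}_1+\vec{b}_2=k\vec{e}_1+\vec{e}_2$ have gcd of coordinates equal to $1$, so each is primitive in $\lat$, and they are pairwise distinct up to sign.

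To build $\vec{b}_2$, I would take a shortest vector in $\lat_0\setminus\Z\vec{u}$. Its length equals $\lambda_2(\lat_0)=\lambda_2(\lat)$, and minimality within its coset forces $|\langle \vec{b}_2,\vec{u}\rangle|\leq\length{\vec{u}}^2/2$. The main technical step---and the chief obstacle---is verifying that $\vec{u},\vec{b}_2$ actually spans $\lat_0$. Fixing any basis $(\vec{u},\vec{y})$ of $\lat_0$ (which exists because $\vec{u}$ is primitive in $\lat_0$), one writes $\vec{b}_2=p\vec{u}+q\vec{y}$ with $q\neq 0$, and the plan is to show $|q|=1$. Let $\pi$ denote orthogonal projection onto $\vec{u}^\perp$, so $\length{\pi(\vec{b}_2)}=|q|\length{\pi(\vec{y})}$. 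Combining $\length{\vec{b}_2}^2\geq q^2\length{\pi(\vec{y})}^2$ with $\length{\vec{b}_2}^2\leq \length{\pi(\vec{y})}^2+\length{\vec{u}}^2/4$ (the latter because $\vec{b}_2$ is at most as long as the shortest element of the coset $\vec{y}+\Z\vec{u}\subset\lat_0\setminus\Z\vec{u}$) yields $(q^2-1)\length{\pi(\vec{y})}^2\leq \length{\vec{u}}^2/4$. If $|q|\geq 2$ this forces $\length{\pi(\vec{y})}\leq \length{\vec{u}}/\sqrt{12}$, so the shortest element of $\vec{y}+\Z\vec{u}$ has length at most $\length{\vec{u}}/\sqrt{3}<\lambda_1(\lat)$, a nonzero lattice vector contradicting the minimality of $\length{\vec{u}}$.

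It remains to count. After possibly replacing $\vec{b}_1$ by $-\vec{b}_1$, assume $\gamma:=\langle \vec{b}_1,\vec{b}_2\rangle\geq 0$. For each integer $j\geq 1$,
\[
\length{-j\vec{b}_1+\vec{b}_2}^2 \;=\; j^2\lambda_1^2 - 2j\gamma + \lambda_2^2 \;\leq\; j^2\lambda_1^2 + \lambda_2^2,
\]
so $-j\vec{b}_1+\vec{b}_2\in rB_2^n$ whenever $j\leq \sqrt{r^2-\lambda_2^2}/\lambda_1$, contributing $\big\lfloor \sqrt{r^2-\lambda_2^2}/\lambda_1 \big\rfloor$ primitive vectors. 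For each $k\geq 1$, the triangle inequality gives $\length{k\vec{b}_1+\vec{b}_2}\leq k\lambda_1+\lambda_2$, so $k\vec{b}_1+\vec{b}_2\in rB_2^n$ whenever $k\leq (r-\lambda_2)/\lambda_1$, contributing $\big\lfloor (r-\lambda_2)/\lambda_1 \big\rfloor$ more. Together with $\vec{b}_1$ and $\vec{b}_2$ themselves,
\[
\xi(\lat,r) \;\geq\; 2 + \Big\lfloor \frac{\sqrt{r^2-\lambda_2^2}}{\lambda_1}\Big\rfloor + \Big\lfloor \frac{r-\lambda_2}{\lambda_1}\Big\rfloor \;>\; \frac{\sqrt{r^2-\lambda_2^2}}{\lambda_1} + \Big\lfloor \frac{r-\lambda_2}{\lambda_1}\Big\rfloor,
\]
where the final strict inequality uses $\lfloor x\rfloor+2>x$ for every real $x$.
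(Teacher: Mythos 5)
Your proof is correct and follows the same construction as the paper: produce a two-dimensional sublattice from vectors achieving $\lambda_1(\lat)$ and $\lambda_2(\lat)$, and count vectors of the form $k\vec{b}_1 + \vec{b}_2$ inside the ball, with identical tallies. The only substantive difference is that the paper merely asserts the vectors are "primitive in the lattice generated by $\vec{v}_1, \vec{v}_2$" and leaves implicit the fact that this sublattice is saturated in $\lat$ (equivalently, that two vectors achieving the successive minima of a planar lattice form a basis of it); you supply that standard two-dimensional argument explicitly, closing a gap the paper glosses over.
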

\full{\begin{proof}
Let $\vec{v}_1 , \vec{v}_2 \in \lat$ with $\length{\vec{v}_i} = \lambda_i(\lat)$ and $\inner{\vec{v}_1, \vec{v}_2} \geq 0$. Then, for $k = 0, \ldots, \floor{\sqrt{r^2 - \lambda_2(\lat)^2}/\lambda_1(\lat)}$,
\[
\length{\vec{v}_2 - k\vec{v}_1}^2 = \lambda_2(\lat)^2 + k^2 \lambda_1(\lat)^2 - 2k \inner{\vec{v}_1, \vec{v}_2} \leq r^2
\; .
\]
Similarly, for $k = 1, \ldots, \floor{(r-\lambda_2(\lat))/\lambda_1(\lat)}$,
\[
\length{\vec{v}_2 + k \vec{v}_1} \leq \lambda_2(\lat) + k \lambda_1(\lat) \leq r
\]
The result follows by noting that all of these vectors are distinct and primitive in the lattice generated by $\vec{v}_1, \vec{v}_2$ (as is $\vec{v}_1$).
\end{proof}}{}

\full{\subsection{Probability}

We will also need the Chernoff-Hoeffding bound~\cite{hoeffding}.

\begin{lemma}[Chernoff-Hoeffding bound]
\label{lem:chernoff}
Let $X_1, \ldots, X_N $ be independent and identically distributed random variables with $0 \leq X_i \leq 1$ and $\overline{X} := \expect[X_i]$. Then, for $s > 0$
\[
\Pr\Big[N\overline{X} - \sum X_i \geq s \Big] \leq e^{-s^2/N}
\; ,
\]
and
\[
\Pr\Big[\sum X_i - N\overline{X} \geq s \Big] \leq e^{-s^2/N}
\; .
\]
\end{lemma}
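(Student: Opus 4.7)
The plan is to apply the standard Chernoff/exponential-moment method. I will first focus on the upper-tail bound $\Pr[\sum X_i - N\overline{X} \geq s] \leq e^{-s^2/N}$; the lower-tail bound then follows by repeating the argument with $X_i$ replaced by $1 - X_i$, which is still a $[0,1]$-valued i.i.d.\ family with mean $1 - \overline{X}$.

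First, for any $\lambda > 0$, I would apply Markov's inequality to the random variable $\exp\bigl(\lambda(\sum X_i - N\overline{X})\bigr)$, writing
\[
\Pr\Bigl[\sum_i X_i - N\overline{X} \geq s\Bigr] \;\leq\; e^{-\lambda s}\,\expect\Bigl[e^{\lambda(\sum_i X_i - N\overline{X})}\Bigr] \;=\; e^{-\lambda s}\prod_{i=1}^{N}\expect\bigl[e^{\lambda(X_i - \overline{X})}\bigr],
\]
where the last equality uses that the $X_i$ are independent. This reduces the problem to bounding the moment generating function of a single centered variable $X_i - \overline{X}$, which takes values in an interval of length $1$.

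Next, I would invoke Hoeffding's lemma: if $Y$ is a zero-mean random variable supported in an interval of length $L$, then $\expect[e^{\lambda Y}] \leq e^{\lambda^2 L^2/8}$. The standard proof uses convexity of $e^{\lambda \cdot}$ to dominate $Y$ by a two-point distribution on the endpoints of its support and then Taylor-expands the logarithm of the resulting MGF. Applying this with $L = 1$ to $Y = X_i - \overline{X}$ gives $\expect[e^{\lambda(X_i - \overline{X})}] \leq e^{\lambda^2/8}$ for every $i$.

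Combining the two displays yields $\Pr[\sum X_i - N\overline{X} \geq s] \leq \exp(-\lambda s + N\lambda^2/8)$ for every $\lambda > 0$. Optimizing in $\lambda$ (take $\lambda = 4s/N$) gives the even stronger bound $e^{-2s^2/N}$, from which the claimed $e^{-s^2/N}$ is immediate. The only nontrivial ingredient is Hoeffding's lemma; everything else is bookkeeping. Since this is a very standard result, in practice I would just cite Hoeffding's paper rather than reproducing the proof.
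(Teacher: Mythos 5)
Your proof is correct; it is the standard exponential-moment argument via Hoeffding's lemma, and in fact yields the sharper constant $e^{-2s^2/N}$, which of course implies the stated $e^{-s^2/N}$. The paper itself does not prove this lemma but simply cites Hoeffding's 1963 paper, which is exactly what you say you would do in practice, so your approach matches the paper's.
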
}
{}

\section{DGS to CVP reduction}
\label{sec:DGStoCVP}

\subsection{Sparsify and shift}

We now present the main sparsification result that we require. In particular Theorem~\ref{thm:shiftedsparsification}\full{ (which is immediate from the work done in Section~\ref{sec:sparseprelims}, and is presented in this form here for the reader's convenience)}{} shows the generic behavior of the sparsification procedure. Proposition~\ref{prop:shiftedsparsifier} then applies the theorem to show how sparsification interacts with a CVP oracle.

\begin{theorem}
\label{thm:shiftedsparsification}
For any lattice $\lat \subset \R^n$ with basis $\basis$, prime $p$, and lattice vectors $\vec{x}, \vec{y}_1,\ldots, \vec{y}_N \in \lat$ such that $\basis^{-1}\vec{x} \not\equiv \basis^{-1}\vec{y}_i \bmod p$ for all $i$, we have
\[
\frac{1}{p} - \frac{N}{p^2} - \frac{N}{p^{n-1}}  \leq \Pr[\inner{\vec{z}, \basis^{-1}\vec{x} + \vec{c}} \equiv \vec0  \text{ and } \inner{\vec{z}, \basis^{-1}\vec{y}_i + \vec{c}} \not\equiv 0 \bmod p \ \forall i] \leq 
 \frac{1}{p} + \frac{1}{p^n} \; ,
\]
where $\vec{z}, \vec{c} \in \Z_p^n$ are chosen uniformly and independently at random.
\end{theorem}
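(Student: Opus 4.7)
The plan is to reduce the statement directly to Corollary~\ref{cor:shiftedindependence}, which was proved earlier in the preliminaries. The theorem is really just a repackaging of that corollary in lattice-theoretic language, so the work consists of checking that the hypotheses translate correctly.

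First, I would observe that since $\vec{x}, \vec{y}_1, \ldots, \vec{y}_N$ all lie in $\lat$, the coordinate vectors $\basis^{-1}\vec{x}, \basis^{-1}\vec{y}_1, \ldots, \basis^{-1}\vec{y}_N$ all lie in $\Z^n$. Reducing modulo $p$ therefore gives well-defined elements of $\Z_p^n$, and all inner products appearing in the event are the standard $\Z_p$-valued inner products on $\Z_p^n$. The assumption $\basis^{-1}\vec{x} \not\equiv \basis^{-1}\vec{y}_i \bmod p$ for every $i$ translates precisely to the condition in Corollary~\ref{cor:shiftedindependence} that the distinguished vector differs from each $\vec{v}_i$ in $\Z_p^n$.

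With this identification, Corollary~\ref{cor:shiftedindependence} (applied with its $\vec{x}$ set to $\basis^{-1}\vec{x} \bmod p$ and its $\vec{v}_i$ set to $\basis^{-1}\vec{y}_i \bmod p$) yields exactly the inequalities claimed in the theorem, since $\vec{z}, \vec{c}$ are drawn uniformly and independently from $\Z_p^n$ in both statements.

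There is no real obstacle here, since all the substance sits in the already-established preliminaries: Lemma~\ref{lem:almostindependent} provides the basic unshifted linear-algebra bound via a dimension count of the relevant hyperplanes in $\Z_p^n$, and Corollary~\ref{cor:shiftedindependence} adds a union-bound argument to handle the degenerate shifts $\vec{c}$ for which $\vec{x} + \vec{c}$ becomes a scalar multiple of some $\vec{v}_i + \vec{c}$ or for which some $\vec{v}_i + \vec{c}$ vanishes. The only subtlety worth flagging explicitly in the writeup is that the theorem's hypothesis is phrased as distinctness of residues rather than the stronger non-parallel condition of Lemma~\ref{lem:almostindependent}; this is exactly the strengthening that the random shift $\vec{c}$ buys us in Corollary~\ref{cor:shiftedindependence}, and it is the reason the theorem is stated with a shifted sparsification in the first place.
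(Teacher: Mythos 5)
Your proposal matches the paper's own proof, which is a one-line application of Corollary~\ref{cor:shiftedindependence} to the coordinate vectors $\basis^{-1}\vec{x}$ and $\basis^{-1}\vec{y}_i$; you have simply spelled out the translation (integrality of the coordinates, reduction mod $p$, and the matching of hypotheses) that the paper leaves implicit. Correct and essentially identical in approach.
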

\begin{proof}
Simply apply Corollary~\ref{cor:shiftedindependence} to $\basis^{-1}\vec{x}$ and $\basis^{-1}\vec{y}_i$.
\end{proof}

\begin{proposition}
\label{prop:shiftedsparsifier}
There is a polynomial-time algorithm that takes as input a basis $\basis$ for a lattice $\lat \subset \R^n$  and a prime $p$
and outputs a full-rank sublattice $\lat' \subseteq \lat$ and shift $\vec{w} \in \lat$ such that, for any $\vec{t} \in \R^n$, $\vec{x} \in \lat$ with $N:= |(\lat - \vec{t}) \cap \length{\vec{x} - \vec{t}} \cdot B_2^n| -1  < p$, and any $\problem{CVP}$ oracle,
\[\frac{1}{p} - \frac{N}{p^2} - \frac{N}{p^{n-1}}  \leq \Pr[\problem{CVP}(\vec{t} + \vec{w}, \lat') = \vec{x} + \vec{w}] \leq 
 \frac{1}{p} + \frac{1}{p^n}
\; .
\]
\full{In particular, 
\[
\frac{N}{p} - \frac{N^2}{p^2} - \frac{N^2}{p^{n-1}} \leq \Pr[\dist(\vec{t} + \vec{w}, \lat') \leq \length{\vec{x} - \vec{t}}]  \leq \frac{N}{p} + \frac{N}{p^n}
\; .
\]}{}
\end{proposition}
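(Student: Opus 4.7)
My plan is to exhibit the natural sparsification-with-shift algorithm and reduce the bounds to Theorem~\ref{thm:shiftedsparsification}. Concretely, the algorithm will sample $\vec{z}, \vec{c} \in \Z_p^n$ uniformly and independently at random, set $\vec{w} := \basis \vec{c}$ by viewing $\vec{c}$ as an integer vector in $\{0, 1, \ldots, p-1\}^n$ (so that $\vec{w} \in \lat$), and invoke Claim~\ref{clm:efficientsparse} to produce a basis of $\lat' := \{\vec{y} \in \lat : \inner{\vec{z}, \basis^{-1}\vec{y}} \equiv 0 \bmod p\}$. Polynomial runtime and the full-rank guarantee come directly from the claim.

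For the analysis, I would fix an input pair $(\vec{t}, \vec{x})$ satisfying the hypothesis and enumerate the $N < p$ lattice vectors $\vec{y}_1, \ldots, \vec{y}_N \in \lat \setminus \{\vec{x}\}$ with $\length{\vec{y}_i - \vec{t}} \leq \length{\vec{x} - \vec{t}}$. The heart of the argument is to isolate the ``good event''
\[
E := \{\vec{x} + \vec{w} \in \lat'\} \cap \bigcap_{i=1}^{N} \{\vec{y}_i + \vec{w} \notin \lat'\}.
\]
On $E$, the only vector of $\lat'$ within distance $\length{\vec{x}-\vec{t}}$ of $\vec{t}+\vec{w}$ is $\vec{x}+\vec{w}$ itself, so every CVP oracle is forced to return $\vec{x}+\vec{w}$. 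Translating the membership condition $\vec{y} + \vec{w} \in \lat'$ through the definition of $\lat'$ and $\vec{w} = \basis \vec{c}$ rewrites $E$ as exactly the joint event $\inner{\vec{z}, \basis^{-1}\vec{x} + \vec{c}} \equiv 0$ and $\inner{\vec{z}, \basis^{-1}\vec{y}_i + \vec{c}} \not\equiv 0 \bmod p$ for every $i$, which is bounded below by $\frac{1}{p} - \frac{N}{p^2} - \frac{N}{p^{n-1}}$ by Theorem~\ref{thm:shiftedsparsification}.

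For the matching upper bound I would use a much cruder observation: the oracle can output $\vec{x}+\vec{w}$ only if $\vec{x}+\vec{w} \in \lat'$ in the first place. Applying Corollary~\ref{cor:shiftedindependence} to the single vector $\basis^{-1}\vec{x}$ (or equivalently the upper-bound half of Theorem~\ref{thm:shiftedsparsification} with an empty collection of $\vec{y}_i$'s), this probability is at most $\frac{1}{p} + \frac{1}{p^n}$, giving the stated upper bound.

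The one step that requires care, and that I expect to be the main obstacle, is the hypothesis of Theorem~\ref{thm:shiftedsparsification} that $\basis^{-1}\vec{x} \not\equiv \basis^{-1}\vec{y}_i \bmod p$ for every $i$. If some $\vec{y}_i$ happens to lie in $\vec{x} + p\lat$, then the events $\{\vec{x}+\vec{w} \in \lat'\}$ and $\{\vec{y}_i + \vec{w} \in \lat'\}$ coincide, the event $E$ becomes empty, and the ``$\vec{x}+\vec{w}$ is returned'' probability drops to zero. I would verify that in such degenerate situations the stated lower bound is already nonpositive, so the conclusion remains vacuously true; the $-N/p^2 - N/p^{n-1}$ slack is essentially there to absorb precisely these coincidences, and confirming this is a short geometric/counting check which I expect to be the only non-mechanical part of the argument.
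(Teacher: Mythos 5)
Your proposal matches the paper's proof in its algorithm, its event $E$, and its application of Theorem~\ref{thm:shiftedsparsification}, and it correctly flags the one delicate point: the possibility that some $\vec{y}_i \equiv \vec{x} \pmod{p\lat}$. The paper handles this by showing it simply cannot happen under the hypothesis $N < p$: if $\vec{y}_i - \vec{x} = p\vec{u}$ with $\vec{u} \in \lat\setminus\{\vec0\}$, then the $p+1$ lattice points $\vec{x}, \vec{x}+\vec{u}, \ldots, \vec{x}+p\vec{u} = \vec{y}_i$ all lie on the segment from $\vec{x}$ to $\vec{y}_i$, and by convexity of the ball all of them are within distance $\length{\vec{x}-\vec{t}}$ of $\vec{t}$, forcing $N+1 \geq p+1$, a contradiction. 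Your framing (``in the degenerate case the lower bound is nonpositive, so the claim holds vacuously'') is a slightly roundabout version of the same counting fact, and once you carry out the ``short geometric/counting check'' you describe, you will find the case is simply excluded by the hypothesis rather than needing a vacuity argument. One small but real misstatement: the $-N/p^2 - N/p^{n-1}$ slack is \emph{not} there to absorb the $\vec{y}_i \equiv \vec{x}$ coincidences (those are outright impossible); those terms come from the union bounds over the $N$ ``bad'' subspaces in Lemma~\ref{lem:almostindependent} and over the possibilities $\vec{v}_i + \vec{c} = \vec0$ and $\vec{x}+\vec{c} \parallel \vec{v}_i+\vec{c}$ in Corollary~\ref{cor:shiftedindependence}. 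Aside from that intuition, the argument is the paper's.
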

\full{\begin{proof}
On input $\lat \subset \R^n$ with basis $\basis$ and $p$, the algorithm samples $\vec{z}, \vec{c} \in \Z_p^n$ uniformly and independently at random. It then returns the sublattice
\[
\lat' := \{ \vec{x} \in \lat\ :\ \inner{\vec{z}, \basis^{-1}\vec{x}} \equiv 0 \bmod p\}
\; ,
\]
and the shift $\vec{w} := \basis\vec{c}$.

By Claim~\ref{clm:efficientsparse}, the algorithm can be run in polynomial time.
Let $\vec{y}_1,\ldots, \vec{y}_{N} \in \lat$ be the unique vectors such that $\length{\vec{y}_i - \vec{t}} \leq \length{\vec{x} - \vec{t}}$ with $\vec{y}_i \neq \vec{x}$. 
Note that $\problem{CVP}(\lat', \vec{t} + \vec{w})$ must be $\vec{x} + \vec{w}$ if $\inner{\vec{z}, \basis^{-1} \vec{y}_i + \vec{c}} \not\equiv 0 \bmod p$ for all $i $ \emph{and} $\inner{\vec{z}, \basis^{-1} \vec{x} + \vec{c}} \equiv 0 \bmod p$. We therefore wish to apply Theorem~\ref{thm:shiftedsparsification}, which requires showing that $\basis^{-1}\vec{y}_i \not\equiv \basis^{-1} \vec{x} \bmod p$ for all $i$.

Suppose on the contrary that $\basis^{-1}\vec{y}_i \equiv \basis^{-1} \vec{x} \bmod p$ for some $i$. Then, $\vec{y} := \vec{y}_i - \vec{x} \in p\lat \setminus \{ \vec0\}$, and there are therefore $p+1$ lattice vectors on the line segment between $\vec{y}_i$ and $\vec{x}$ (including the two endpoints). Note that all of these vectors are at least as close to $\vec{t}$ as $\vec{x}$. But, there can be at most $N+1 < p+1$ such vectors, a contradiction. Therefore, we can apply Theorem~\ref{thm:shiftedsparsification}, yielding the result.
\end{proof}}
{}

As a consequence of Proposition~\ref{prop:shiftedsparsifier}, we show that we can use a CVP oracle to sample nearly uniformly from the lattice points in a ball around $\vec{t}$. This relatively straightforward algorithm is the core idea behind our reduction. For simplicity, we provide the algorithm with an estimate of the number of points inside the ball as input. (In the next section, we show how to obtain this estimate using roughly the same techniques.) \full{}{The proof is in the full version.}

\begin{lemma}
\label{lem:shifteduniformsampler}
For any efficiently computable $f(n)$ with $2\leq f(n) \leq \poly(n)$, there is an algorithm with access to a CVP oracle that takes as input a lattice $\lat \subset \Q^n$, shift $\vec{t} \in \Q^n$, radius $r > 0$, and integer $N \geq 1$ and outputs a vector $\vec{y}$ such that, if 
\[
N \leq |\lat \cap (rB_2^n+ \vec{t})| \leq f(n) N
\; ,
\]
then the algorithm runs in expected polynomial time, and for any $\vec{x} \in \lat \cap (rB_2^n+ \vec{t})$, 
\[
\frac{\gamma^{-1}}{ |\lat \cap (rB_2^n+ \vec{t})|} \leq \Pr[\vec{y} = \vec{x}] \leq \frac{\gamma}{|\lat  \cap (rB_2^n+ \vec{t})|}
\; ,
\]
where $\gamma := 1+1/f(n)$.
Furthermore, all of the algorithm's oracle calls are on full-rank sublattices of the input lattice.
\end{lemma}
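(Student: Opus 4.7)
The plan is a straightforward rejection-sampling loop based on Proposition~\ref{prop:shiftedsparsifier}. Let $M := |\lat \cap (rB_2^n + \vec{t})|$, which is unknown to the algorithm but by hypothesis satisfies $N \leq M \leq f(n) N$. Fix a constant $C$ (to be pinned down at the end of the analysis) and compute a prime $p \in [C f(n)^2 N,\, 2 C f(n)^2 N]$; such a prime exists by Bertrand's postulate and is efficiently computable. In each iteration, the algorithm calls the sparsifier of Proposition~\ref{prop:shiftedsparsifier} on $(\basis, p)$ to obtain a random full-rank sublattice $\lat' \subseteq \lat$ and shift $\vec{w} \in \lat$, invokes the CVP oracle once to compute $\vec{y}' := \problem{CVP}(\lat', \vec{t} + \vec{w})$, and sets $\vec{y} := \vec{y}' - \vec{w} \in \lat$. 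If $\vec{y} \in rB_2^n + \vec{t}$, output $\vec{y}$; otherwise, restart.

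For the distribution analysis, fix any $\vec{x} \in \lat \cap (rB_2^n + \vec{t})$. Every lattice vector at distance at most $\length{\vec{x} - \vec{t}} \leq r$ from $\vec{t}$ lies in the ball, so the quantity labelled $N$ in Proposition~\ref{prop:shiftedsparsifier} is at most $M - 1 \leq f(n)N - 1 < p$. The proposition therefore yields
\[
\Pr[E_\vec{x}] \;:=\; \Pr[\vec{y} = \vec{x} \text{ in a single trial}] \;=\; \frac{1 + \delta_\vec{x}}{p},
\qquad |\delta_\vec{x}| \leq \frac{M}{p} + \frac{1}{p^{n-2}} \leq \frac{1}{Cf(n)} + o(1).
\]
Since the events $E_\vec{x}$ for distinct $\vec{x}$ in the ball are mutually exclusive (CVP returns a single vector), the one-trial success probability is
\[
\Pr[S] \;=\; \sum_{\vec{x}' \in \lat \cap (rB_2^n + \vec{t})} \Pr[E_{\vec{x}'}] \;=\; \frac{M}{p}\,(1 + \bar\delta), \qquad |\bar\delta| \leq \frac{1}{Cf(n)} + o(1).
\]
Conditioning on the first successful trial gives
\[
\Pr[\vec{y} = \vec{x}] \;=\; \frac{\Pr[E_\vec{x}]}{\Pr[S]} \;=\; \frac{1}{M} \cdot \frac{1 + \delta_\vec{x}}{1 + \bar\delta},
\]
and taking $C$ large enough forces this ratio into $[\gamma^{-1}, \gamma]$ with $\gamma = 1+1/f(n)$.

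For the running time, $\Pr[S] \geq M/(2p) \geq 1/(4Cf(n)^2) = 1/\poly(n)$, so the expected number of trials is polynomial; each iteration is polynomial time by Claim~\ref{clm:efficientsparse} plus a single CVP call plus a membership check. The sublattice $\lat'$ produced by Proposition~\ref{prop:shiftedsparsifier} is full-rank in $\lat$, so the final condition of the lemma is automatic. The only mildly delicate step is pinning down $C$ so that the two multiplicative errors from Proposition~\ref{prop:shiftedsparsifier} compose to distortion at most $\gamma = 1 + 1/f(n)$, which is a routine calculation given the bounds above.
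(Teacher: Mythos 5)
Your algorithm and analysis mirror the paper's proof: both pick a large prime $p$, repeatedly call the sparsify-and-shift procedure of Proposition~\ref{prop:shiftedsparsifier} together with the CVP oracle, accept by rejection sampling when the returned vector lands in the ball, and argue that the conditional distribution is nearly uniform because each per-vector acceptance probability is multiplicatively close to $1/p$; the expected running-time bound is also the same.

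The one substantive difference is the size of $p$. The paper takes $p \in [10f(n)N,\,20f(n)N]$, while you take $p \approx C f(n)^2 N$. Your larger choice is the one that actually makes the distortion calculation go through. For a vector $\vec{x}$ near the boundary of the ball, the quantity $N'$ appearing in Proposition~\ref{prop:shiftedsparsifier} (there called $N$, namely $|(\lat - \vec{t}) \cap \length{\vec{x} - \vec{t}}\, B_2^n| - 1$) can be as large as $M - 1 \le f(n)N - 1$, so pushing the per-vector error $N'/p$ down to $O(1/f(n))$ --- which is what is needed so that the ratio of two acceptance probabilities, and hence the conditional distribution, is within $1 \pm O(1/f(n))$ --- forces $p \gtrsim f(n)\cdot M \gtrsim f(n)^2 N$. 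With $p \sim f(n)N$ as in the paper, the error for the farthest vector is $\Theta(1)$ rather than $O(1/f(n))$, and the paper's displayed chain $\gamma^{-1/2}/p \le 1/p - N'/p^2 - N'/p^{n-1}$ does not hold once $f(n)$ exceeds a small constant; the inequality there seems to implicitly bound the Proposition's $N'$ by the lemma's input $N$ rather than by $f(n)N - 1$. So your version, with the extra factor of $f(n)$ in the choice of $p$, is the careful one. (A trivial nit: the second term in your bound on $|\delta_{\vec{x}}|$ should be $M/p^{n-2}$ rather than $1/p^{n-2}$, but this is still $o(1)$ and does not affect the conclusion.)
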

\full{\begin{proof}
We assume without loss of generality that $n \geq 2$. On input $\lat \subset \Q^n$, $\vec{t} \in \Q^n$, $r >0$, and $N \geq 1$, the algorithm chooses a prime $p$ with $10 f(n)N \leq p \leq 20 f(n) N$ and calls the procedure from Proposition~\ref{prop:shiftedsparsifier} on input $\lat$ and $p$, receiving as output a sublattice $\lat' \subseteq \lat$ and a shift $\vec{w} \in \lat$. It then calls its CVP oracle on input $\lat'$ and $\vec{t} + \vec{w}$, receiving as output $\vec{y}'$. If $\length{\vec{y}' - \vec{w} - \vec{t}} \leq r$, it outputs $\vec{y} := \vec{y}' - \vec{w}$. Otherwise, it repeats.

From Proposition~\ref{prop:shiftedsparsifier}, we have that, after a single run of the algorithm,
\[
\frac{1}{\sqrt{\gamma} \cdot p}  \leq \frac{1}{p} - \frac{N}{p^2} - \frac{N}{p^{n-1}} \leq \Pr[\vec{y}' = \vec{x} + \vec{w}] \leq \frac{1}{p} + \frac{1}{p^n} \leq \frac{\sqrt{\gamma}}{p}
\; .
\]
Correctness follows immediately. Furthermore, note that the reduction outputs something on each run with probability at least 
$
\frac{N}{\sqrt{\gamma} f(n) p} \geq \frac{1}{100 f(n)^2}$. So, in particular, the expected number of runs is polynomial in $n$. It is clear that a single run takes polynomial time, and the result follows.
\end{proof}}{}

\subsection{Counting the lattice vectors in a ball}

We now show how to use the sparsification algorithm to approximate the number of lattice points in a ball, given access to a CVP oracle. \full{We will use this both to instantiate the procedure from Lemma~\ref{lem:shifteduniformsampler} and directly in our DGS sampling procedure.}{The proof is in the full version.}

\begin{definition}
For any parameter $\gamma \geq 1$, $\gamma$-GapVCP (the Vector Counting Problem) is the promise problem defined as follows: the input is a lattice $\lat \subset \Q^n$ (represented by a basis), shift $\vec{t} \in \Q^n$, radius $r > 0$, and an integer $N \geq 1$. It is a NO instance if 
$|(\lat - \vec{t}) \cap r B_2^n| \leq N$ and a YES instance if $|(\lat - \vec{t}) \cap r B_2^n| > \gamma N$.
\end{definition}

\begin{theorem}
\label{thm:counter}
For any efficiently computable function $f(n)$ with $1\leq f(n) \leq \poly(n)$, there is a polynomial-time reduction from $\gamma$-GapVCP to CVP where $\gamma := 1+1/f(n)$. The reduction \full{preserves dimension and }{}only calls the CVP oracle on sublattices of the input lattice.
\end{theorem}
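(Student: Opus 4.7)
The plan is to convert the CVP oracle into an empirical estimator for $N_r := |(\lat-\vec{t})\cap rB_2^n|$ via repeated lattice sparsification. Fix a prime $p$ of magnitude $\Theta(f(n)^2 N)$ (found efficiently by sampling and primality testing). Run $T = \poly(f(n))$ independent trials of the following: invoke Proposition~\ref{prop:shiftedsparsifier} on $(\lat,p)$ to obtain a full-rank sublattice $\lat' \subseteq \lat$ and shift $\vec{w} \in \lat$; query the CVP oracle on $(\lat', \vec{t}+\vec{w})$; record a \emph{hit} if the returned vector lies at distance at most $r$ from $\vec{t}+\vec{w}$. Output YES iff the empirical hit rate $\hat{P}$ exceeds a threshold $\tau$ chosen to separate the two regimes identified below.

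The key population quantity is $P := \Pr[\dist(\vec{t}+\vec{w},\lat')\le r]$. Because the events ``CVP returns $\vec{x}+\vec{w}$'' for distinct $\vec{x} \in (\lat-\vec{t})\cap rB_2^n$ are pairwise disjoint, summing the per-vector bounds of Proposition~\ref{prop:shiftedsparsifier} over all such $\vec{x}$ (valid provided $N_r \le p$) yields
\[
\frac{N_r}{p} - O\Big(\frac{N_r^2}{p^2}\Big) - O\Big(\frac{N_r^2}{p^{n-1}}\Big) \;\le\; P \;\le\; \frac{N_r}{p} + \frac{N_r}{p^n} \;.
\]
In the NO case ($N_r\le N$), the upper bound gives $P \le 1/\Theta(f(n)^2) + \text{neg}$; in the YES case with $N_r \le p$, choosing the constant in $p = \Theta(f(n)^2 N)$ sufficiently large makes the leading gap $(N_r-N)/p \ge 1/\Theta(f(n)^3)$ dominate the quadratic error, yielding a separation of $\Omega(1/f(n)^3)$ between the two values of $P$. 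The Chernoff-Hoeffding bound (Lemma~\ref{lem:chernoff}) then makes $T = \poly(f(n))$ trials suffice to estimate $P$ to within half this gap with overwhelming probability.

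The main obstacle is the YES sub-case with $N_r > p$, where the hypothesis of Proposition~\ref{prop:shiftedsparsifier} fails. The plan here is a Paley-Zygmund argument exploiting that $p\lat \subseteq \lat'$ unconditionally, so the indicator $\mathbf{1}[\vec{x}+\vec{w}\in\lat']$ depends only on the class of $\vec{x}$ modulo $p\lat$. Pick one representative $\vec{r}_i$ from each of the $k$ $p\lat$-classes of $(\lat-\vec{t})\cap rB_2^n$; the indicators $Y_i := \mathbf{1}[\vec{r}_i+\vec{w}\in\lat']$ are near-independent by the same linear-dependency counting used in Corollary~\ref{cor:shiftedindependence} (one gets $\Pr[Y_i=Y_j=1] \le 1/p^2 + O(1/p^{n-1})$ for $i\ne j$), so Paley-Zygmund applied to $Y = \sum_i Y_i$ gives $P = \Pr[Y \ge 1] \ge k/(p+k)$. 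A short structural claim handles the remaining worry: whenever $N_r > p$ one has $k \ge p$. Indeed, either no two distinct ball-vectors share a $p\lat$-coset (so $k = N_r > p$), or one can pick distinct $\vec{x},\vec{y}$ in the same coset, let $j \ge 1$ be the largest integer with $\vec{y}-\vec{x} \in p^j\lat$, and write $\vec{y}-\vec{x} = p^j \vec{w}$ with $\vec{w} \in \lat \setminus p\lat$; the $p^j{+}1$ collinear lattice points $\vec{x}+i\vec{w}$ ($i=0,\ldots,p^j$) all lie in the ball by convexity, and their images in $\lat/p\lat$ trace out $\phi(\vec{x}) + i\phi(\vec{w})$ with $\phi(\vec{w}) \ne 0$, visiting $p$ distinct $p\lat$-classes. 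In either case $P \ge 1/2$, well above the NO threshold, and since every oracle call is made on the sublattice $\lat' \subseteq \lat$ produced by Proposition~\ref{prop:shiftedsparsifier}, the reduction has the required form.
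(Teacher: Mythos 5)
Your overall strategy is the same as the paper's: use the sparsify-and-shift procedure of Proposition~\ref{prop:shiftedsparsifier}, query the CVP oracle repeatedly, and threshold the empirical hit rate via Chernoff--Hoeffding. The paper uses $p = \Theta(f(n)N)$ and $\ell \approx f(n)^4$ trials, whereas you take $p = \Theta(f(n)^2 N)$; this is a minor parameter difference that only affects constants and the polynomial in the trial count.

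The genuinely novel (and, in my view, valuable) part of your write-up is the explicit treatment of the YES sub-case with $N_r > p$. The paper's proof simply invokes ``the lower bound in Proposition~\ref{prop:shiftedsparsifier}'' with count $\gamma N$, but that proposition requires the \emph{entire} ball count at the chosen radius to be below $p$; if, say, all $N_r > p$ ball vectors lie at exactly distance $r$ from $\vec{t}$, then no subradius $r^* \le r$ captures a ball of intermediate size, and the paper's argument as literally written gives nothing. Your structural claim (two ball vectors in the same $p\lat$-coset forces a collinear family visiting $p$ distinct cosets inside the ball, hence $N_r > p \Rightarrow k \ge p$) plus Paley--Zygmund ($\Pr[Y \ge 1] \ge k/(p+k) \ge 1/2$) cleanly closes this gap, and the pairwise bound $\Pr[Y_i = Y_j = 1] \le 1/p^2 + O(1/p^{n-1})$ follows by the same linear-dependency counting as Corollary~\ref{cor:shiftedindependence}.

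One imprecision in your YES case with $N_r \le p$: you assert that the gap $(N_r-N)/p = \Omega(1/f(n)^3)$ ``dominates the quadratic error $N_r^2/p^2$,'' but this is false uniformly over $N_r \le p$ (for $N_r \approx p$ the quadratic term is $\Theta(1)$). What actually saves you there is that $x \mapsto x - x^2$ is increasing on $[0,1/2]$, so for $N_r/p \le 1/2$ the lower bound $N_r/p - N_r^2/p^2$ is at least its value at $N_r = \gamma N$; and for $p/2 < N_r \le p$ the same Paley--Zygmund bound $k/(p+k) = N_r/(p+N_r) \ge 1/3$ you already use for $N_r > p$ applies verbatim (no $p\lat$-collisions occur when $N_r \le p$, so $k = N_r$). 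Thus a single Paley--Zygmund computation covers all $N_r$ and is cleaner than splitting on $N_r \le p$ versus $N_r > p$; you might present it that way.
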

\full{\begin{proof}
We assume without loss of generality that $n \geq 20$ and $f(n) \geq 20$. On input a lattice $\lat \subset \Q^n$ with basis $\basis$, target $\vec{t} \in \Q^n$, $r > 0$, and $N \geq 1$, the reduction behaves as follows.  First, it finds a prime $p$ with $200f(n)N \leq p \leq 400f(n) N$. Then, for $i = 1, \ldots, \ell := \ceil{100f(n)^2p^2 /N^2}$, the reduction calls the procedure from Proposition~\ref{prop:shiftedsparsifier} on $\lat$, $\vec{t}$, and $p$. It receives as output $\lat_i$ and $\vec{w}_i$. It then calls the CVP oracle on $\lat_i$ and $\vec{t} + \vec{w}_i$, receiving as output a vector whose distance from $\vec{t} + \vec{w}_i$ is $r_i$. Finally, it returns yes if $r \leq r_i$ for all but at most $\ell N/p + 2\sqrt{\ell}$ values of $r_i$ and no otherwise.

It is clear that the reduction runs in polynomial time. Now, suppose $|\lat \cap (rB_2^n+ \vec{t})| \leq N$. By Proposition~\ref{prop:shiftedsparsifier}, we have that for each $i$,
\[
\Pr[r_i \leq r] \leq \frac{N}{p} + \frac{N}{p^n} < \frac{N}{p} + \frac{1}{2\sqrt{\ell}}
\; .
\] 
Then, applying the Chernoff-Hoeffding bound (Lemma~\ref{lem:chernoff}), we have
\[
\Pr[|\{ i\ :\ r_i \leq r\}| > \ell N/p + 2\sqrt{\ell}] < 1/e
\; .
\]
So, the reduction returns the correct answer in this case with probability at least $1-1/e$.

On the other hand, suppose that $|\lat \cap (rB_2^n+ \vec{t})| > \gamma N$. Using the lower bound in Proposition~\ref{prop:shiftedsparsifier},
\[
\Pr[r_i \leq r] \geq \frac{\gamma N}{p} - \frac{\gamma^2 N^2}{p^2} - \frac{\gamma^2 N^2}{p^{n-1}} > \frac{N}{p} + \frac{5}{\sqrt{\ell}}
\; .
\]
Applying the Chernoff-Hoeffding bound again, we have
\[
\Pr[|\{ i\ :\ r_i \leq r\}| \leq \ell N/p + 2\sqrt{\ell}] < 1/e
\;,
\]
as needed.
\end{proof}}
{}

\subsection{The DGS algorithm}

\begin{theorem}
\label{thm:DGStoCVP}
For any efficiently computable function $f(n)$ with $1 \leq f(n) \leq \poly(n)$, there exists an (expected) polynomial-time reduction from $(\gamma, \eps)$-DGS to CVP, where $\eps := 2^{-f(n)}$ and $\gamma := 1+1/f(n)$. The reduction preserves dimension and only calls the CVP oracle on full-rank sublattices of the input lattice.
\end{theorem}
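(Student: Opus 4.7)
The plan is to combine the two tools already developed in this section---Theorem~\ref{thm:counter} (approximately counting $|(\lat - \vec{t}) \cap r B_2^n|$ with a CVP oracle) and Lemma~\ref{lem:shifteduniformsampler} (approximately uniform sampling from the same set with a CVP oracle)---inside a two-step sampler that first picks a ball radius and then picks a point inside that ball, exactly as sketched in the proof overview.

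The key identity is the telescoping
\[
\sum_{i:\,r_i \geq \length{\vec{x}}} \bigl(e^{-\pi r_{i-1}^2/s^2} - e^{-\pi r_i^2/s^2}\bigr) \;=\; e^{-\pi r_{i^*-1}^2/s^2} - e^{-\pi r_K^2/s^2},
\]
valid for any grid $0 \leq r_0 < r_1 < \cdots < r_K$, where $i^*$ is the smallest index with $r_{i^*} \geq \length{\vec{x}}$. The right-hand side equals $e^{-\pi \length{\vec{x}}^2/s^2}$ up to a multiplicative factor $\exp(\pi(r_{i^*}^2 - r_{i^*-1}^2)/s^2)$ and the tiny subtractive term $e^{-\pi r_K^2/s^2}$. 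Setting $A_i := (\lat - \vec{t}) \cap r_i B_2^n$ and $w_i := e^{-\pi r_{i-1}^2/s^2} - e^{-\pi r_i^2/s^2}$, this identity says that if we sample an index $i \in \{1, \ldots, K\}$ with probability proportional to $w_i |A_i|$ and then return a uniformly random element of $A_i$, the marginal distribution on the output is close to $D_{\lat-\vec{t}, s}$. The reduction therefore runs CVP once to obtain $d := \dist(\vec{t}, \lat)$; uses Corollary~\ref{cor:shiftedbanaszczyk} to pick $r_K$ with $r_K^2 = d^2 + O(s^2 \cdot (f(n) + n\log(10+d/(s\sqrt{n}))))$ so that $\Pr_{\vec{X} \sim D_{\lat - \vec{t},s}}[\length{\vec{X}} > r_K] \leq \eps/4$; tiles $[r_0^2, r_K^2]$ with an arithmetic progression of width $\Theta(s^2/f(n))$ in $r^2$, starting from $r_0^2 := \max(0, d^2 - s^2/f(n))$; at each $r_i$, invokes Theorem~\ref{thm:counter} to obtain a count $\tilde N_i$ within factor $1 + O(1/f(n))$ of $|A_i|$; samples an index $i$ with probability proportional to $w_i \tilde N_i$; and finally invokes Lemma~\ref{lem:shifteduniformsampler} at radius $r_i$ with a suitable lower-bound input $N_i' \leq |A_i|$ to return an element of $A_i$.

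The correctness analysis separates four independent sources of error: (a) the Gaussian tail beyond $r_K$, absorbed into the additive slack $\eps$ (together with a routine rewriting of the output distribution so that it has the same support as $D_{\lat-\vec{t},s}$); (b) the per-slice approximation of $e^{-\pi \length{\vec{x}}^2/s^2}$ by $e^{-\pi r_{i^*-1}^2/s^2}$, controlled by the slice width; (c) the count-approximation factor of Theorem~\ref{thm:counter}; and (d) the within-slice uniform-sampler error of Lemma~\ref{lem:shifteduniformsampler}. By choosing the approximation parameter in each subroutine to be $\Theta(f(n))$, each of (b), (c), (d) contributes a multiplicative factor $1 + O(1/f(n))$, giving the required $\gamma = 1 + 1/f(n)$ after adjusting constants. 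Expected polynomial runtime, and the fact that all CVP oracle calls are on full-rank sublattices of the input, are inherited from the corresponding properties of Theorem~\ref{thm:counter} and Lemma~\ref{lem:shifteduniformsampler}.

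The main obstacle, and the reason not to tile $[0, r_K^2]$ uniformly from zero, is that $d/s$ can be exponentially large in the input bit length, so placing narrow slices throughout would yield exponentially many of them. Corollary~\ref{cor:shiftedbanaszczyk} avoids this by concentrating the Gaussian in the band $[d, r_K]$, whose squared width is $s^2 \cdot \poly(n, m, f(n))$, so the grid need only be dense just below $d^2$; slices with $r_i < d$ have $A_i = \emptyset$ and can be omitted, leaving $K = \poly(n, m, f(n))$ slices in total (with the trivial case $\vec{t} \in \lat$ handled by $d = 0$, $r_0 = 0$).
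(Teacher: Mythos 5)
Your proposal is correct and follows essentially the same route as the paper's proof: compute $d=\dist(\vec{t},\lat)$, tile $r^2$ in steps of $\Theta(s^2/f(n))$ starting near $d^2$ and truncating at $r_K$ via Corollary~\ref{cor:shiftedbanaszczyk}, use Theorem~\ref{thm:counter} for counts and Lemma~\ref{lem:shifteduniformsampler} for within-ball sampling, with the same telescoping identity driving correctness. The only cosmetic differences are an index shift in the weights and that the paper assigns the terminal weight $w_\ell := e^{-\pi r_\ell^2}$ (so the telescope closes exactly to $e^{-\pi r_j^2}$) whereas you drop the tail and absorb it into the statistical-distance slack $\eps$.
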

\begin{proof}
We assume without loss of generality that $n \geq 5$ and $s = 1$. (If $s \neq 1$, we can simply rescale the lattice.) On input $\lat \subset \Q^n$ and $\vec{t} \in \Q^n$, the reduction behaves as follows. It first calls its CVP oracle to compute $d := \dist(\vec{t}, \lat)$. For $i = 0,\ldots, \ell := \ceil{100n^2 f(n) \log(10+ d)} $, let $r_i := \sqrt{d^2 +i/(10f(n))}$.  
For each $i$, the reduction uses its CVP oracle together with the procedure given in Theorem~\ref{thm:counter} to compute $N_i$ such that $\gamma^{-1/10} \cdot  |(\lat - \vec{t}) \cap r_iB_2^n| \leq N_i \leq  |(\lat - \vec{t}) \cap r_iB_2^n|$.

Let $w_\ell := e^{-\pi r_\ell^2}$, and for $i = 0, \ldots, \ell - 1$, let $w_i := e^{-\pi r_i^2}-e^{-\pi r_{i+1}^2}$. Let $W := \sum_{i=0}^\ell N_i w_i$. The reduction then chooses an index $0 \leq k \leq \ell$, from the distribution that assigns to index $i$ probability $N_i w_i/W$. 
It then runs the procedure from Lemma~\ref{lem:shifteduniformsampler} with input $\lat$, $\vec{t}$, $r_k$, and $N_k$, receiving as output a vector $\vec{y} \in (\lat - \vec{t}) \cap r_k B_2^n$ whose distribution is $(\gamma^{1/10}, 0)$-close to the uniform distribution over $(\lat - \vec{t}) \cap r_k B_2^n$. It then simply returns $\vec{y}$.

\full{To see that the reduction runs in polynomial time, first note that Lemma~\ref{lem:lambda1bitlength} implies that $\ell$ is polynomial in the length of the input. Similarly, Corollary~\ref{cor:ballcountingbitlength} implies that the $N_i$ have bit lengths polynomial in the length of the input. It follows that the reduction runs in expected polynomial time.

We now prove correctness.}{It is clear that the reduction runs in polynomial time.}
Let $A := (\lat - \vec{t}) \cap r_\ell B_2^n$ be the support of $\vec{y}$. By Corollary~\ref{cor:shiftedbanaszczyk}, $D_{A}$ is within statistical distance $\eps$ of $D_{\lat - \vec{t}}$\full{, so it suffices to show that the output of the reduction is $(\gamma,0)$-close to $D_A$. In order to show this, it suffices to show that, for any $\vec{x} \in A$, $\Pr[\vec{y} = \vec{x}]$ is proportional to $\rho(\vec{x})$, up to a factor of $\gamma^{\pm 1/2}$.}{.}
Note that
\begin{equation}
\label{eq:probformula}
\Pr[\vec{y} = \vec{x}] = \frac{1}{W} \sum_{i\ :\ r_i \geq \length{\vec{x} - \vec{t}}} w_iN_i \cdot \Pr[\vec{y} = \vec{x} \ |\ k = i] 
\; .
\end{equation}
For any $i$ such that $\vec{x} \in (\lat - \vec{t}) \cap r_i B_2^n$, by Lemma~\ref{lem:shifteduniformsampler} we have that 
\[
\frac{\gamma^{-1/5}}{ N_i} \leq \frac{\gamma^{-1/10}}{|(\lat - \vec{t}) \cap r_i B_2^n|} \leq \Pr[\vec{y} = \vec{x} \ | \ k = i] \leq \frac{\gamma^{1/10}}{|(\lat - \vec{t}) \cap r_i B_2^n|} \leq \frac{\gamma^{1/10}}{N_i} 
\; .
\]
\full{Let $j$ be minimal such that $\vec{x} \in (\lat - \vec{t}) \cap r_j B_2^n$. Plugging in the upper bound to Eq.~\eqref{eq:probformula}, we have
\[
\Pr[\vec{y} = \vec{x}] \leq \frac{\gamma^{1/10}}{W}\cdot \sum_{i \geq j} w_i = \frac{\gamma^{1/10}}{W} \cdot e^{-\pi r_j^2} \leq \frac{\sqrt{\gamma}}{W} \cdot \rho(\vec{x})
\; .
\]
A nearly identical computation shows that $\Pr[\vec{y} = \vec{x}] \geq \rho(\vec{x})/(\sqrt{\gamma} W)$, as needed.}{Plugging these bounds into Eq.~\eqref{eq:probformula} gives the result.}
\end{proof}
\section{Centered DGS to SVP reduction}
\label{sec:DGStoSVP}
\subsection{Sparsification}

\full{Since we are now interested in the SVP case, we can no longer handle the shifts used in Theorem~\ref{thm:shiftedsparsification} and Proposition~\ref{prop:shiftedsparsifier} (neither the input shift $\vec{t}$ nor the output shifts $\vec{w}$ and $\vec{c}$). As a result, we are forced to consider the effect of sparsification on primitive vectors only, which requires new analysis. }{}Recall that $\xi(\lat, r) := |\lat_{\mathrm{prim}} \cap r B_2^n|/2$ is the number of primitive lattice vectors in a ball of radius $r$ (counting $\pm \vec{x}$ as a single vector).

\begin{theorem}
\label{thm:sparsification}
For any lattice $\lat \subset \R^n$ with basis $\basis$, primitive lattice vectors $\vec{y}_0, \vec{y}_1,\ldots, \vec{y}_N \in \lat_{\mathrm{prim}}$ with $\vec{y}_i \neq \pm \vec{y}_0$ for all $i > 0$, and prime $p \geq 101$, if $\xi(\lat, \length{\vec{y}_i}) \leq p/(20 \log p)$ for all $i$, then
\[
\frac{1}{p} - \frac{N}{p^2} \leq \Pr\big[\inner{\vec{z}, \basis^{-1}\vec{y}_0} \equiv 0 \bmod p \text{ and } \inner{\vec{z}, \basis^{-1}\vec{y}_i} \not\equiv 0 \bmod p \ \forall i > 0\big] \leq \frac{1}{p} \; ,
\]
where $\vec{z} \in \Z_p^n$ is chosen uniformly at random.
\end{theorem}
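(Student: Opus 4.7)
The plan is to deduce this theorem by directly applying Lemma~\ref{lem:almostindependent} to the $\Z_p^n$ vectors $\vec{x} := \basis^{-1}\vec{y}_0 \bmod p$ and $\vec{v}_i := \basis^{-1}\vec{y}_i \bmod p$. Lemma~\ref{lem:almostindependent} gives exactly the probability bounds we want, provided that (a) each of $\vec{x}, \vec{v}_1, \ldots, \vec{v}_N$ is nonzero in $\Z_p^n$, and (b) $\vec{x}$ is not a scalar multiple of any $\vec{v}_i$ modulo $p$. So the proof reduces to checking these two hypotheses.

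Condition (a) is immediate from primitivity: if $\basis^{-1}\vec{y}_i \equiv \vec0 \bmod p$, then $\vec{y}_i \in p\lat$, which means $\vec{y}_i = p \vec{w}$ for some $\vec{w} \in \lat$, contradicting that $\vec{y}_i$ is primitive (here we use $p \geq 101 \geq 2$).

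Condition (b) is where Lemma~\ref{lem:nogoodnameforthislemma} enters, and this is really the only nontrivial step. Suppose for contradiction that $\basis^{-1}\vec{y}_0 \equiv \alpha \basis^{-1}\vec{y}_i \bmod p$ for some $i > 0$ and $\alpha \in \Z_p$. Let $\vec{x}_1$ be whichever of $\vec{y}_0, \vec{y}_i$ is longer and $\vec{x}_2$ the other (if $\length{\vec{y}_0} < \length{\vec{y}_i}$, we can invert the relation in $\Z_p$ using that $\basis^{-1}\vec{y}_i$ is nonzero mod $p$ by (a), hence $\alpha$ must be a unit---otherwise $\vec{y}_0$ itself would lie in $p\lat$, contradicting primitivity). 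Since $\vec{y}_0 \neq \pm \vec{y}_i$ and both are primitive, the hypotheses of Lemma~\ref{lem:nogoodnameforthislemma} apply, yielding $\xi(\lat, \length{\vec{x}_1}) > p/(20 \log p)$. But $\length{\vec{x}_1} \in \{\length{\vec{y}_0}, \length{\vec{y}_i}\}$, so this directly contradicts the assumption $\xi(\lat, \length{\vec{y}_j}) \leq p/(20\log p)$ for all $j$.

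The main (and really the only) obstacle is the case analysis in condition (b) when $\alpha$ might not be invertible modulo $p$. The one subtlety is to recognize that if the linear relation $\basis^{-1}\vec{y}_0 \equiv \alpha \basis^{-1}\vec{y}_i \bmod p$ has $\alpha = 0$, then $\vec{y}_0 \in p\lat$, violating primitivity; and otherwise $\alpha \in \Z_p^\times$, so the relation can be inverted to put the longer vector on the left-hand side of Lemma~\ref{lem:nogoodnameforthislemma} as required. Once both preconditions are verified, the theorem is an immediate invocation of Lemma~\ref{lem:almostindependent}.
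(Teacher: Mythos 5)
Your proof is correct and follows the same route as the paper: the paper's proof is exactly to invoke Lemma~\ref{lem:nogoodnameforthislemma} to rule out mod-$p$ scalar dependence among the $\basis^{-1}\vec{y}_i$ and then apply Lemma~\ref{lem:almostindependent}. You have merely spelled out explicitly the verification of the hypotheses of Lemma~\ref{lem:almostindependent} (nonvanishing mod $p$ from primitivity, and the case analysis on which vector is longer, inverting $\alpha$ when necessary), all of which the paper leaves implicit.
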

\begin{proof}
Let $\vec{v}_i := \basis^{-1}\vec{y}_i$. By Lemma~\ref{lem:nogoodnameforthislemma}, we have that $\vec{v}_0 $ is not a scalar multiple of $\vec{v}_i$ mod $p$ for any $i > 0$. The result then follows from Lemma~\ref{lem:almostindependent}.
\end{proof}

\full{}{The proof of the next result is in the full version.}

\begin{proposition}
\label{prop:sparsifier}
There is a polynomial-time algorithm that takes as input a basis $\basis$ for a lattice $\lat \subset \R^n$ and a prime $p \geq 101$ 
and outputs a full-rank sublattice $\lat' \subseteq \lat$ such that for every $\vec{x} \in \lat$ with
$N:= \xi(\lat, \length{\vec{x}}) - 1 \leq p/(20 \log p)$ and $\lambda_1(\lat) > \length{\vec{x}}/p$, we have that for any SVP oracle,
\[
\frac{1}{p} - \frac{N}{p^2} \leq \Pr[\problem{SVP}(\lat') = \pm \vec{x}] \leq \frac{1}{p}
\; .
\]
\full{In particular, 
\[
\frac{N}{p} - \frac{N^2}{p^2} \leq \Pr
\big[\lambda_1(\lat') \leq \length{\vec{x}}\big]  \leq \frac{N}{p}
\; .
\]}{}
\end{proposition}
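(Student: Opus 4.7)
\textbf{Proof plan for Proposition~\ref{prop:sparsifier}.}
The algorithm is the standard sparsification procedure: sample $\vec{z} \in \Z_p^n$ uniformly at random and output a basis of
\[
\lat' := \{ \vec{y} \in \lat \ :\ \inner{\vec{z}, \basis^{-1}\vec{y}} \equiv 0 \bmod p\}
\; ,
\]
which can be computed in polynomial time by Claim~\ref{clm:efficientsparse}, and which is full-rank for every $\vec{z}$. Fix a primitive $\vec{x}$ satisfying the hypotheses (the conclusion is vacuous for non-primitive $\vec{x}$: such $\vec{x}$ can never be returned by SVP, and indeed for primitivity we have $\basis^{-1}\vec{x}\not\equiv\vec 0\bmod p$ because the condition $\lambda_1(\lat)>\length{\vec{x}}/p$ forces $\vec{x}\notin p\lat$). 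Let $\vec{y}_0 := \vec{x}$ and let $\vec{y}_1, \ldots, \vec{y}_N \in \lat_{\mathrm{prim}}$ be representatives of the remaining $\pm$-classes of primitive vectors of length at most $\length{\vec{x}}$, so that $N = \xi(\lat, \length{\vec{x}}) - 1$.

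The upper bound is immediate, since $\mathrm{SVP}(\lat') = \pm\vec{x}$ implies $\vec{x}\in\lat'$, and the latter has probability exactly $1/p$ (using $\basis^{-1}\vec{x}\not\equiv\vec0 \bmod p$). For the lower bound, consider the ``clean'' event
\[
E := \{ \vec{x} \in \lat' \text{ and } \vec{y}_i \notin \lat' \text{ for all } i = 1, \ldots, N \}
\; .
\]
By monotonicity of $\xi$ and the hypothesis $N\le p/(20\log p)$, we have $\xi(\lat,\length{\vec{y}_i})\le p/(20\log p)$ for all $i$, so Theorem~\ref{thm:sparsification} applies and yields $\Pr[E] \geq 1/p - N/p^2$.

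It remains to show that $E$ forces $\mathrm{SVP}(\lat') = \pm\vec{x}$, i.e., that $\pm\vec{x}$ are the unique shortest nonzero vectors of $\lat'$. Suppose for contradiction that some $\vec{y} \in \lat' \setminus \{\vec0, \pm\vec{x}\}$ has $\length{\vec{y}} \leq \length{\vec{x}}$, and write $\vec{y} = k\vec{w}$ with $\vec{w}\in\lat_{\mathrm{prim}}$ and $k \in \Z\setminus\{0\}$. If $p \mid k$, then $\vec{y} \in p\lat$, so $\length{\vec{y}} \geq p\lambda_1(\lat) > \length{\vec{x}}$, a contradiction. Otherwise $\gcd(k,p)=1$, so $\inner{\vec{z}, \basis^{-1}\vec{w}} \equiv k^{-1}\inner{\vec{z},\basis^{-1}\vec{y}} \equiv 0 \bmod p$, i.e.\ $\vec{w} \in \lat'$. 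Since $\vec{w}$ is primitive and $\length{\vec{w}} \leq \length{\vec{y}} \leq \length{\vec{x}}$, it is $\pm\vec{y}_i$ for some $0\le i\le N$. The case $i>0$ violates $E$; the case $\vec{w} = \pm\vec{x}$ forces $|k|\ge 2$ (since $\vec{y} \neq \pm\vec{x}$), giving $\length{\vec{y}} \geq 2\length{\vec{x}}$, contradiction. The only genuine obstacle here is ruling out these scalar-multiple dependencies via the $\lambda_1(\lat)>\length{\vec{x}}/p$ assumption and Lemma~\ref{lem:nogoodnameforthislemma} (already packaged inside Theorem~\ref{thm:sparsification}); once this is done, the argument proceeds cleanly, and the ``in particular'' statement follows by summing the main bound over the $N+1$ pairs $\pm\vec{y}_i$ (with the slight constant slack absorbed into the stated inequalities).
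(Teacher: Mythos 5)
Your proof is correct and follows the paper's approach exactly: the same sparsified sublattice, the upper bound from $\Pr[\vec{x}\in\lat']=1/p$, and the lower bound by applying Theorem~\ref{thm:sparsification} to the ``clean'' event and arguing that this event forces $\mathrm{SVP}(\lat')=\pm\vec{x}$. Your case analysis filling in that last implication (which the paper asserts in a single sentence, citing the condition $\lambda_1(\lat)>\length{\vec{x}}/p$) is right; the only quibble is that your parenthetical conflates $\vec{x}\notin p\lat$ with primitivity of $\vec{x}$ --- the hypothesis $\lambda_1(\lat)>\length{\vec{x}}/p$ gives the former but not the latter, and primitivity of $\vec{x}$ really is an unstated hypothesis of the proposition, which the paper's own proof also silently assumes by taking $\vec{y}_0:=\vec{x}$ in its list of primitive vectors.
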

\full{\begin{proof}
On input $\lat \subset \R^n$ with basis $\basis$ and $p$, the algorithm samples $\vec{z} \in \Z_p^n$ uniformly at random. It then returns the sublattice
\[
\lat' := \{ \vec{x} \in \lat\ :\ \inner{\vec{z}, \basis^{-1}\vec{x}} \equiv 0 \bmod p\}
\; .
\]

It is clear that the algorithm runs in polynomial time. 
Since $\Pr[\vec{x} \in \lat'] = 1/p$, the upper bound on the probability is immediate as well. 

For the lower bound, let $\vec{y}_0,\ldots, \vec{y}_{N} \in \lat_{\mathrm{prim}}$ such that $\length{\vec{y}_i} \leq \length{\vec{x}}$, $\vec{y}_i \neq \pm \vec{y}_j$, and $\vec{y}_0 := \vec{x}$. Let $\vec{v}_i := \basis^{-1}\vec{y}_i$. Note that, if $\vec{v}_0 \in \lat'$ and $\vec{v}_i \notin \lat'$ for $i > 0$, then $\problem{SVP}(\lat') = \pm \vec{x}$. (Here, we have used the fact that $\lambda_1(\lat) > \length{\vec{x}}/p$.)
The result then follows from Theorem~\ref{thm:sparsification}.
\end{proof}}
{}

\begin{lemma}
\label{lem:uniformsampler}
For any efficiently computable $f(n)$ with $2\leq f(n) \leq \poly(n)$, there is an (expected) polynomial-time algorithm with access to a SVP oracle that takes as input a lattice $\lat \subset \Q^n$, radius $r > 0$, and integer $N \geq 1$ and outputs a vector $\vec{y} \in \lat$ such that, if 
$N \leq \xi(\lat, r) \leq f(n) N$ and $\lambda_1(\lat) > r/(f(n)\xi(\lat, r))$,
then for any $\vec{x} \in \lat^{\mathsf{prim}} \cap r B_2^n$, 
\[
\frac{\gamma^{-1}}{\xi(\lat, r)} \leq \Pr[\vec{y} = \pm \vec{x}] \leq \frac{\gamma}{\xi(\lat, r)}
\; ,
\]
where $\gamma := 1+f(n)$.
Furthermore, the algorithm \full{preserves dimension and }{}only calls its oracle on full-rank sublattices of $\lat$.
\end{lemma}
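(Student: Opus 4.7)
The plan is to mirror the proof of Lemma~\ref{lem:shifteduniformsampler}, replacing the CVP-oriented Proposition~\ref{prop:shiftedsparsifier} with its SVP counterpart, Proposition~\ref{prop:sparsifier}. The algorithm, on input $\lat$, $r$, and $N$, will choose a large prime $p$ and then repeat the following until it produces an output: run the sparsifier from Proposition~\ref{prop:sparsifier} on $\lat, p$ to obtain a full-rank sublattice $\lat' \subseteq \lat$, call the SVP oracle on $\lat'$ to obtain $\vec{y}'$, and output $\vec{y}'$ if $\length{\vec{y}'} \leq r$ (otherwise restart). No shift is involved, so unlike in the CVP case the SVP oracle can be used directly.

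First I would pick a prime $p$ in a range like $[C f(n)^3 N \log(f(n)N),\, 2C f(n)^3 N \log(f(n)N)]$ for a large constant $C$. This is designed to verify both hypotheses of Proposition~\ref{prop:sparsifier} for every primitive candidate $\vec{x} \in \lat^{\mathsf{prim}} \cap rB_2^n$. On the one hand, $\xi(\lat, \length{\vec{x}}) - 1 \leq \xi(\lat, r) \leq f(n) N$, and the choice of $p$ gives $f(n) N \leq p/(20\log p)$. On the other hand, the input promise $\lambda_1(\lat) > r/(f(n)\xi(\lat,r)) \geq r/(f(n)^2 N)$ combined with $p \gg f(n)^2 N$ yields $\lambda_1(\lat) > r/p \geq \length{\vec{x}}/p$. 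Proposition~\ref{prop:sparsifier} then tells us that $\Pr[\problem{SVP}(\lat') = \pm \vec{x}]$ lies in $[1/p - f(n)N/p^2,\, 1/p]$, which, thanks to $p \gg f(n)^3 N$, is within a multiplicative factor of $\gamma^{\pm 1/2}$ of $1/p$.

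The main subtlety — and the step I expect to be the real obstacle — is making sure that the output of the SVP oracle is automatically primitive in $\lat$, since Proposition~\ref{prop:sparsifier} only controls the probabilities for primitive vectors. This is where the hypothesis $\lambda_1(\lat) > r/(f(n)\xi(\lat, r))$ is crucial. Indeed, suppose $\vec{y}' = k \vec{u}$ with $\vec{u} \in \lat^{\mathsf{prim}}$ and $|k| \geq 2$. Then $|k| \leq \length{\vec{y}'}/\lambda_1(\lat) \leq r/\lambda_1(\lat) < p$, so $p \nmid k$. Since $\vec{y}' \in \lat'$ and $p$ is prime, this forces $\vec{u} \in \lat'$ as well; but then $\vec{u}$ is a nonzero vector of $\lat'$ shorter than $\vec{y}'$, contradicting $\vec{y}'$ being returned by $\problem{SVP}(\lat')$. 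Hence conditional on $\length{\vec{y}'} \leq r$ the output lies in $\lat^{\mathsf{prim}} \cap rB_2^n$.

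Finally, summing the per-iteration probabilities over the $\xi(\lat, r)$ primitive $\pm$-classes in $rB_2^n$, the probability that a single iteration produces an output is $(1 \pm O(1/f(n)))\,\xi(\lat, r)/p \geq \Omega(N/p)$, which is inverse polynomial in the input bit length (since $\log N$ is). Thus the reduction terminates in expected polynomial time, and conditioning on termination the probability assigned to each $\pm\vec{x}$ differs from $1/\xi(\lat, r)$ by at most a factor of $\gamma^{\pm 1}$, as required. All oracle calls are on sparsifications of $\lat$, which are full-rank sublattices, so the dimension-preserving property is immediate.
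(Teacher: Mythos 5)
Your proposal is correct and mirrors the paper's proof almost exactly: pick a large prime $p$, sparsify via Proposition~\ref{prop:sparsifier}, call the SVP oracle on $\lat'$, output if the result has length at most $r$, and restart otherwise. The fact that any short SVP output must be primitive, which you spell out explicitly, is precisely the content hidden inside the proof of Proposition~\ref{prop:sparsifier} (and is what justifies identifying the termination event with the disjoint union of the events $\{\problem{SVP}(\lat') = \pm\vec{x}\}$ over primitive $\pm\vec{x}$ in $rB_2^n$). One genuine difference worth noting: you take $p \approx f(n)^3 N \log(f(n)N)$ while the paper takes $p \approx f(n) N \log(f(n)N)$. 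Your choice is the safer one — the promise only gives $r/\lambda_1(\lat) < f(n)\xi(\lat,r) \le f(n)^2 N$, so to be certain the hypothesis $\lambda_1(\lat) > \length{\vec{x}}/p$ of Proposition~\ref{prop:sparsifier} holds for every $\vec{x}$ in the ball one should take $p \gtrsim f(n)^2 N$; the paper's smaller $p$ only clearly handles the other hypothesis ($\xi(\lat,\length{\vec{x}}) - 1 \le p/(20\log p)$). The statement is unaffected, and your larger $p$ still yields an inverse-polynomial per-iteration success probability since $\log N$ is polynomial in the input length.
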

\full{\begin{proof}
We assume without loss of generality that $n \geq 10$. On input $\lat \subset \Q^n$, $r >0$, and $N \geq 1$, the algorithm chooses a prime $p$ with $100 f(n)N \log(10f(n)N) \leq p \leq 200 f(n) N \log(10 f(n)N)$ and calls the algorithm from Proposition~\ref{prop:sparsifier} on input $\lat$ and $p$, receiving as output a sublattice $\lat' \subset \lat$. It then calls its SVP oracle on input $\lat'$, receiving as output $\vec{y}$. If $\length{\vec{y}} \leq r$, it outputs $\vec{y}$. Otherwise, it repeats.

From Proposition~\ref{prop:sparsifier}, we have that, after a single run of the algorithm
\[
\frac{\gamma^{-1/2}}{p}  \leq \frac{1}{p} - \frac{N}{p^2} - \frac{N}{p^{n-1}} \leq \Pr[\vec{y} = \pm \vec{x}] \leq \frac{1}{p}
\; .
\]
Correctness follows immediately. Furthermore, note that the algorithm terminates after a given run with probability at least $\gamma^{-1/2} N/(f(n)p) \geq 1/(1000 f(n)^2 \log(Nf(n)))$. By Corollary~\ref{cor:ballcountingbitlength}, $\log(N)$ is polynomial in the length of the input. So, in particular, the expected number of runs is polynomial in the length of the input. It is clear that a single run takes polynomial time, and the result follows.
\end{proof}}{}

\subsection{Counting the primitive lattice vectors in a ball around the origin}

\begin{definition}
For any parameters $\beta \geq 0$, $\gamma \geq 1$, $(\beta, \gamma)$-GapPVCP (the Primitive Vector Counting Problem) is the promise problem defined as follows: the input is a lattice $\lat \subset \Q^n$ (represented by a basis), radius $r > 0$, and an integer $N \geq 1$. It is a NO instance if 
$\xi(\lat,  r) \leq N$ or if $\lambda_1(\lat) <  \beta r/ N$ and a YES instance if $\xi(\lat,  r) > \gamma N$.
\end{definition}

\full{Intuitively, the condition that $\lambda_1(\lat) <  \beta r/ N$ handles the degenerate case in which there are many non-primitive vectors that may \scarequotes{hide} the primitive vectors in the lattice. It is not clear that this should be treated as a degenerate case in general, but it is clear that our methods fail in this case.
}
{In the full version, we prove the following.}

\begin{theorem}
\label{thm:primcounter}
For any efficiently computable $f(n)$ with $1 \leq f(n) \leq \poly(n)$, there is a polynomial-time reduction from $(\beta, \gamma)$-GapPVCP to SVP where $\beta := 1/f(n)$ and $\gamma := 1+1/f(n)$. The reduction preserves dimension and only calls the SVP oracle on sublattices of the input lattice.
\end{theorem}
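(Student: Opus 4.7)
The plan is to mimic the proof of Theorem~\ref{thm:counter}, replacing the shifted-sparsification machinery with its non-shifted, primitive-vector counterpart from Proposition~\ref{prop:sparsifier}. The only genuinely new issue is the ``degenerate'' branch of the NO condition (where $\lambda_1(\lat) < \beta r/N$), which I handle by a single explicit SVP call before the main loop begins.

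On input $(\lat,\basis,r,N)$ with $\beta = 1/f(n)$ and $\gamma = 1+1/f(n)$, the reduction first calls the SVP oracle on $\lat$ to compute $\lambda_1(\lat)$; if $\lambda_1(\lat) < \beta r/N$ it immediately outputs NO. Otherwise, it picks a prime $p$ with $p = \Theta(f(n)^2\, N\, \log(f(n)N))$ large enough that both $p \geq N/\beta = f(n) N$ and $\xi \leq p/(20\log p)$ for every $\xi$ we need Proposition~\ref{prop:sparsifier} to cover. For $i = 1,\dots,\ell := \lceil C f(n)^4 \rceil$ (with a suitable constant $C$), it runs the sparsifier from Proposition~\ref{prop:sparsifier} to obtain $\lat_i \subseteq \lat$, then calls the SVP oracle on $\lat_i$ to obtain $\vec{y}_i$ of length $r_i$. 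It returns YES iff $|\{i : r_i \leq r\}| > \ell N/p + 2\sqrt{\ell}$, and NO otherwise. Correctness of the preliminary SVP call on $\lat$ preserves dimension and only queries a (trivial) sublattice of the input, as required.

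For correctness of the main loop, in the remaining NO case we have $\xi(\lat,r) \leq N$ and $\lambda_1(\lat) \geq \beta r/N \geq r/p$. The condition $\lambda_1(\lat) > r/p$ rules out the pathology that some non-primitive vector of the form $p\vec{v}$ lies in $\lat_i \cap rB_2^n$, so $\lambda_1(\lat_i) \leq r$ can only occur when some primitive $\vec{x} \in \lat^{\mathrm{prim}} \cap rB_2^n$ lies in $\lat_i$; a union bound over such $\vec{x}$ (each with probability exactly $1/p$ of being in $\lat_i$, since primitive vectors are not in $p\lat$) gives $\Pr[r_i \leq r] \leq \xi(\lat,r)/p \leq N/p$. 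In the YES case, choose a primitive $\vec{x}$ with $\xi(\lat,\length{\vec{x}}) = \lceil \gamma N \rceil$ (which exists with $\length{\vec{x}} \leq r$ since $\xi(\lat,r) > \gamma N$); Proposition~\ref{prop:sparsifier} applies by our choice of $p$ and gives $\Pr[r_i \leq r] \geq \gamma N/p - (\gamma N)^2/p^2 \geq N/p + N/(2 p f(n))$, where the extra $(\gamma N)^2/p^2$ term is absorbed by the $\log$ factor in $p$. The gap $N/(2pf(n))$ between the two cases exceeds $5/\sqrt{\ell}$ by the choice of $\ell$, so the Chernoff--Hoeffding bound (Lemma~\ref{lem:chernoff}) applied to the i.i.d.\ indicators $\mathbf{1}[r_i \leq r]$ bounds the error probability by $1/e$ in both cases.

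The main subtlety, and the reason we must run the preliminary SVP call, is that the sparsifier cannot distinguish the ``degenerate'' NO subcase $\lambda_1(\lat) < \beta r/N$ from a YES instance: when $\lambda_1(\lat) < r/p$, every sparsified lattice $\lat_i$ contains a vector of length $p\lambda_1(\lat) < r$ (since $p\lat \subseteq \lat_i$), so $r_i \leq r$ holds deterministically and the counting test would spuriously accept. Detecting this case via one SVP query up front is the cheapest fix, and it is consistent with the statement of the theorem (which asks only for a CVP/SVP-oracle reduction whose calls land on sublattices of the input). Bit-length concerns (for $\log p$, $\ell$, and representing the sparsified bases) are handled exactly as in the proof of Theorem~\ref{thm:counter}, via Lemma~\ref{lem:lambda1bitlength} and Corollary~\ref{cor:ballcountingbitlength}.
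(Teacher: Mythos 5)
Your reduction is structurally identical to the paper's: a preliminary SVP call on $\lat$ to detect the degenerate NO branch $\lambda_1(\lat) < \beta r/N$, followed by $\ell$ independent runs of the sparsifier from Proposition~\ref{prop:sparsifier}, an SVP call on each $\lat_i$, and a threshold test at $\ell N/p + 2\sqrt{\ell}$ analyzed via Chernoff--Hoeffding. That matches the paper's proof exactly in outline, and your observation that the degenerate branch must be caught up front (since $p\lat \subseteq \lat_i$ forces $r_i \leq r$ deterministically when $\lambda_1(\lat) < r/p$) is the right explanation for why the preliminary call is needed.

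However, your parameter choices do not actually make the Chernoff argument close, and this is a genuine quantitative gap, not just a sloppy constant. You set $\ell := \lceil C f(n)^4\rceil$ for a fixed constant $C$, so $5/\sqrt{\ell} = \Theta(1/f(n)^2)$. In the YES case you correctly derive a lower bound of roughly $N/p + N/(2pf(n))$ on $\Pr[r_i \leq r]$, giving a gap of order $N/(p f(n))$. With your $p = \Theta(f(n)^2 N \log(f(n)N))$, this gap is $\Theta\bigl(1/(f(n)^3\log(f(n)N))\bigr)$, which is strictly smaller than $5/\sqrt{\ell} = \Theta(1/f(n)^2)$ whenever $f(n)\log(f(n)N) \gg 1$ --- i.e.\ essentially always. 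Since $N$ can be exponential in the input length, $\log(f(n)N)$ is itself $\poly(n)$, and no constant $C$ can absorb it. The fix is to tie $\ell$ to $p/N$, e.g.\ $\ell := \lceil 100 f(n)^2 p^2/N^2\rceil$ as the paper does, so that $\sqrt{\ell} \gtrsim f(n) p/N$ and hence $5/\sqrt{\ell} \lesssim N/(2pf(n))$. With your larger $p$ (an extra factor of $f(n)$ beyond what is actually needed --- $p = \Theta(f(n)N\log(f(n)N))$ already suffices to guarantee both $p > N/\beta$ and $\gamma N \leq p/(20\log p)$), the required $\ell$ would be $\Theta(f(n)^6 \log^2(f(n)N))$. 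Everything else in the argument is sound.
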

\full{\begin{proof}
On input $\lat \subset \Q^n$ with basis $\basis$, $r > 0$, and $N \geq 1$, the reduction behaves as follows. It first calls its SVP oracle on $\lat$ to compute $\lambda_1(\lat)$. If $\lambda_1(\lat) > r$ or $\lambda_1(\lat) < \beta r/N$, it returns no. The reduction then finds a prime $p$ with $200f(n) N\log(10f(n)N) \leq p \leq 400 f(n)N \log(10f(n)N)$, and for $i = 1, \ldots, \ell := \ceil{100f(n)^2p^2/N^2}$, it calls the procedure from Proposition~\ref{prop:sparsifier} on $\lat$ and $p$, receiving as output $\lat_i$. It then calls the SVP oracle on each $\lat_i$, receiving as output a vector of length $r_i$. Finally, it returns yes if $r \leq r_i$ for all but at most $\ell N/p + 2\sqrt{\ell}$ values of $r_i$ and no otherwise.

It is clear that the reduction runs in polynomial time. We assume $\lambda_1(\lat) \geq \beta r/N > r/p$ (since otherwise the reduction clearly outputs the correct answer). 

Suppose $m := \xi(\lat,  r) \leq N$. By Proposition~\ref{prop:sparsifier}, we have
$
\Pr[r_i \leq r] \leq \frac{m}{p} \leq \frac{N}{p}
$,
for each $i$.
Applying the Chernoff-Hoeffding bound (Lemma~\ref{lem:chernoff}), we have
\[
\Pr\Big[|\{i\ :\  r_i \leq r\}| >  \frac{N\ell }{p} + 2\sqrt{\ell} \Big] < 1/e
 \; .
\] 
So, the reduction returns the correct answer in this case with probability at least $1-1/e$.

Now, suppose $\xi(\lat, r) > \gamma N$. We again apply Proposition~\ref{prop:sparsifier} to obtain 
\[
\Pr[r_i \leq r] \geq \frac{\gamma N}{p} - \frac{\gamma^2 N^2}{p^2} > \frac{N}{p} + \frac{5}{\sqrt{\ell}}
\]
Applying the Chernoff-Hoeffding bound again, we have
\[
\Pr\Big[|\{i\ :\  r_i \leq r\}| \leq \frac{N\ell}{p} + 2\sqrt{\ell}  \Big] < 1/e
 \; .
\] 
The result follows.
\end{proof}}{}

\subsection{The centered DGS algorithm}

\begin{theorem}
\label{thm:DGStoSVP}
For any efficiently computable function $f(n)$ with $1 \leq f(n) \leq \poly(n)$, there is an (expected) polynomial-time reduction from $(\gamma, \eps)$-cDGS to SVP, where $\eps := 2^{-f(n)}$ and $\gamma := 1+1/f(n)$. The reduction preserves dimension and only calls the SVP oracle on sublattices of the input lattice.
\end{theorem}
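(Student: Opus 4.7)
The plan mirrors Theorem~\ref{thm:DGStoCVP}, but with the primitive-vector counter (Theorem~\ref{thm:primcounter}) and primitive-vector sampler (Lemma~\ref{lem:uniformsampler}) replacing their CVP counterparts, together with two extra twists forced on us by the fact that SVP-based tools can only see primitive vectors: after sampling a primitive vector we multiply it by a nonzero scalar drawn from a one-dimensional discrete Gaussian, and we inject $\vec 0$ by hand. The key identity is
\[
\rho_s(\lat) \;=\; 1 \;+\; \sum_{\{\pm\vec x\}\subset \lat^{\mathrm{prim}}} g_s(\length{\vec x}), \qquad g_s(r) := \sum_{z\in\Z\setminus\{0\}} e^{-\pi z^2 r^2/s^2},
\]
which splits $D_{\lat,s}$ restricted to $\lat \setminus \{\vec 0\}$ into a distribution over primitive rays (depending only on $\length{\vec x}$) and, conditional on a ray, a one-dimensional discrete Gaussian over $\Z\setminus\{0\}$ with parameter $s/\length{\vec x}$ (sampleable via Lemma~\ref{lem:sampleZ}).

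\textbf{Discretization and counting.} Rescale so $s=1$, and call the SVP oracle to learn $\lambda_1(\lat)$. Build a grid $\lambda_1(\lat) = r_1 < r_2 < \cdots < r_\ell$ with $r_\ell$ chosen (via Lemma~\ref{lem:banaszczyk}) so that the Gaussian tail above $r_\ell$ has mass at most $\eps/2$, and consecutive spacing fine enough that $g(r)/g(r') \in [\gamma^{-1/10}, \gamma^{1/10}]$ whenever $r, r'$ lie in the same interval; polynomially many grid points suffice since the logarithmic derivative of $g$ is polynomially bounded on the relevant range. Using Theorem~\ref{thm:primcounter} together with binary search, compute estimates $N_i$ satisfying $\gamma^{-1/10}\xi(\lat, r_i) \le N_i \le \xi(\lat, r_i)$. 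Set $w_i := g(r_i) - g(r_{i+1})$ for $i < \ell$, $w_\ell := g(r_\ell)$, and $W := 1 + \sum_i N_i w_i$. Applied term by term to the decomposition above, the telescoping identity $\sum_{k\ge j} w_k = g(r_j)$ shows $W$ is a $\gamma^{1/5}$-approximation of $\rho(\lat)$.

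\textbf{Sampling and correctness.} Output $\vec 0$ with probability $1/W$. Otherwise, sample an index $k$ with probability $N_k w_k/(W-1)$, invoke Lemma~\ref{lem:uniformsampler} on $(\lat, r_k, N_k)$ to obtain a primitive $\vec y$ of length at most $r_k$, draw $z \in \Z \setminus \{0\}$ from $D_{\Z\setminus\{0\},\,1/\length{\vec y}}$, and output $z\vec y$. To check the distribution, fix a target $\vec v_0 = z_0 \vec x_0$ with $\vec x_0$ primitive, $z_0 \ne 0$, and $j := \min\{i : r_i \ge \length{\vec x_0}\}$. Primitivity of $\vec x_0$ forces $(\vec y, z) \in \{(\pm \vec x_0, \pm z_0)\}$, and the symmetry of the 1D Gaussian in $z \mapsto -z$ collapses the sign ambiguity of $\vec y$ into the sum $\Pr[\vec y = \vec x_0 \mid k] + \Pr[\vec y = -\vec x_0 \mid k]$, giving $\Pr[\text{output} = \vec v_0 \mid k] = (1 \pm O(1/f(n))) \cdot \rho(z_0 \vec x_0)/(N_k \, g(\length{\vec x_0}))$. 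Summing over $k \ge j$, the $N_k$'s cancel, the $w_k$'s telescope to $g(r_j) \approx g(\length{\vec x_0})$, which cancels the same factor in the denominator, and we obtain $\Pr[\text{output} = \vec v_0] \approx \rho(\vec v_0)/W \approx \rho(\vec v_0)/\rho(\lat)$ within a factor of $\gamma$; the $\vec 0$ case is immediate.

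\textbf{Main obstacle.} The real work is verifying that the preconditions of Theorem~\ref{thm:primcounter} and Lemma~\ref{lem:uniformsampler} hold at every grid point, especially the density condition $\lambda_1(\lat) > r_k/(f(n)\xi(\lat, r_k))$ needed by the primitive sampler. It is automatic when $r_k \lesssim f(n)\lambda_1(\lat)$ (since $\xi \ge 1$) and when $r_k \gtrsim \lambda_2(\lat)$ (since Lemma~\ref{lem:notdegenerate} then gives $\xi(\lat, r_k) \gtrsim r_k/\lambda_1(\lat)$), but the intermediate regime $f(n)\lambda_1(\lat) \ll r_k \ll \lambda_2(\lat)$, which occurs exactly when the lattice is highly elongated along a shortest vector, requires care; a natural workaround is to use the SVP oracle to enumerate primitives along that direction by hand and peel them off before sampling. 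The remaining effort is bookkeeping: compose the $\gamma^{1/10}$ errors from counting, uniform sampling, discretization of $g$, and the approximation $W \approx \rho(\lat)$, and check the product lies inside $[\gamma^{-1}, \gamma]$.
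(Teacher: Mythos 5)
Your overall strategy matches the paper's proof of Theorem~\ref{thm:DGStoSVP} closely: the primitive-ray decomposition identity $\rho_s(\lat) = 1 + \sum_{\{\pm\vec x\}\subset \lat^{\mathrm{prim}}} \rho_{s/\|\vec x\|}(\Z\setminus\{0\})$, the grid of radii with telescoping weights, the use of Theorem~\ref{thm:primcounter} for counting and Lemma~\ref{lem:uniformsampler} for near-uniform sampling of primitives, injecting $\vec 0$ by hand with the estimated normalizer, and finally multiplying by a one-dimensional discrete Gaussian integer are exactly the paper's steps. You have also correctly put your finger on the genuine difficulty: the density precondition $\lambda_1(\lat) > r_k/(\poly(n)\cdot \xi(\lat,r_k))$ can fail at intermediate radii, and both the counter and the sampler need it.

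Where your write-up is incomplete is the resolution of that degenerate regime. The paper's fix is simpler than what you sketch: whenever the precondition fails, it simply sets $N_i := 1$ and, if that index is drawn, calls the SVP oracle directly rather than the uniform sampler. The observation that makes this harmless is that Lemma~\ref{lem:notdegenerate} forces $\xi(\lat, r)$ to grow at least linearly in $(r-\lambda_2(\lat))/\lambda_1(\lat)$ once $r \ge \lambda_2(\lat)$, and with the chosen grid spacing ($r_{i+1}^2 - r_i^2 = 1/(100 n f(n))$) one step past the first radius where $\xi > 1$ already gives $\xi(\lat, r_{i+1}) > 1/(100 n f(n) \lambda_1(\lat))$, so the precondition is restored for all $j > i$. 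Hence $N_i$ can only be \scarequotes{wrong} at a single grid index, and the telescoping of the $w_j$ absorbs the resulting one-term error. Your proposed workaround --- enumerating primitives along the bad direction and peeling them off --- does collapse to essentially the same thing once you notice that in the regime $r_k < \lambda_2(\lat)$ the only primitive in the ball (up to sign) is the shortest vector, so the \scarequotes{enumeration} is a single SVP call; but as stated it is vaguer and leaves open how to handle the boundary between the regimes, which is precisely what Lemma~\ref{lem:notdegenerate} plus the grid-spacing choice settle. You should pin this down: without the \scarequotes{at most one bad index} argument, the accumulated error from the degenerate grid points is not controlled.
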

\begin{proof}
We assume without loss of generality that $s = 1$. (If $s \neq 1$, we can simply scale the lattice.) On input $\lat \subset \Q^n$, the reduction behaves as follows. First, it computes $\lambda_1(\lat)$ using its SVP oracle. For $i = 0,\ldots, \ell := \ceil{200n^2 f(n)^2} $, let $r_i := \sqrt{\lambda_1(\lat)^2 +i/(100n f(n))}$. For each $i$, the reduction uses its SVP oracle together with the procedure given in Theorem~\ref{thm:primcounter} to compute $N_i$ such that 
\begin{equation}
\label{eq:NiapproxXi}
\gamma^{-1/10}\cdot  \xi(\lat, r_i) \leq N_i \leq  \xi(\lat, r_i)
\; ,
\end{equation} 
or $N_i := 1$ if $\lambda_1(\lat) < r_i/(100 n^2 f(n) \xi(\lat, r_i))$. Let $w_{\ell} := \rho_{1/r_\ell}(\Z \setminus \{ \vec0 \})$, and for $i = 0, \ldots, \ell - 1$, let $w_i := \rho_{1/r_i}(\Z\setminus \{ 0 \}) - \rho_{1/r_{i+1}}(\Z\setminus \{ 0 \})$. (\full{Claim~\ref{clm:computerhoZ}}{\cite{BLPRS13}} shows one way to compute $w_i$ efficiently.) 

Let $W := \sum_{i=0}^\ell N_i w_i$. Then, the reduction outputs $\vec0$ with probability $1/(1+W)$. Otherwise, it chooses an index $0 \leq k \leq \ell$, assigning to each index $i$ probability $N_i w_i/W$. If $N_k > 1$, the reduction then calls the procedure from Lemma~\ref{lem:uniformsampler} on input $\lat$, $r_k$, and $N_k$, receiving as output a vector $\vec{x} \in \lat^{\mathsf{prim}}$ that is distributed uniformly over $\lat^{\mathsf{prim}} \cap r_k B_2^n$, up to a factor of $\gamma^{\pm 1/10}$. If $N_k = 1$, the reduction simply sets $\vec{x} = \problem{SVP}(\lat)$. Finally, it \full{uses the procedure from Lemma~\ref{lem:sampleZ} to sample}{samples} an integer $z$ from $D_{\Z \setminus \{ 0 \}, 1/\length{\vec{x}}}$ and returns $\bar{\vec{x}} := z \cdot  \vec{x}$. \full{}{(\cite{BLPRS13} shows how to sample such an integer efficiently.)}

\full{First, we note that the reduction runs in expected polynomial time. In particular, the $N_i$ have polynomial bit length by Corollary~\ref{cor:ballcountingbitlength}, and the various subprocedures have expected running times that are polynomial in the length of their input.

We now prove correctness.}{It is clear that the reduction runs in polynomial time.} Let $\lat^\dagger$  be the set of all points that are integer multiples of a lattice vector whose length is at most $r_\ell > \sqrt{n f(n)} $. By Lemma~\ref{lem:banaszczyk}, it suffices to consider the distribution $D_{\lat^\dagger}$, as this is within statistical distance $\eps$ of $D_{\lat}$. Then,
\begin{align*}
\rho(\lat^\dagger \setminus \{\vec0 \}) = \sum_{\vec{y} \in \lat^\dagger \setminus \{ \vec0 \}} \rho(\vec{y})
= \sum_{\vec{y} \in \lat^{\mathsf{prim}} \cap \sqrt{n} B_2^n} \rho_{1/\length{\vec{y}}}(\Z \setminus \{ 0\})
\; .
\end{align*}
A quick computation shows that for any $\vec{y}$ with $r_{i-1} \leq \length{\vec{y}} \leq r_i$, we have
\[
\rho_{1/r_i}(\Z \setminus \{ \vec0\}) \leq \rho_{1/\length{\vec{y}}}(\Z \setminus \{ 0\}) \leq \gamma^{1/10} \cdot \rho_{1/r_i}(\Z \setminus \{ 0\})
\; .
\]
Recalling the definition of the $w_i$, it follows that
\[
\sum_{i=0}^\ell  \xi(\lat, r_i) w_i \leq \rho(\lat^\dagger \setminus \{\vec0 \}) \leq \gamma^{1/10} \cdot \sum_{i=0}^\ell \xi(\lat, r_i) w_i
\; .
\]

Now, we would like to say that $N_i \approx \xi(\lat, r_i)$, as in Eq.~\eqref{eq:NiapproxXi}. This is of course true by definition \emph{except} when $N_i = 1$ and $\xi(\lat, r_i) > 1$, i.e., when $\lambda_1(\lat) < r_i/(100 n^2 f(n) \xi(\lat, r_i))$ and $\lambda_2(\lat) \leq r_i$. But, in this case, a quick computation together with Lemma~\ref{lem:notdegenerate} shows that $\xi(\lat, r_{i+1}) > 1/(100 n f(n)\lambda_1(\lat))$, and therefore $N_{j}$ satisfies Eq.~\eqref{eq:NiapproxXi} for all $j > i$. (In other words, the $N_i$ can only be \scarequotes{wrong} for at most one value of $i$.) It follows that, for any $i < \ell$, we have
\[
\gamma^{-1/5}\cdot \sum_{j \geq i} \xi(r_j, \lat_j)w_j \leq   \sum_{j \geq i} N_j w_j \leq   \sum_{j \geq i} \xi(r_j, \lat_j)w_j
\;. 
\] 
(The case $N_\ell = 1$ can be handled separately. Correctness in this case follows essentially immediately from Lemma~\ref{lem:banaszczyk}.)
Putting everything together, we have that
\[
\gamma^{-1/5} \cdot \rho(\lat^\dagger \setminus \{\vec0 \}) \leq W \leq \gamma^{1/5} \cdot \rho(\lat^\dagger \setminus \{\vec0 \})
\; .
\]
So, in particular, the probability that the reduction outputs $\vec0$ is \full{$1/(1+W)$, which is a good approximation to the correct probability of $1/\rho(\lat^\dagger)$}{$1/(1+W) \approx 1/\rho(\lat^\dagger)$, as needed}.

Now, for any $\vec{y} \in \lat^{\mathsf{prim}}$, it follows from Lemma~\ref{lem:uniformsampler} and the argument above that 
\begin{equation}
\label{eq:prbound}
\gamma^{-1/2} \cdot \frac{\rho_{1/\length{\vec{y}}}(\Z \setminus \{0\})}{\rho(\lat^\dagger)}\leq \Pr[\vec{x} = \pm \vec{y}] \leq \gamma^{1/2} \cdot \frac{\rho_{1/\length{\vec{y}}}(\Z \setminus \{0\})}{\rho(\lat^\dagger)}
\; .
\end{equation}
Finally, for any $\vec{w} \in \lat^\dagger \setminus \{0 \}$, let $\vec{y}$ be one of the two primitive lattice vectors that are scalar multiples of $\vec{w}$, and let $\bar{z}$ such that $\vec{w} = \bar{z} \vec{y}$. Then,
\full{\begin{align*}
\Pr[\bar{\vec{x}} = \vec{w}] &= \Pr[\vec{x} = \pm \vec{y}] \cdot \Pr[z = \bar{z}\ |\ \vec{x} = \pm \vec{y}]\\
&= \Pr[\vec{x} = \pm \vec{y}] \cdot \frac{\rho(\vec{w})}{\rho_{1/\length{\vec{y}}}(\Z \setminus \{0\})}
\end{align*}}
{\[
\Pr[\bar{\vec{x}} = \vec{w}] = \Pr[\vec{x} = \pm \vec{y}] \cdot \Pr[z = \bar{z}\ |\ \vec{x} = \pm \vec{y}]= \Pr[\vec{x} = \pm \vec{y}] \cdot \frac{\rho(\vec{w})}{\rho_{1/\length{\vec{y}}}(\Z \setminus \{0\})}
\]}
The result follows from plugging the above equation into Eq.~\eqref{eq:prbound}.
\end{proof}
\section{Sampling from other distributions}
\label{sec:other}
We note that our reductions from Sections~\ref{sec:DGStoCVP} and~\ref{sec:DGStoSVP} do not use any unique properties of the discrete Gaussian distribution or of the $\ell_2$ norm. Above, we focused on this particular case because it has so many applications, while other distributions on lattices seem to be of much less interest. In this section, we show that a much larger class of sampling problems can be reduced to CVP in various different norms. 

First, we show that the sparsification result in Proposition~\ref{prop:shiftedsparsifier} naturally extends to arbitrary norms $K$. In particular, for any norm $K$, we can use a CVP oracle in norm $K$ to sample (nearly) uniformly from the lattice points in a $K$-ball. (See below for the definitions.) We can naturally extend this to any distribution that can be efficiently written as the weighted average of uniform distributions over the lattice points in $K$-balls. For example, this will be enough to show how to use a CVP oracle in the $\ell_q$ norm to sample from the natural $\ell_q$ generalization of the discrete Gaussian, which assigns to $\vec{x} \in \lat - \vec{t}$ probability proportional $e^{- \|\vec{x} \|_q^q}$, where $ \|\vec{x} \|_q := (\sum |x_i|^q)^{1/q}$ for $1 \leq q < \infty$ is the $\ell_q$ norm. 

Below, we make this precise. For simplicity, we will not worry about the more difficult analogous problem of reducing sampling from centered distributions to SVP.

\subsection{Arbitrary distributions and norms}

Recall that any norm $\|\cdot \|_K$ over $\R^n$ is uniquely represented by a compact symmetric convex body with non-empty interior $K \subset \R^n$, its unit ball. The norm itself is then simply
\[
\| \vec{x} \|_{K} := \min \{ r \ : \ \vec{x} \in r K \}
\; .
\]
(Since we are interested in asymptotics, we formally identify $K := (K_1,K_2,\ldots )$ with a sequence of such bodies with $K_n \subset \R^n$, but we will ignore such details.) A $K$-ball with center $\vec{c}$ and radius $r$ is $rK + \vec{c}$, the set of all points within distance $r$ of $\vec{c}$ in the norm $\|\cdot \|_K$.

We define the general problem that interests us below, together with the natural generalization of CVP to arbitrary norms.

\begin{definition}
For any $\gamma \geq 1$, $\eps > 0$, and function $\chi$ mapping a shifted lattice $\lat - \vec{t}$ to a distribution over $\lat - \vec{t}$, the sampling problem $(\gamma, \eps)\text{-}\problem{LSP}_{\chi}$ (the Lattice Sampling Problem) is defined as follows: The input is (a basis of) a lattice $\lat \subset \Q^n$ and a shift $\vec{t} \in \Q^n$. The goal is to output a vector whose distribution is $(\gamma, \eps)$-close to $\chi(\lat - \vec{t})$.
\end{definition}

\begin{definition}
For any norm $\|\cdot \|_K$, the search problem $\problem{CVP}_K$ (the Closest Vector Problem in norm $K$) is defined as follows: The input is (a basis of) a lattice $\lat \subset \Q^n$ and a target vector $\vec{t} \in \Q^n$. The goal is to output a lattice vector $\vec{x}$ such that $\length{\vec{x} - \vec{t}}_K$ is minimal.
\end{definition}

\subsection{Sparsify, shift, count, and sample}

We now observe that Proposition~\ref{prop:shiftedsparsifier} generalizes to arbitrary norms. (One can simply check that the proof of Proposition~\ref{prop:shiftedsparsifier} does not use any special properties of the $\ell_2$ norm.)

\begin{proposition}
\label{prop:shiftedsparsifier-ell_q}
There is a polynomial-time algorithm that takes as input a basis $\basis$ for a lattice $\lat \subset \R^n$  and a prime $p$
and outputs a sublattice $\lat' \subseteq \lat$ and shift $\vec{w} \in \lat$ such that, for any norm $\| \cdot \|_K$, $\vec{t} \in \R^n$, $\vec{x} \in \lat$ with $N:= |(\lat - \vec{t}) \cap \length{\vec{x} - \vec{t}} \cdot K| < p$, and any $\problem{CVP}_K$ oracle,
\[\frac{1}{p} - \frac{N}{p^2} - \frac{N}{p^{n-1}}  \leq \Pr[\problem{CVP}_K(\vec{t} + \vec{w}, \lat') = \vec{x} + \vec{w}] \leq 
 \frac{1}{p} + \frac{1}{p^n}
\; .
\]
\end{proposition}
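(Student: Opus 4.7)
The plan is to use the exact same algorithm as in Proposition~\ref{prop:shiftedsparsifier}: on input $\basis$ and $p$, sample $\vec{z}, \vec{c} \in \Z_p^n$ uniformly and independently, and output the sublattice
\[
\lat' := \{ \vec{x} \in \lat \ :\ \inner{\vec{z}, \basis^{-1}\vec{x}} \equiv 0 \bmod p \}
\]
together with the shift $\vec{w} := \basis \vec{c}$. The algorithm itself makes no reference to a norm, so efficiency follows immediately from Claim~\ref{clm:efficientsparse}. The task is purely one of analysis: we must show that the probability bounds go through with $B_2^n$ replaced by $K$.

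Next, let $\vec{y}_1, \ldots, \vec{y}_{M} \in \lat$ enumerate the lattice points other than $\vec{x}$ lying in $\length{\vec{x} - \vec{t}}_K \cdot K + \vec{t}$, so $M \leq N - 1 < p - 1$. Following the outline of the $\ell_2$ proof, $\problem{CVP}_K(\lat', \vec{t} + \vec{w})$ returns $\vec{x} + \vec{w}$ whenever $\inner{\vec{z}, \basis^{-1}\vec{x} + \vec{c}} \equiv 0$ and $\inner{\vec{z}, \basis^{-1}\vec{y}_i + \vec{c}} \not\equiv 0 \bmod p$ for each $i$. To invoke Theorem~\ref{thm:shiftedsparsification}, I need to verify that $\basis^{-1}\vec{y}_i \not\equiv \basis^{-1}\vec{x} \pmod{p}$ for all $i$. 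This is the only step in the original proof that uses geometry of the norm.

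The verification is the norm-specific heart of the argument, and it follows from convexity of $K$ alone. Suppose for contradiction that $\vec{y} := (\vec{y}_i - \vec{x})/p$ lies in $\lat \setminus \{\vec0\}$. Then the $p+1$ distinct lattice points $\vec{x} + k \vec{y}$ for $k = 0, 1, \ldots, p$ lie on the segment from $\vec{x}$ to $\vec{y}_i$. Since $K$ is convex and its corresponding closed ball $\length{\vec{x}-\vec{t}}_K \cdot K + \vec{t}$ contains both endpoints, it contains the entire segment, hence all $p+1$ of these lattice points. But this contradicts $|(\lat - \vec{t}) \cap \length{\vec{x} - \vec{t}}_K \cdot K| = N < p$. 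Note that this is exactly the same geometric fact used in the $\ell_2$ case (the intersection of a line with a ball is a segment), so no additional properties of $K$ beyond its being a convex body—i.e., beyond its being the unit ball of a norm—are needed.

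With this established, Theorem~\ref{thm:shiftedsparsification} (which is entirely a linear-algebra statement over $\Z_p$ and is oblivious to the norm) yields
\[
\frac{1}{p} - \frac{M}{p^2} - \frac{M}{p^{n-1}} \leq \Pr\bigl[\problem{CVP}_K(\vec{t}+\vec{w}, \lat') = \vec{x}+\vec{w}\bigr] \leq \frac{1}{p} + \frac{1}{p^n},
\]
and since $M \leq N$ the claimed bounds follow. In summary there is essentially no obstacle here: the original argument was already norm-agnostic apart from the single convexity step above, and that step uses nothing beyond the defining property of a norm ball.
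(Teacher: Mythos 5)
Your proof is correct and is essentially the argument the paper has in mind: the paper states only that ``one can simply check that the proof of Proposition~\ref{prop:shiftedsparsifier} does not use any special properties of the $\ell_2$ norm,'' and you have carried out exactly that check, correctly isolating the one geometry-dependent step (that the $p+1$ collinear lattice points lie in the ball) and observing that it follows from convexity of $K$ alone.
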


And, from this, we obtain a generalization of Lemma~\ref{lem:shifteduniformsampler} and Theorem~\ref{thm:counter}.

\begin{definition}
For any parameter $\gamma \geq 1$ and norm $\|\cdot\|_K$, $\gamma\text{-GapVCP}_K$ (the Vector Counting Problem in norm $K$) is the promise problem defined as follows: the input is (a basis of) a lattice $\lat \subset \Q^n$, shift $\vec{t} \in \Q^n$, radius $r > 0$, and an integer $N \geq 1$. It is a NO instance if 
$|(\lat - \vec{t}) \cap r K| \leq N$ and a YES instance if $|(\lat - \vec{t}) \cap rK| > \gamma N$.
\end{definition}

\begin{theorem}
\label{thm:counter_q}
For any efficiently computable norm $\| \cdot \|_K$ and efficiently computable function $f(n)$ with $2\leq f(n) \leq \poly(n)$, there is a polynomial-time reduction from $\gamma\text{-GapVCP}_K$ to $\problem{CVP}_K$, where $\gamma := 1+1/f(n)$. Furthermore, there is an (expected) polynomial-time reduction from $(\gamma, 0)\text{-}\problem{LSP}_{\chi}$ to $\problem{CVP}_K$, where $\chi(\lat - \vec{t})$ is the uniform distribution on $(\lat- \vec{t}) \cap K$ (or $\chi$ is constant on $-\vec{t}$ if $(\lat - \vec{t}) \cap K$ is empty). Both reductions preserve dimension and only make calls to the $\problem{CVP}_K$ oracle on sublattices of the input lattice.
\end{theorem}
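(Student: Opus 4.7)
The plan is to mimic the proofs of Theorem~\ref{thm:counter} and Lemma~\ref{lem:shifteduniformsampler} essentially verbatim, replacing every occurrence of the $\ell_2$ ball $rB_2^n$ by the $K$-ball $rK$, every $\problem{CVP}$ call by a $\problem{CVP}_K$ call, and every invocation of Proposition~\ref{prop:shiftedsparsifier} by Proposition~\ref{prop:shiftedsparsifier-ell_q}. The crucial point is that the $\ell_2$ proofs only used the geometry of the ball through (i) the quantitative output of the sparsification proposition and (ii) Chernoff--Hoeffding; both are norm-oblivious.

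For the counting reduction, on input $(\lat, \vec{t}, r, N)$ I would pick a prime $p$ with $p = \Theta(f(n) N)$, repeat $\ell = \lceil 100 f(n)^2 p^2/N^2 \rceil$ independent sparsification trials via Proposition~\ref{prop:shiftedsparsifier-ell_q} obtaining $(\lat_i, \vec{w}_i)$, query $\problem{CVP}_K(\lat_i, \vec{t} + \vec{w}_i)$ to get a lattice vector at $K$-distance $r_i$, and accept iff $|\{i : r_i \leq r\}|$ exceeds the threshold $\ell N/p + 2\sqrt{\ell}$. The upper and lower bounds on $\Pr[r_i \leq r]$ coming from Proposition~\ref{prop:shiftedsparsifier-ell_q} differ by more than $5/\sqrt{\ell}$ in the YES vs.\ NO cases (by the same computation as in Theorem~\ref{thm:counter}, using $\gamma = 1 + 1/f(n)$), so a Chernoff bound separates the cases with constant success probability, which I would amplify by standard repetition.

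For the sampling reduction, on input $(\lat, \vec{t})$ I would first invoke the counting reduction just established (with $r=1$ and a binary search over $N$) to compute an integer $\tilde N$ with $\gamma^{-1/10} \cdot |(\lat - \vec{t}) \cap K| \leq \tilde N \leq |(\lat - \vec{t}) \cap K|$; if $(\lat - \vec{t}) \cap K$ is empty the counting procedure detects this and the algorithm outputs $-\vec{t}$ trivially (as specified by $\chi$). Otherwise I would pick $p = \Theta(f(n) \tilde N)$, sparsify via Proposition~\ref{prop:shiftedsparsifier-ell_q}, query the $\problem{CVP}_K$ oracle on the shifted sublattice, and output the returned vector (translated back by $-\vec{w}$) whenever it lies in $K + \vec{t}$; otherwise I would repeat. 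Proposition~\ref{prop:shiftedsparsifier-ell_q} guarantees that each particular $\vec{x} \in (\lat - \vec{t}) \cap K$ is returned on a single trial with probability within a factor $\gamma^{\pm 1/2}$ of $1/p$, so the conditional output distribution is $(\gamma, 0)$-close to uniform; and since each trial succeeds with probability $\Omega(\tilde N/p) = \Omega(1/f(n))$, the expected number of trials is polynomial.

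I expect no serious obstacle. The only detail that requires minor care is the analogue of Corollary~\ref{cor:ballcountingbitlength}: we need a $\poly(n, m)$ bound on $\log|(\lat - \vec{t}) \cap K|$ to ensure that $\tilde N$, $p$, and $\ell$ have polynomially bounded bit length. For any efficiently computable norm one has $cB_2^n \subseteq K \subseteq CB_2^n$ for some $c, C > 0$ computable from the description of $K$, so the $\ell_2$ bound of Corollary~\ref{cor:ballcountingbitlength} transfers at the cost of a polynomial factor in $C/c$, which is benign. All sparsified lattices are full-rank sublattices of $\lat$ by Proposition~\ref{prop:shiftedsparsifier-ell_q}, so the dimension-preserving and sublattice-only properties are inherited automatically.
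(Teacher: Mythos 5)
Your proposal is correct and takes essentially the same approach as the paper: the paper states Theorem~\ref{thm:counter_q} without proof, treating it as an immediate consequence of Proposition~\ref{prop:shiftedsparsifier-ell_q} (the norm-$K$ analogue of the sparsification proposition) combined verbatim with the arguments of Theorem~\ref{thm:counter} and Lemma~\ref{lem:shifteduniformsampler}, replacing $r B_2^n$ by $rK$ throughout. Your added remarks about bit-length bounds via $cB_2^n \subseteq K \subseteq CB_2^n$ and the need to first estimate $|(\lat - \vec{t}) \cap K|$ with the counting reduction before invoking the uniform sampler correctly fill in the details the paper leaves implicit.
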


\subsection{Sufficiently \scarequotes{nice} distributions and the sampler}

Recall that the sampling algorithm from Theorem~\ref{thm:DGStoCVP} works by computing a finite sequence of balls $B_0,\ldots, B_\ell$ such that the discrete Gaussian distribution is $(\gamma, \eps)$-close to a weighted average of the uniform distributions over these balls. This motivates the following definition and theorem. 

\begin{definition}
\label{def:ball-decomposable}
For a norm $K$, $\gamma = \gamma(n) \geq 1$, and $\eps = \eps(n) > 0$, we say that a function $\chi$ that maps a shifted lattice $\lat - \vec{t}$ to a distribution over $\lat - \vec{t}$ is $(\gamma, \eps, K)$-ball decomposable if it is $(\gamma, \eps)$-close to a weighted average of uniform distributions over the lattice points inside $K$-balls, and these balls and weightings can be computed efficiently with access to a $\problem{CVP}_K$ oracle.
\end{definition}

\begin{theorem}
For any efficiently computable norm $K$, $\gamma = \gamma(n) \geq 1$, and $\eps = \eps(n) > 0$, if $\chi$ is $(\gamma, \eps, K)$-ball decomposable, then for any efficiently computable function $2 \leq f(n) \leq \poly(n)$, there is a polynomial-time reduction from $(\gamma', \eps)\text{-}\problem{LSP}_{\chi}$ to $\problem{CVP}_K$, where $\gamma' := (1+1/f(n)) \gamma $. The reduction preserves dimension and only calls its oracle on sublattices of the input lattice.
\end{theorem}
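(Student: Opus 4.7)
The plan is to adapt the structure of the proof of Theorem~\ref{thm:DGStoCVP} almost verbatim, replacing the specific decomposition of the discrete Gaussian into Euclidean spherical shells with the abstract decomposition guaranteed by the ball-decomposability hypothesis. Since $\chi(\lat - \vec{t})$ is $(\gamma, \eps)$-close to some distribution $\chi'$ that is a weighted average of uniform distributions $U_i$ over $(\lat - \vec{t}) \cap B_i$ for a sequence of $K$-balls $B_0, \ldots, B_\ell$ with weights $w_0, \ldots, w_\ell$ summing to one, and since both the balls and the weights can be produced efficiently using the $\problem{CVP}_K$ oracle (by Definition~\ref{def:ball-decomposable}), it will suffice to produce a sample whose distribution is $((1+1/f(n)), 0)$-close to $\chi'$; composing the two multiplicative deviations then gives $(\gamma', \eps)$-closeness to $\chi$.

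To sample from (a distribution close to) $\chi'$, I would proceed in two stages exactly as in Theorem~\ref{thm:DGStoCVP}. First, for each $i$, use Theorem~\ref{thm:counter_q} with accuracy parameter some $g(n) = \poly(n)$ chosen large enough (say $g(n) \geq 10 f(n)(\ell+1)$) to obtain a quantity $N_i$ satisfying $(1+1/g(n))^{-1} \cdot |(\lat - \vec{t}) \cap B_i| \leq N_i \leq |(\lat - \vec{t}) \cap B_i|$ (this uses the $\problem{CVP}_K$ oracle only on sublattices of $\lat$). Then form the discrete distribution on $\{0, \ldots, \ell\}$ that assigns index $i$ probability $N_i w_i / W$ with $W := \sum_j N_j w_j$, sample an index $k$, and finally invoke the sampling half of Theorem~\ref{thm:counter_q} to draw a vector $\vec{y}$ whose distribution is $((1+1/g(n)), 0)$-close to the uniform distribution $U_k$ on $(\lat - \vec{t}) \cap B_k$.

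The correctness calculation is a direct copy of the one in Theorem~\ref{thm:DGStoCVP}. For any $\vec{x}$ in the support of $\chi'$,
\[
\Pr[\vec{y} = \vec{x}] \;=\; \frac{1}{W} \sum_{i \,:\, \vec{x} \in B_i + \vec{t}} w_i N_i \cdot \Pr[\vec{y} = \vec{x} \mid k = i],
\]
and for each such $i$ the conditional probability lies in $[(1+1/g(n))^{-1}/N_i,\, (1+1/g(n))/N_i]$ while $N_i$ itself is within a $(1+1/g(n))$ factor of $|(\lat - \vec{t}) \cap B_i|$, so each summand $w_i N_i \Pr[\vec{y} = \vec{x} \mid k=i]$ is within a $(1+1/g(n))^2$ factor of $w_i \cdot \chi'(\vec{x} \mid k = i) \cdot |(\lat - \vec{t}) \cap B_i|$. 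Summing and normalizing, the output distribution is $((1+1/g(n))^{O(1)}, 0)$-close to $\chi'$, and choosing $g(n)$ a suitable polynomial factor larger than $f(n)$ absorbs the constant in the exponent into the target accuracy $1+1/f(n)$. Combining with the $(\gamma, \eps)$-closeness of $\chi'$ to $\chi$ yields $(\gamma', \eps)$-closeness.

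Everything else is bookkeeping: efficiency of the reduction follows because $\ell$, the bit-lengths of the $w_i$, and the counts $N_i$ are all polynomial in the input size (the first two by hypothesis on $\chi$, the last by the norm-$K$ analogue of Corollary~\ref{cor:ballcountingbitlength}, which holds because $K$ is efficiently computable and hence contained in a box of bounded bit-complexity). Dimension-preservation and the fact that the $\problem{CVP}_K$ oracle is only called on sublattices of $\lat$ are inherited from Theorem~\ref{thm:counter_q}. The only mild obstacle is choosing the accuracy parameter $g(n)$ carefully so that the multiplicative slack from the two uses of Theorem~\ref{thm:counter_q} (counting and uniform sampling), together with the number of terms in the average, still yields at most $1+1/f(n)$ in the end; this is routine since $(1+1/\poly(n))^{O(1)} = 1+1/\poly(n)$.
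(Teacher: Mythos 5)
Your reduction has a substantive bug in the index-selection step. You correctly set up the hypothesis as $\chi' = \sum_i w_i U_i$ with the $w_i$ being \emph{mixture weights} summing to one (this matches Definition~\ref{def:ball-decomposable}, as one can confirm from the proof of the $\ell_q$ corollary, where the decomposition procedure outputs $w_i := N_i \hat w_i$ with $\hat w_i$ a per-point weight). To sample from such a mixture one simply selects $i$ with probability $w_i$ and then samples from $U_i$. But you instead compute fresh estimates $N_i \approx |(\lat - \vec{t}) \cap B_i|$ and select $i$ with probability proportional to $N_i w_i$. After drawing uniformly from $(\lat-\vec{t}) \cap B_k$, your output assigns $\vec{x}$ probability roughly proportional to $\sum_{i : \vec{x} \in B_i + \vec{t}} w_i$, whereas the target $\chi'(\vec{x})$ is $\sum_{i : \vec{x} \in B_i + \vec{t}} w_i / |(\lat-\vec{t}) \cap B_i|$. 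These two are not proportional unless all the balls happen to contain the same number of lattice points, so the reduction is not correct. Your own correctness calculation papers over this: the quantity $w_i \cdot \chi'(\vec{x} \mid k=i) \cdot |(\lat - \vec{t}) \cap B_i|$ equals $w_i$, not $w_i \cdot \chi'(\vec{x} \mid k = i)$, so the sum is not (proportional to) $\chi'(\vec{x})$.

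The mistake appears to come from transplanting the index-selection rule from the proof of Theorem~\ref{thm:DGStoCVP}, where the $w_i$ are \emph{per-point} weights (so $N_i w_i$ is the correct ball weight). In the abstract setting of this theorem the decomposition already hands you the mixture weights, and multiplying by $N_i$ again overcounts by the ball sizes. The fix is to drop the $N_i$-computation entirely from the reduction: select $i$ with probability proportional to $w_i$, then invoke the uniform-sampling half of Theorem~\ref{thm:counter_q} on $(\lat,\vec{t},B_i)$. This is exactly the paper's proof, and it is shorter than what you wrote; the only loss terms are the $\gamma$ slack from the decomposition hypothesis, the $(1+1/\poly(n))$ slack from the approximate uniform sampler, and the additive $\eps$, which compose to $(\gamma', \eps)$ as required.
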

\begin{proof}
On input $\lat \subset \Q^n$ and $\vec{t} \in \Q^n$, the reduction first calls the procedure guaranteed by Definition~\ref{def:ball-decomposable} to obtain a sequence of $K$-balls $B_0,\ldots, B_\ell$ and weights $w_0,\ldots, w_\ell$. It then selects an index $i$ with probability $w_i$. Finally, it uses the sampling procedure from Theorem~\ref{thm:counter_q} to sample a vector that is $(\gamma^{1/10}, 0)$-close to uniform over $|(\lat - \vec{t} ) \cap B_i|$ and outputs the result.

It is clear that the reduction runs in polynomial time. Correctness follows from the correctness of the various subprocedures and some simple calculations.
\end{proof}

\begin{corollary}
For any efficiently computable function $2 \leq f(n) \leq \poly(n)$ and constant $1 \leq q < \infty$, there is an efficient reduction from $(\gamma, \eps)\text{-}\problem{LSP}_{\chi_q}$ to $\problem{CVP}_{\ell_q}$, where $\gamma := 1+1/f(n)$, $\eps := e^{-f(n)}$, and $\chi_q(\lat - \vec{t})$ is the distribution that assigns to each $\vec{x} \in \lat - \vec{t}$ probability proportional to $e^{-\|\vec{x}\|_q^q}$.
\end{corollary}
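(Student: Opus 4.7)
By the preceding theorem, it suffices to show that $\chi_q$ is $(\gamma_0, \eps, \ell_q)$-ball decomposable for some $\gamma_0 = 1 + 1/(Cf(n))$ with $C \geq 3$, since then invoking the theorem with $f'(n) := Cf(n)$ produces a reduction with closeness factor at most $(1+1/f'(n))\gamma_0 \leq 1 + 1/f(n) = \gamma$.

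I would mirror the construction in Theorem~\ref{thm:DGStoCVP}. First use the $\problem{CVP}_{\ell_q}$ oracle to compute $d := \dist_{\ell_q}(\vec{t}, \lat)$. Choose radii $r_i := (d^q + i/(10 f(n)))^{1/q}$ for $i = 0, \ldots, \ell$, define balls $B_i := r_i K$ (where $K$ is the unit $\ell_q$ ball) intersected with $\lat - \vec{t}$, and assign weights $w_i := e^{-r_i^q} - e^{-r_{i+1}^q}$ for $i < \ell$ together with $w_\ell := e^{-r_\ell^q}$. For any $\vec{x} \in \lat - \vec{t}$ lying in the shell $r_{j-1} < \|\vec{x}\|_q \leq r_j$, the probability that the mixture assigns to $\vec{x}$ is proportional to $\sum_{i \geq j} w_i = e^{-r_j^q}$. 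Since $r_j^q - \|\vec{x}\|_q^q \leq 1/(10 f(n))$, this agrees with $e^{-\|\vec{x}\|_q^q}$ up to a factor of $1 + O(1/f(n))$, so the mixture is $(1 + O(1/f(n)), 0)$-close to $\chi_q$ conditioned on lying in $r_\ell K + \vec{t}$.

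The remaining step is to choose $r_\ell$ polynomially large so that the truncation of $\chi_q$ outside $r_\ell K + \vec{t}$ has total variation at most $\eps$. I would lower-bound $\rho_q(\lat - \vec{t}) \geq e^{-d^q}$ (from the single term contributed by the closest lattice point) and upper-bound the tail mass by combining the super-polynomial decay of $e^{-r^q}$ with a crude counting bound $|(\lat - \vec{t}) \cap rK| \leq 1 + (C r/\lambda_1^{(\ell_q)}(\lat))^n$. The quantity $\lambda_1^{(\ell_q)}(\lat)$ is lower-bounded via the elementary inequality $\|\vec{x}\|_q \geq n^{-|1/q - 1/2|} \|\vec{x}\|_2$ together with the bit-length bound of Lemma~\ref{lem:lambda1bitlength}. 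A geometric-series estimate then shows that $r_\ell^q = d^q + \poly(n, m, f(n))$ suffices, which keeps $\ell = 10 f(n)(r_\ell^q - d^q)$ polynomial and hence makes the entire decomposition efficiently computable.

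The main obstacle is the tail bound for general $\ell_q$: no analogue of the $\ell_2$ Banaszczyk-type estimate of Corollary~\ref{cor:shiftedbanaszczyk} (which was used in Theorem~\ref{thm:DGStoCVP}) is available in the paper. Fortunately, because $e^{-r^q}$ decays super-polynomially for any fixed $q \geq 1$, we only need the tail to drop below $\eps$ at some polynomially large radius rather than at a tight radius of order $n^{1/q}$, so the coarse counting argument sketched above is sufficient. Once ball decomposability is verified, the remaining ingredients --- selecting an index $k$ with probability $w_k/W$ and invoking the uniform sampler from Theorem~\ref{thm:counter_q} on $B_k$ --- go through verbatim, and the preceding theorem delivers the desired reduction.
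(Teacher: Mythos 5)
Your radii $r_i$, the shell weights $\hat{w}_i := e^{-r_i^q}-e^{-r_{i+1}^q}$, the truncation radius $r_\ell$, and the coarse $\ell_q$ counting argument for the tail are all in the spirit of what the paper does (it leaves the tail as a ``simple calculation''). But there is a genuine gap in how you assign weights to the balls. You claim that with weights $\hat{w}_i$, the mixture assigns to a point $\vec{x}$ in the shell $r_{j-1}< \|\vec{x}\|_q\le r_j$ a probability proportional to $\sum_{i\ge j}\hat{w}_i = e^{-r_j^q}$. That telescoping identity is false for the mixture you describe: the uniform distribution over $(\lat-\vec{t})\cap B_i$ places mass $1/|(\lat-\vec{t})\cap B_i|$ on each of its points, and that denominator \emph{grows} with $i$, so the actual mixture probability is $\sum_{i\ge j}\hat{w}_i / |(\lat-\vec{t})\cap B_i|$, which does not telescope and is not proportional to $e^{-r_j^q}$. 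Concretely, if $B_0$ contains a single point, the term $\hat{w}_0$ contributes its entire mass to that one point, grossly over-weighting it relative to points that only appear in larger balls.

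The fix — and this is exactly what the paper does — is to set the ball weights to $w_i := N_i\,\hat{w}_i$, where $N_i$ is a $(1+1/f(n))^{1/10}$-approximation to $|(\lat-\vec{t})\cap B_i|$ produced by the \emph{counting} half of Theorem~\ref{thm:counter_q} (the $\gamma$-$\mathrm{GapVCP}_K$ reduction), so that $w_i / |(\lat-\vec{t})\cap B_i| \approx \hat{w}_i$ and the sum does telescope up to a $\gamma^{\pm O(1)}$ factor. You only invoke the uniform-sampling half of Theorem~\ref{thm:counter_q}, never its counting half, so this essential ingredient is missing from your decomposition. Once you insert the $N_i$ factors into the weights, the rest of your argument (the $\gamma_0$ bookkeeping via the preceding theorem, and your elementary tail estimate using norm equivalence and a packing bound in place of a Banaszczyk-type inequality) goes through.
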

\begin{proof}
It suffices to show that $\chi_q$ is $(\sqrt{\gamma}, \eps, \ell_q)$-ball decomposable, i.e., that there is an efficient algorithm with access to a $\problem{CVP}_q$ oracle that outputs balls and weights as in Definition~\ref{def:ball-decomposable}. The algorithm first computes $d := \min_{\vec{y} \in \lat} \| \vec{y} - \vec{t} \|_q$ using its $\problem{CVP}_q$ oracle. For $i = 0, \ldots, \ell := 100 n^q f(n)^{q+1}$, let $r_i := (d^q + i/(10f(n)))^{1/q}$, $\vec{c}_i := \vec0$, and $B_i := r_i K + \vec{c}_i$. Let $\hat{w}_\ell := e^{-r_\ell^q}$, and for $0 \leq i < \ell$, let $\hat{w}_i := e^{-r_i^q} - e^{-r_{i+1}^q}$. The algorithm then uses the counting procedure from Theorem~\ref{thm:counter_q} to approximate $|(\lat - \vec{t}) \cap B_i| = |(\lat - \vec{t} - \vec{c}) \cap r_iK|$ up to a factor of $\gamma^{1/10}$, receiving as output $N_i$. Finally, let $w_i := N_i \hat{w}_i$. The algorithm then simply outputs the $B_i$ and $w_i$.

A simple calculation shows that this is a valid $(\sqrt{\gamma}, \eps, \ell_q)$-ball decomposition of $\chi_q$.
\end{proof}

\full{\section{\texorpdfstring{$\sqrt{n/\log n}$}{sqrt(n/log n)}-SVP to centered DGS reduction and a lower bound}
\label{sec:SVPtoDGS}

It is an immediate consequence of Lemma~\ref{lem:banaszczyk} that $O(\sqrt{n})$-SVP reduces to DGS. In fact, we can do a bit better.\footnote{Interestingly, \cite{ADRS15} achieves nearly identical parameters in a different context with a very different algorithm. They work over the dual and only solve the decisional variant of $\gamma$-SVP. Though they are interested in exponential-time algorithms, it is easy to see that their approach yields a polynomial-time reduction from (the decisional variant of) $\gamma$-SVP to DGS for any $\gamma = \Omega(\sqrt{n/\log n})$. See~\cite[Theorem 6.5]{ADRS15}. Their reduction only requires samples above the smoothing parameter, which is in some sense the reason that they only solve the decisional variant of SVP.}

\begin{proposition}
For any efficiently computable function $10 \leq f(n) \leq \poly(n)$, there is a polynomial-time reduction from $\gamma$-SVP to $(f, \eps)$-DGS, where $\gamma := 10\sqrt{ \frac{ n}{\log f(n)}}$, and $\eps := 1/f(n)$. The reduction only calls the oracle on the input lattice.
\end{proposition}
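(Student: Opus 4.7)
My plan is to enumerate candidates $L$ for $\lambda_1(\lat)$ within a factor of $2$ of the true value, using Lemma~\ref{lem:lambda1bitlength} to restrict to polynomially many choices $L = 2^j$. For each candidate $L$, I set the Gaussian parameter $s := c L/\sqrt{\log f(n)}$ with a sufficiently small constant $c$ chosen so that, whenever $L \in [\lambda_1(\lat), 2\lambda_1(\lat)]$, we have $s\sqrt{n} \leq \gamma \lambda_1(\lat)$. The reduction then queries the $(f,\eps)$-DGS oracle on input $(\lat, \vec{0}, s)$ polynomially many times and outputs the shortest nonzero vector seen across all queries.

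Correctness rests on two ingredients. By Lemma~\ref{lem:banaszczyk} (Banaszczyk's tail bound) applied with $r = 1$, a true sample from $D_{\lat,s}$ has $\length{\vec{X}} > s\sqrt{n}$ with probability at most $(\sqrt{2\pi e}\,e^{-\pi})^n = 2^{-\Omega(n)}$, so the centered Gaussian is overwhelmingly supported on vectors of length at most $s\sqrt{n} \leq \gamma \lambda_1(\lat)$. For the non-zero mass side, Poisson summation gives $\rho_s(\lat) \geq s^n/\det(\lat)$, and combined with Minkowski's bound on $\det(\lat)$ in terms of $\lambda_1$, this yields a lower bound on $\rho_s(\lat)$ that ensures $\Pr_{D_{\lat,s}}[\vec{X} = \vec{0}] = 1/\rho_s(\lat)$ is small. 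Passing both estimates through the $(f, 1/f)$-closeness of the oracle's output, each query returns a nonzero vector of length at most $\gamma \lambda_1(\lat)$ with per-query probability at least some fixed inverse polynomial, so polynomially many repetitions find such a vector with high probability.

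The main obstacle will be to choose the constant $c$ (and possibly additional details) so that both $s\sqrt{n} \leq \gamma \lambda_1(\lat)$ and $\rho_s(\lat)$ sufficiently large to absorb the $f$-multiplicative slack of the DGS oracle hold simultaneously. For highly skewed lattices where Poisson summation only gives a weak lower bound on $\rho_s(\lat)$ (e.g., when one successive minimum is much smaller than the others), this may force a separate case analysis, perhaps handled by applying an LLL-type basis reduction to the input lattice to directly exhibit a $\gamma$-short vector from the reduced basis in that regime.
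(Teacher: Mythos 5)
Your overall framework --- enumerate candidate values for $\lambda_1(\lat)$, set $s$ proportional to $L/\sqrt{\log f(n)}$, query the $(f,\eps)$-DGS oracle repeatedly, use Lemma~\ref{lem:banaszczyk} to bound the length of the output, and pass everything through the oracle's $(f, 1/f)$-slack --- is the same as the paper's. The gap is in how you lower-bound $\rho_s(\lat)$, which you need in order to show that $\Pr_{D_{\lat,s}}[\vec{X} = \vec0] = 1/\rho_s(\lat)$ is bounded away from $1$.

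Your proposed chain (Poisson summation $\Rightarrow \rho_s(\lat) \geq s^n/\det(\lat) \Rightarrow$ Minkowski) points in the wrong direction: Minkowski's theorem gives $\lambda_1(\lat) \lesssim \sqrt{n}\det(\lat)^{1/n}$, i.e.\ a \emph{lower} bound on $\det(\lat)$ in terms of $\lambda_1$, which yields only an \emph{upper} bound on $s^n/\det(\lat)$. For a skewed lattice such as $\Z \oplus M\Z^{n-1}$ with $M$ huge, $\lambda_1(\lat) = 1$ but $\det(\lat) = M^{n-1}$ is arbitrarily large, so $s^n/\det(\lat) \ll 1$ and Poisson summation degenerates to the trivial $\rho_s(\lat) \geq 1$. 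You flag this case yourself, but the proposed fallback of running an LLL-type basis reduction gives only a $2^{\Theta(n)}$ approximation to $\lambda_1$, nowhere near the required $\gamma = O(\sqrt{n/\log n})$, and in any case would sidestep the DGS oracle that the reduction is supposed to be using.

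The correct bound is far simpler and requires no case analysis. A shortest vector $\vec{v}$ contributes, together with $-\vec{v}$, mass $2e^{-\pi\lambda_1(\lat)^2/s^2}$ to $\rho_s(\lat)$, so $\rho_s(\lat) \geq 1 + 2e^{-\pi\lambda_1(\lat)^2/s^2}$ for \emph{every} lattice. Once your enumeration produces a candidate $s$ with $s \geq 10\lambda_1(\lat)/\sqrt{\log f(n)}$, this gives $\rho_s(\lat) \geq 1 + 2 f(n)^{-\pi/100} \geq 1 + 4/f(n)$ for $f(n) \geq 10$, hence $\Pr_{D_{\lat,s}}[\vec{X}=\vec0] < \frac{1}{1+4/f(n)} < 1 - 2/f(n)$. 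Combined with Lemma~\ref{lem:banaszczyk} (which ensures $\length{\vec{X}} \leq s\sqrt{n} \leq \gamma\lambda_1(\lat)$ at the chosen scale, except with probability $2^{-n}$) and the $(f, 1/f)$-slack of the oracle, each query yields a $\gamma$-approximate shortest vector with probability $\Omega(1/f(n)^2)$, and $\poly(n)$ repetitions suffice. This is exactly the paper's argument.
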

\begin{proof}
We assume without loss of generality that $n$ is large enough so that $f(n) < 2^{n-1}$. On input $\lat \subset \Q^n$, the reduction behaves as follows. Let $d_{\min}, d_{\max} > 0$ such that $d_{\min} \leq \lambda_1(\lat) \leq d_{\max}$ such that the bit lengths of $d_{\min}$ and $d_{\max}$ are polynomially bounded. (E.g., we can take $d_{\min}$ and $d_{\max}$ to be the values guaranteed by Lemma~\ref{lem:lambda1bitlength}.) For $i = 0, \ldots, 100n^2 \ceil{\log (d_{\max}/d_{\min})}$, let 
\[
s_i := (1+1/n^2)^i \cdot \frac{d_{\min}}{\sqrt{ \log f(n)}}
\; .
\] 
The reduction calls the DGS oracle on input $\lat$ and $s_i$ for each $i$, $\ceil{100nf(n)^{2}}$ times. It then returns the shortest resulting non-zero vector.

It is clear that the reduction runs in polynomial time. Let $i$ such that $ s_{i-1} \leq 
10\sqrt{\frac{1}{\log f(n)}} \cdot 
\lambda_1(\lat) < s_i$. Note that 
\[
\Pr_{\vec{X} \sim D_{\lat, s_i}}[\vec{X} = \vec0] < \frac{1}{1+4/f(n)} < 1-2/f(n)
\; .
\]
By Lemma~\ref{lem:banaszczyk}, 
\[
\Pr_{\vec{X} \sim D_{\lat, s_i}}\big[\length{\vec{X}} > \gamma \cdot \lambda_1(\lat)\big] \leq  \Pr_{\vec{X} \sim D_{\lat, s_i}}[\length{\vec{X}} > s_i\sqrt{n}] < 2^{-n}
\; .
\]
Therefore, if the samples were truly from $D_{\lat, s_i}$, each would be a valid approximation with probability at least $2/f(n)-2^{-n}$. It follows that each sample from the DGS oracle is a valid approximation with probability at least $1/f(n)^2-2^{-n}/f(n) > 1/(2f(n)^2)$, and the result follows.
\end{proof}

We now show a lower bound on the length of non-zero discrete Gaussian vectors. In particular, for any approximation factor $\gamma = o(\sqrt{n/\log n})$, we show a lattice (technically, a family of lattices indexed by the dimension $n$) such that the probability that $D_{\lat, s}$ yields a $\gamma$-approximate shortest vector is negligible for any $s$. This shows that any efficient reduction from $\gamma$-SVP to DGS with $\gamma = o(\sqrt{n/\log n})$ must output a vector not returned by the DGS oracle and/or make DGS calls on a lattice other than the input lattice.

\begin{theorem}
\label{thm:SVPtoDGS}
For any sufficiently large $n$ and $2 < t < \sqrt{n}/10$, there exists a lattice $\lat \subset \Q^n$ with $\lambda_1(\lat) = t$ such that for any $s > 0$, 
\[
\Pr_{\vec{X} \sim D_{\lat, s}}[0 < \length{\vec{X}} \leq \sqrt{n}/10] < e^{-t^2}
\; .
\]
In particular, for any $t = \omega(\sqrt{\log n})$, $D_{\lat, s}$ will yield a $\sqrt{n}/(10t)$-approximate shortest vector with at most negligible probability.
\end{theorem}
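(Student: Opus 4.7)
My plan is to construct $\lat$ as a rational orthogonal direct sum $\lat = t\Z \oplus \lat' \subset \Q^n$, where $\lat' \subset \Q^{n-1}$ is obtained by mildly rescaling the lattice guaranteed by Lemma~\ref{lem:randomlattice}. Concretely, let $\lat'_0 \subset \Q^{n-1}$ satisfy $\lambda_1(\lat'_0) > \sqrt{n-1}/10$ and $\rho_s(\lat'_0) \geq 1 + s^{n-1}$ for every $s > 0$, and set $\lat' := \tfrac{n}{n-1}\lat'_0$, so that $\lambda_1(\lat') > \sqrt{n}/10$ (using $(n/(n-1))\sqrt{n-1} > \sqrt{n}$) and $\rho_s(\lat') \geq 1 + (s(n-1)/n)^{n-1}$. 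The key structural observation is that any vector $(kt, \vec{y}) \in \lat$ with $\vec{y} \neq \vec{0}$ has length at least $\lambda_1(\lat') > \sqrt{n}/10$, so the only non-zero lattice vectors of length at most $\sqrt{n}/10$ are $\{(kt,\vec{0}) : k \in \Z \setminus \{0\}\}$; in particular $\lambda_1(\lat) = t$ as required. By the product structure of the Gaussian,
\[
\Pr_{\vec{X} \sim D_{\lat, s}}\!\big[0 < \length{\vec{X}} \leq \sqrt{n}/10\big] \; \leq \; \frac{\rho_s(t\Z\setminus\{0\})}{\rho_s(t\Z)\,\rho_s(\lat')}.
\]

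I would then bound the right-hand side uniformly by $e^{-t^2}$ via a case split on $s$. For $s \leq n/(n-1)$, the geometric-series tail bound $\rho_s(t\Z\setminus\{0\}) \leq 3\,e^{-\pi t^2/s^2}$ (which uses $t/s \geq 1$, valid since $t > 2 > n/(n-1)$) combined with $s^2 \leq 1 + O(1/n)$ gives probability at most roughly $3\,e^{-\pi t^2(1 - O(1/n))}$, which is strictly below $e^{-t^2}$ because $t^2 > 4$ and $\pi - 1 > 0$ yield plenty of slack. For $n/(n-1) \leq s \leq t$, I would show that the combined ratio $3\,e^{-\pi t^2/s^2}/(1 + (s(n-1)/n)^{n-1})$ is monotonically decreasing, since its log-derivative has unique critical point at $s_\star \approx t\sqrt{2\pi/(n-1)} \ll 1$ (using $t < \sqrt{n}/10$), so its maximum on this interval is attained at $s = n/(n-1)$ and is subsumed by the previous case. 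Finally, for $s \geq t > 2$, the rescaling gives $\rho_s(\lat') \geq (2(n-1)/n)^{n-1} \geq 2^{n-1}/e$, so the probability is at most $e/2^{n-1}$, which is below $e^{-t^2}$ for $n$ sufficiently large since $t^2 < n/100 \ll (n-1)\ln 2$.

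The main technical obstacle lies in the intermediate range $s \in [1, t]$, where neither the numerator's exponential decay $e^{-\pi t^2/s^2}$ (which degenerates to $e^{-\pi}$ as $s \to t$) nor the denominator's growth (which only begins to dominate once $s \gtrsim n/(n-1)$) is strong enough in isolation. The resolution is the monotonicity argument sketched above, combined with the observation that $\pi - 1 > 0$ furnishes a multiplicative exponential gain $e^{(\pi - 1)t^2}$ that absorbs all $O(1)$ losses arising from the tail bound and from the mild rescaling factor $n/(n-1)$; the hypothesis $t > 2$ is precisely what makes this absorption comfortable.
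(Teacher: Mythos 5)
Your construction and argument are essentially the paper's: build $\lat = t\Z \oplus \lat'$ with $\lat'$ taken from Lemma~\ref{lem:randomlattice} so that every vector of length at most $\sqrt{n}/10$ lies on the $t\Z$ axis, then bound $\Pr[0 < \|\vec{X}\| \leq \sqrt{n}/10]$ by a ratio of Gaussian masses and split on $s$. Two refinements in your write-up are worth flagging as genuine improvements rather than mere restatements. First, the rescaling of $\lat'_0$ so that $\lambda_1(\lat') > \sqrt{n}/10$ (not just $\sqrt{n-1}/10$) closes a small mismatch between the theorem statement, which demands a bound on the probability of length at most $\sqrt{n}/10$, and the paper's own argument, which only controls length up to $\sqrt{n-1}/10$. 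Second, your explicit monotonicity analysis in the window $s \in [n/(n-1), t]$ makes rigorous a step the paper phrases loosely as "if $s \leq t$, then the numerator is less than $e^{-t^2}$"; taken literally this is false near $s = t$ (e.g.\ $\rho_1(\Z\setminus\{0\}) \approx 0.086 > e^{-t^2}$ for moderate $t$), and the statement is only true for the full ratio once the $1 + s^{n-1}$ denominator is invoked, which is exactly what your critical-point computation shows ($s_\star^2 \approx 4\pi t^2/(n-1) < 1$ because $t < \sqrt{n}/10$, so the log-ratio is decreasing on $[1, t]$). The sources of slack are the same in both proofs: the gap $(\pi-1)t^2 > 8$ absorbs the constant from the geometric-series tail bound, and for $s \geq t$ the denominator $\rho_s(\lat') \gtrsim 2^{n-1}/e$ crushes $e^{t^2}$ since $t^2 < n/100$. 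Your version is correct and, if anything, a cleaner rendering of the intended argument.
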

\begin{proof}
Fix $n$. 
Let $\lat' \subset \Q^{n-1}$ be an $(n-1)$-dimensional lattice with $\rho_s(\lat') \geq 1+s^{n-1}$ and $\lambda_1(\lat) > \sqrt{n-1}/10$, as promised by Lemma~\ref{lem:randomlattice}. Then, let $\lat := \lat' \oplus t\Z$ be the lattice obtained by \scarequotes{appending} a vector of length $t$ to $\lat'$. Note that the only vectors of length at most $\sqrt{n-1}/10$ in $\lat$ are those that are multiples of the \scarequotes{appended} vector. So, 
\[
\Pr_{\vec{X} \sim D_{\lat, s}}[0 < \length{\vec{X}} \leq \sqrt{n-1}/10] \leq \frac{\rho_s(t\Z \setminus \{ \vec0\})}{\rho_s(\lat')} \leq \frac{\rho_{s/t}(\Z \setminus \{ \vec0\})}{1+s^{n-1}}
\; .
\]
Now, if $s \leq t$, then the numerator is less than $e^{-t^2}$. If $s > t$, then we have
\[
\frac{\rho_{s/t}(\Z \setminus \{ \vec0\})}{1+s^{n-1}} < \frac{s}{1+s^{n-1}} < \frac{1}{s^{n/2}} < \frac{1}{t^{n/2}} < e^{-t^2}
\; ,
\]
where we have used the fact that $\rho_{s'}(\Z \setminus \{0\}) < s'$, and the fact that $2 < t < \sqrt{n}/10$.
\end{proof}
\section{CVP to DGS reduction}
\label{sec:CVPtoDGS}
For completeness, we give a simple reduction from CVP to DGS. It suffices to find a parameter $s$ that is small enough so that the weight of a closest vector to the target is much larger than the weight of all non-closest vectors. The only slightly non-trivial observation necessary is that we can take $s$ large enough that it still has polynomial bit length.

\begin{proposition}
\label{prop:CVPtoDGS}
For any efficiently computable function $2 \leq f(n) \leq \poly(n)$, there is a polynomial-time reduction from CVP to $(f, \eps)$-DGS where $\eps := 1-\frac{1}{f(n)}$. The reduction succeeds with probability at least $1/(2f(n)^2)$ and only makes one oracle call on $\lat - \vec{t}$ where $\lat$ is the input lattice and $\vec{t}$ is the input target.
\end{proposition}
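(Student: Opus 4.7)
The plan is to perform a single query to the DGS oracle on the shifted lattice $\lat - \vec{t}$ with a carefully chosen small parameter $s$ of polynomial bit length, receive a sample $\vec{y}$, and return $\vec{y} + \vec{t} \in \lat$. The intuition is that for sufficiently small $s$, the true discrete Gaussian $D_{\lat - \vec{t}, s}$ is sharply concentrated on the set $A$ of vectors of length $d := \dist(\vec{t}, \lat)$, which correspond under $\vec{y} \mapsto \vec{y} + \vec{t}$ precisely to the closest lattice vectors to $\vec{t}$.

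First I would quantify the concentration. Since $\lat$ and $\vec{t}$ are rational with representations of bit length at most some $m$, the squared lengths of vectors in $\lat - \vec{t}$ are rationals sharing a common denominator $D = 2^{\poly(n,m)}$; hence any non-closest vector satisfies $\length{\vec{y}}^2 \geq d^2 + 1/D$. Using $d \leq \mu(\lat) \leq n 2^m$ from Lemma~\ref{lem:lambda1bitlength} and Corollary~\ref{cor:ballcountingbitlength} to count lattice-coset vectors of polynomially bounded length, together with Lemma~\ref{lem:banaszczyk} (applied with a huge ratio $r = R/(s\sqrt{n})$) to control the tail, I would split $\rho_s((\lat - \vec{t}) \setminus A)$ into a near part bounded by $2^{\poly(n,m)} \cdot e^{-\pi(d^2 + 1/D)/s^2}$ and a doubly-exponentially small tail. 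Choosing $s^2 := 1/\poly(n, m, \log f(n))$ sufficiently small, still of polynomial bit length, then makes $\eta := \rho_s((\lat - \vec{t}) \setminus A)/\rho_s(\lat - \vec{t}) \leq 1/(8 f(n)^3)$, since the denominator is at least $e^{-\pi d^2/s^2}$ from any single closest vector.

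Next I would translate this through the $(\gamma, \eps)$-closeness definition with $\gamma = f(n)$ and $\eps = 1 - 1/f(n)$. Let $X$ denote the oracle output and let $X'$ be the intermediate distribution supported on $\lat - \vec{t}$ promised by the definition. Summing the multiplicative upper bound $\Pr[X' = \vec{y}] \leq \gamma \Pr[D_{\lat-\vec{t}, s} = \vec{y}]$ over $(\lat - \vec{t}) \setminus A$ yields $\Pr[X' \notin A] \leq \gamma \eta \leq 1/(8 f(n)^2)$. The statistical distance bound then gives $\Pr[X \in A] \geq \Pr[X' \in A] - \eps \geq 1/f(n) - 1/(8 f(n)^2) \geq 1/(2 f(n)^2)$, as required.

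The main obstacle is coordinating the exponentially small rational gap between distinct squared distances with the requirement that $s$ have polynomial bit length. Fortunately only $s = 2^{-\poly(n,m)}$ is needed, which indeed has polynomial bit length; the Banaszczyk tail bound and the vector-counting bound are then both comfortably satisfied, and all arithmetic with $s$ remains efficient since the relevant Gaussian weights are only ever compared multiplicatively.
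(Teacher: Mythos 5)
Your proposal is correct and follows essentially the same route as the paper's own proof: make a single DGS call with a parameter $s$ small enough (but of polynomial bit length) that, because the squared distances in $\lat - \vec{t}$ are rationals with a common polynomial-bit-length denominator, the discrete Gaussian concentrates on the set of closest vectors, and then unwind the $(\gamma,\eps)$-closeness definition. The only cosmetic difference is that you re-derive the concentration estimate from Lemma~\ref{lem:banaszczyk} together with the vector-counting bound of Corollary~\ref{cor:ballcountingbitlength}, whereas the paper simply invokes the ready-made Corollary~\ref{cor:shiftedbanaszczyk}, which already packages the $\rho_s(\lat)/\rho_s(\lat-\vec{t})$ prefactor via Lemma~\ref{lem:rhoLt}; your final accounting of the $\gamma$ and $\eps$ losses is also correct (and in fact yields the slightly stronger bound $1/(2f(n))$).
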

\begin{proof}
On input $\lat \subset \Q^n$ and $\vec{t} \in \Q^n$, the reduction behaves as follows. Let $q \geq 1$ with polynomial bit length such that $\lat \subseteq \Z^n/q$ and $\vec{t} \in \Z^n/q$. Let $d$ be the upper bound on $\mu(\lat)$ guaranteed by Lemma~\ref{lem:lambda1bitlength} (which in particular has polynomial bit length), and let $s := (100f(n)n q  \log (10+d))^{-1}$. The reduction simply samples $\vec{y}$ from $D_{\lat - \vec{t}, s}$ and returns $\vec{y} + \vec{t} \in \lat$.

It is clear that the reduction runs in polynomial time. Note that for any point $\vec{x} \in \lat$ that is not a closest point to $\vec{t}$, we must have $\length{\vec{x} - \vec{t}}^2 \geq \dist(\vec{t}, \lat)^2 + 1/q^2$.  By Corollary~\ref{cor:shiftedbanaszczyk}, we have
\[
\Pr_{\vec{X} \sim D_{\lat - \vec{t}, s}}[\length{\vec{X}}^2 \geq \dist(\vec{t}, \lat)^2 + 1/q^2] < e^{-1/(ns^2q^2)} < e^{-2f(n)n}
\; .
\]
Therefore, any distribution within statistical distance $\eps$ of $D_{\lat - \vec{t}, s} + \vec{t}$ must output a closest point with probability at least $1/f(n) - e^{-2f(n)n} > 1/(2f(n))$. It follows that the oracle outputs a closest point with probability at least $1/(2f(n)^2)$, as needed.
\end{proof}

\begin{corollary}
CVP is equivalent to DGS under polynomial-time, dimension-preserving reductions.
\end{corollary}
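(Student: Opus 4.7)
The plan is to observe that this corollary is an immediate combination of the two reductions already established in the excerpt. In one direction, Theorem~\ref{thm:DGStoCVP} gives a polynomial-time, dimension-preserving reduction from $(\gamma, \eps)$-DGS to CVP for any $\gamma = 1 + 1/f(n)$ and $\eps = 2^{-f(n)}$ with $f(n) = \poly(n)$. In the other direction, Proposition~\ref{prop:CVPtoDGS} gives a polynomial-time, dimension-preserving reduction from CVP to $(f, 1 - 1/f(n))$-DGS for any $f(n) = \poly(n)$. So both reductions are in place; the only thing to check is that the parameter regimes match up consistently under a uniform notion of ``DGS.''

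Concretely, I would first fix a convention for \emph{the} DGS problem (say, $(\gamma_0, \eps_0)$-DGS for some specific $\gamma_0 = 1 + 1/\poly(n)$ and negligible $\eps_0$, or any fixed parameter setting in that regime). Then, to reduce CVP to DGS, I would invoke Proposition~\ref{prop:CVPtoDGS} with some $f(n) = \poly(n)$: this gives a reduction from CVP to $(f, 1-1/f)$-DGS that succeeds with probability $1/(2f(n)^2)$, which we can amplify to an arbitrarily high constant by $O(f(n)^2)$ independent repetitions (verifying success for each candidate against the oracle value of $\dist(\vec{t}, \lat)$ is trivial given that DGS implies CVP via the other direction, but more simply we can just keep the candidate $\vec{y} + \vec{t}$ that is closest to $\vec{t}$ across all trials). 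Any $(\gamma_0, \eps_0)$-DGS oracle with $\gamma_0 \le f(n)$ and $\eps_0 \le 1 - 1/f(n)$ satisfies the hypotheses of Proposition~\ref{prop:CVPtoDGS}, so the reduction goes through.

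For the other direction, I would directly apply Theorem~\ref{thm:DGStoCVP} with $f(n) := \poly(n)$ chosen large enough that $1 + 1/f(n) \le \gamma_0$ and $2^{-f(n)} \le \eps_0$; this produces the desired polynomial-time, dimension-preserving reduction from $(\gamma_0, \eps_0)$-DGS to CVP.

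The main ``obstacle''—which is not really an obstacle but a bookkeeping point—is simply to make sure the two reductions use compatible DGS parameters, and to note that Theorem~\ref{thm:DGStoCVP} only calls its CVP oracle on full-rank sublattices of the input lattice (so dimension is preserved exactly, not merely up to an additive constant). Combining the two directions yields the claimed equivalence under polynomial-time, dimension-preserving reductions, completing the proof.
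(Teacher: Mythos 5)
Your proposal is correct and takes the same approach as the paper, which simply combines Theorem~\ref{thm:DGStoCVP} with Proposition~\ref{prop:CVPtoDGS}. The parameter-matching and amplification details you spell out are implicit in the paper's one-line proof.
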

\begin{proof}
Combine Theorem~\ref{thm:DGStoCVP} with Proposition~\ref{prop:CVPtoDGS}.
\end{proof}

}{}

\full{\section*{Acknowledgments}

I would like to thank Divesh Aggarwal, Daniel Dadush, and Oded Regev for many enlightening discussions and for their helpful comments on early drafts of this work; Daniele Micciancio for finding a bug in an earlier version of Proposition~\ref{prop:CVPtoDGS}; and the SODA reviewers for their very helpful and thorough reviews.}{}

\full{\bibliographystyle{alpha}}{\bibliographystyle{abbrv}}
\newcommand{\etalchar}[1]{$^{#1}$}

\end{document}